\documentclass[12pt,a4,english,leqno]{article}
\linespread{1.15}
\usepackage[latin9]{inputenc}
\usepackage{geometry}
\geometry{verbose,tmargin=2.2cm,bmargin=2.2cm,lmargin=2.0cm,rmargin=2.0cm}
\setlength{\parskip}{\medskipamount}
\setlength{\parindent}{0pt}

\usepackage{verbatim}
\usepackage{amsmath}
\usepackage{amsfonts}
\usepackage{amsthm}
\usepackage{mathtools}
\usepackage{graphicx}
\usepackage{longtable}
\usepackage{setspace}
\usepackage{upgreek}
\usepackage{comment}
\usepackage[round,sort,authoryear]{natbib}
\usepackage[colorlinks=true,linkcolor=blue, citecolor=blue, urlcolor = blue]{hyperref}

\usepackage{titlesec}
\usepackage{babel}
\usepackage[toc,page]{appendix}
\usepackage{booktabs,caption}

\setlength{\bibsep}{0.35pt}

\usepackage[labelfont=bf,textfont=md]{caption}
\usepackage[labelsep=space]{caption}

\usepackage{lmodern}
\usepackage{avant}
\usepackage[T1]{fontenc}

\DeclarePairedDelimiter{\floor}{\lfloor}{\rfloor}

\usepackage{bibentry}
\nobibliography*

\usepackage{fancyhdr}
\pagestyle{fancy}

\numberwithin{equation}{section}

\titleformat*{\section}{\large \bfseries}
\titleformat*{\subsection}{\normalsize \bfseries}
\titleformat*{\subsubsection}{\small \bfseries}

\lhead{Katsouris C. (2023)}
\rhead{Estimating Structural Breaks in Predictive Regression Models}

\newif\ifshow 
\showfalse 

\ifshow
  \includecomment{wrap}
\else
  \excludecomment{wrap} 
  
\fi


\theoremstyle{definition}
\newtheorem{theorem}{Theorem}
\newtheorem{definition}{Definition}
\newtheorem{assumption}{Assumption}
\newtheorem{lemma}{Lemma}
\newtheorem{example}{Example}

\newtheorem{remark}{Remark}
\newtheorem{corollary}{Corollary}

%

\begin{document}
\pagenumbering{roman}

\title{ {\Large \textbf{Break-Point Date Estimation for Nonstationary Autoregressive and Predictive Regression Models}\thanks{\textbf{Article history:} December 2020, December 2021, August 2023. I am grateful to Jose Olmo, Jean-Yves Pitarakis and Tassos Magdalinos at the Department of Economics, University of Southampton for prior helpful discussions as well as Giuseppe Cavaliere and Sebastian Kripfganz at the University of Exeter Business School. 
\\

Lecturer in Economics, Department of Economics, University of Exeter Business School, Exeter  EX4 4PU, United Kingdom. \textit{E-mail Address}: \textcolor{blue}{christiskatsouris@gmail.com}.\\ } 
}
}

\author{ \textbf{Christis Katsouris} \\ \textit{University of Southampton and University of Exeter}\\ \\ \textcolor{blue}{This version: Working Paper}}

\date{\today}

\maketitle

\begin{abstract}
\vspace*{-0.60 em}
In this article, we study the statistical properties and asymptotic behaviour of break-point estimators in nonstationary autoregressive and predictive regression models for testing the presence of a single structural break at an unknown location in the full sample. Moreover, we investigate aspects such as how the persistence properties of covariates and the location of the break-point affects the limiting distribution of the proposed break-point estimators. 
\\

\textbf{Keywords:} nonstationary processes, persistence, local-to-unit root, structural break, break-point estimator, break magnitude, single break, multiple breaks, convergence rates, asymptotic distribution. 
\\

JEL Classification: C12, C22
\end{abstract}

\newpage 

\setcounter{page}{1}
\pagenumbering{arabic}

\newpage

\section{Introduction}

Structural break inference is an important task when the robustness of econometric estimation is concerned. Although various existing methodologies in the time series econometrics literature focus on testing for the presence of parameter instability in regression coefficients, the estimation of the exact break-points and their statistical properties is more challenging, especially under the assumption of regressors nonstationarity. We develop point estimation and asymptotic theory for break-point estimators in nonstationary autoregressive and predictive regression model with nonstationary regressors. We first present the structural break test for the model coefficient of the predictive regression model based on the usual least squares estimate of the coefficient of the first-order autoregressive model as well as the predictive regression model. Second we using the IVX instrumentation which is found to be robust to the abstract degree of persistence (see, \cite{kostakis2015robust}), we construct IVX-based structural break tests\footnote{Relevant asymptotic theory analysis for the Wald-type statistics we employ in this article are presented by \cite{katsouris2023predictability,  katsouris2023testing}.} (see, \cite{katsouris2023predictability,katsouris2023structural,  katsouris2023testing}) and corresponding  break-point estimators. Overall, the accurate dating of a structural break is of paramount important especially since predictive regression models are commonly used to detect the so-called "pockets of predictability". Thus, correctly identifying periods of predictability regardless of the presence of a structural break is crucial for asset pricing and risk management purposes.

Determining the asymptotic distribution of the break-point estimator in nonstationary regression models is a challenging task. Specifically, for the case of a shift with fixed magnitude it can be shown that the limiting distribution of the change-point estimator depends on the underlying distribution of the innovation in a complicated manner (see, \cite{hinkley1970inference}). The least squares estimation of the change-point in mean and variance in a linear time series regression model is obtained by \cite{pitarakis2004least}, although the nonstationary properties of regressors are not considered. Furthermore, \cite{pitarakis2012jointly, pitarakis2014joint} develops suitable econometric frameworks for jointly testing the null hypothesis of no structural change in nonstationary times series regression models. More recently   \cite{dalla2020asymptotic} and   \cite{stark2022testing} consider relevant aspects for structural break testing and dating in regression models under dependence. The commonly used  assumption is that the break occurs at an unknown time location within the full sample, such that, $t = \floor{ \tau T }$, where $\tau \in (0,1)$ and these limits are also used when obtaining moment functionals. 

Specifically, predictive regressions with regressors generated via the local-to-unity parametrization has recently seen a growing attention in the literature (see, \cite{phillipsmagdal2009econometric}, \cite{gonzalo2012regime},  \cite{phillips2014confidence}, \cite{kostakis2015robust}, \cite{kasparis2015nonparametric}, \cite{demetrescu2020testing} and \cite{duffy2021estimation}). The main feature of these time series regression models, is that an $I(0)$ integrated dependent variable (such as stock returns) is regressed against a persistent predictor (such as the divided-price ratio) and this allows to construct predictability tests (see, also \cite{zhu2014predictive}). In statistical terms,  the process $\left( Y_t, \boldsymbol{X}_{t-1} \right)_{ t \in \mathbb{Z} }$ with a martingale difference sequence $\xi_t = \left(    u_t, v_t \right)^{\prime}$ such that $\left( Y_t, \boldsymbol{X}_{t-1}, \xi_t \right)_{ t \in \mathbb{Z} }$ is generated by a predictive regression model and $\boldsymbol{X}_{t}$ is an autoregressive process expressed using the local-to-unity parametrization, it allows us to examine the persistence properties of regressors across various regimes.  On the other hand, these frameworks operate under the assumption of a parameter constancy in the full sample.

In this paper, we study the break-point estimators for structural break tests in  predictive regression models with possible nonstationary regressors. The stability of the autoregressive processes is determined by the local-to-unity parametrization. Specifically, we focus on autoregressive processes which are close to the unit boundary but have different order of convergence, namely high persistent regressors which are $o_P \left(  n^{-1 / 2} \right)$ and mildly integrated regressors which are  $o_P \left(  n^{- \upgamma / 2} \right)$ for some $\gamma \in (0,1)$ and a positive persistence coefficient $c > 0$. These features do complicate the asymptotic theory analysis in some extend but their properties are useful for break-point estimation and dating in the aforementioned settings. For instance, conventional structural break tests for the parameters of linear regression models employ the widely used sup-Wald test proposed by \cite{andrews1993tests}. However, the distributional theory of the Andrews's test crucially depends on the strict stationarity assumption of regressors. In contrast to the  literature that focuses on structural break testing in linear regressions, the predictive regression model is usually fitted to economic datasets which contain time series that are highly persistent. Therefore, within such econometric environment the traditional law of large numbers and central limit theorems can invalidate the standard econometric assumptions of linear regression models, which affect the large sample approximations. As a result distorted inferences can occur when testing for parameter instability in predictive regressions when these features are not accommodated in the asymptotic theory of the tests. 

Our first objective is to theoretically demonstrate the impact of the presence of the nuisance parameter of persistence to the limiting distribution of test statistics and break-point estimators. Specifically, the limit result obtained by \cite{andrews1993tests} implies the use of the supremum functional on the Brownian Bridge process defined as $\mathsf{sup}_{ s \in [0,1] } \big[ W_n(s) - s W_n (1) \big]$. Under weak convergence, we have process convergence where $B$ is a Brownian motion, that is, a pivot process, hence enabling the practitioner to use standard tabulated critical values. Conveniently, \cite{katsouris2023predictability,  katsouris2023testing} show that the OLS based sup-Wald statistic in the presence of nearly integrated regressors weakly converges into the standard NBB limit, however in the case of regressors with high persistence the same limit is no longer valid. As a result, this complicates the asymptotic theory analysis of break-point estimators especially since usually these nonstationary features are not a prior known. However, before proceeding to the limit behaviour of break-point estimators we discuss an alternative estimation procedure which is found to be robust to the unknown persistence properties can be applied to a structural-break testing while producing similar asymptotic behaviour. In this direction, one can employ the IVX instrumental variable estimation approach proposed by \cite{phillipsmagdal2009econometric} and  extensively examined by \cite{kostakis2015robust} in the context of predictability tests and by \cite{gonzalo2012regime} in the context of predictability tests for threshold predictive regression models\footnote{More specifically, the limit theory of \cite{kostakis2015robust} provides a unified framework for robust inference and testing regardless the persistence properties of regressors. A simple example is the application of the IVX-Wald test for inferring the individual statistical significance of predictors under abstract degree of persistence. Further scenarios such as predictors of mixed integration order see \cite{phillips2013predictive} and \cite{phillips2016robust} in which cases the mixed normality assumption still holds.}. 

\newpage  

Thus, we study the statistical inference problem of structural break testing at an unknown break-point therefore the supremum functional is implemented and two test statistics are considered based on two different parameter estimation methods. The first estimation method considers the OLS estimator, while the second method considers an instrumental variable based estimator, namely the IVX estimator proposed by \cite{phillipsmagdal2009econometric}. These two estimators of the predictive regression model have different finite-sample and asymptotic properties, which allows us to compare the limiting distributions of the proposed tests for the two different types of persistence of the regressors. For both test statistics we assume that the regressors included in the model are permitted to be only one of the two persistence types which simplifies the asymptotic theory of the tests, however the presence of nuisance parameters under the null of parameter constancy, that is, the unknown break-fraction and the coefficient of persistence, requires careful examination of the asymptotic theory. An additional caveat is the inclusion of an intercept in the predictive regression model which induces different limiting distributions when is assumed to be stable vis-a-vis the case in which is permitted to shift. In summary, we are interested to examine the consistency and convergence rates of the break-point estimators that correspond to these two estimation methodologies for the coefficients of the predictive regression model.  

The asymptotic theory of the present paper hold due to the  invariance principle of the partial sum process of $x_t$, where $x_t = \left( 1 - \frac{c}{n^{\upgamma}} \right) x_{t - 1}$, as proposed by \cite{phillips1987time} and is considered to be the building block for related limit results when considering time series models. We denote with $\hat{ \mathcal{U} }_n(s) := \frac{1}{ \sqrt{n}  } \sum_{ t = 1 }^{ \floor{ ns } } x_{t}$, for some $s \in [0,1]$ and with $\hat{ \mathcal{U} }_{ n^{\upgamma} }(s) := \frac{1}{ T^{ \upgamma / 2 }  } \sum_{ t = 1 }^{ \floor{ n^{ \upgamma } s } } x_{t}$, for some $s \in [0,1]$ and $0 < \upgamma  < 1$ for the invariance principle of the partial sum process of $x_t$ in the case of mildly integrated processes, for the corresponding limit as proposed by \cite{phillips2005limit}. Motivated from the aforementioned seminal work, as well as the framework of \cite{phillipsmagdal2009econometric}, similarly in this paper we use these invariance principles of the partial sums processes that correspond to the instrumental variable, IVX, proposed by \cite{phillipsmagdal2009econometric}, specifically within a structural break testing framework. These results allow us to formally obtain the limiting distributions of the proposed tests with respect to the nuisance parameter of persistence along with the unknown break-point location, and observe in which cases we obtain nuisance-free inference that can simplify significantly the hypothesis testing procedure.

All random elements are defined with a  probability space, denoted by $\left( \Omega, \mathcal{F}, \mathbb{P} \right)$. All limits are taken as $n \to \infty$, where $n$ is the sample size. The symbol $"\Rightarrow"$ is used to denote the weak convergence of the associated probability measures as $n \to \infty$. The symbol $\overset{d}{\to}$ denotes convergence in distribution and $\overset{\text{plim}}{\to}$ denotes convergence in probability, within the probability space. Let $\left\{ Y_t, \boldsymbol{X}_t \right\}_{t=1}^n$, where $\boldsymbol{X}_t = \left( X_{1t},..., X_{pt}   \right)$ denote the corresponding random variables of the underline joint distribution function. The rest of the paper is organized as follows. Section \ref{Section2} presents the asymptotic theory for break-point detection in AR(1) models. Section \ref{Section3} considers the asymptotic theory for break-point detection predictive regressions. 

\newpage

\subsection{Contributions to the literature}

Our study contributes to the time series econometrics literature in several ways. Firstly, we propose a set of Wald type statistics for detecting parameter instability in predictive regression models robust to the persistence properties of regressors. We derive analytical forms of the limit distributions of these test statistics and show under which conditions these limit results correspond to the conventional NBB result providing this way a clear estimation and inference strategy to practitioners interested to implement a structural break test in predictive regression models with possibly nonstationary regressors. Therefore the significance of the proposed framework in the broader econometric literature includes the provision of detailed asymptotic results for these structural break tests as well as  necessary data transformations and functional forms which are can be implemented so that statistical inference can be simplified.  Therefore, the construction of structural break tests in predictive regression models would not have been possible without one first consider the asymptotic behaviour of estimators of the predictive regression model. 

Secondly, in a similar spirit as in \cite{saikkonen2006break}, we aim to investigate the properties of estimators of the time period where a shift has taken place. In particular, under the assumption of a possible single structural break, we point identify the shift in the model coefficients of the first order autoregressive and predictive regression model. Moreover, two alternative estimators for the break date are considered, and their asymptotic properties are derived under various assumptions regarding the local alternatives in which case the size of the shift is considered. These results have various further applications as they can  then be used to explore the implications of inference in predictive regression models after estimation of breaks. Lastly, we aim to perform a more detailed and more insightful investigation of the small-sample properties of the break date estimators and the resulting structural break tests by extending the simulation design and empirical findings of \cite{katsouris2023predictability}. Notice that the break-point date estimator denoted with $k = \uptau n$, where $\uptau \in (0,1)$, needs to be estimated using a criterion function. However,  the estimation approach of the econometric model will also consequently affect the statistical properties of the corresponding break-point estimator. Rearranging the break-point estimators gives the following expression 
\begin{align*}
\hat{\uptau} = \frac{ \hat{k} }{n}, \ \ \ \text{where} \ \ \ \left( \hat{\uptau} - \uptau \right) = o_p(1).
\end{align*}  
A relevant asymptotic theory question of interest is under which conditions the above convergence in probability holds. Roughly speaking, the main idea of the framework here is that asymptotically the break date can then be located at the true break date or within a neighbourhood of the true break date. Therefore, one will need to carefully consider the required assumptions that provide identification conditions for the break date. In addition, we will need to consider possible aspects of the structural break testing environment under regressors nonstationarity that can potentially make the break date estimator $\hat{\uptau}$ inconsistent, resulting to an incorrect estimation of the break date for the modelling environment under consideration.

\newpage

In terms of the stochastic integral approximations we consider in this paper, a relevant literature include \cite{phillips1987time, phillips1987towards, phillips1988regression} as well as \cite{kurtz1991weak} and \cite{hansen1992convergence} who present various examples of weak convergence of stochastic integrals and stochastic differential equations upon which the limit theory of this paper is based on. The related limit theory proves that the continuous time OU diffusion process given by
\begin{align}
d J (t) = c J (t) dt + \sigma dW (t), \ J(0) = b, t > 0 
\end{align}
where $c$ and $\sigma >0$ are unknown parameters and $W_t$ is the standard Wiener process, has a unique solution to $\{ J(t) \}_{ t \in [0,1]}$, such as $J(t) = \displaystyle \mathsf{exp} \left( c t  \right) b + \sigma \int_0^t \mathsf{exp} \left\{ c( t - s)  \right\} dW (s) \equiv \text{exp} \left(c t \right) + \sigma J_{c} \left( t \right)$ (see, \citep{perron1991continuous}). This representation provides a way of determining the asymptotic convergence of each of the component which the expression of the Wald type statistic can be decomposed to.  
\color{black}

\subsection{Related Literature}

When the break-point is known one can apply a Chow-type test statistic. In particular, \cite{sun2022asymptotically} investigate the asymptotic distribution of a Chow-based test in the presence of heteroscedasticity and autocorrelation. However, when the break-point is unknown, the estimation procedure for break-point date estimation is usually based on the optimization of a criterion function. Moreover, the presence of a possible single against multiple-breaks will require modifications to the formulation of the relevant hypotheses, the criterion function as well as the asymptotic theory analysis. Specifically, in the literature of structural break testing for linear regression models, \cite{bai1997estimating} propose a statistical procedure for detecting multiple breaks sequentially (one-by-one) rather than using a simultaneous estimation approach of multiple breaks. In summary while  \cite{bai1997estimating} proposed a sample-splitting method to estimate the breaks one at a time by minimizing the residual sum of squares, \cite{bai1998estimating} proposed to estimate the breaks simultaneously by minimizing the residual sum of squares. The advantages of the former method lies in its computational savings and its robustness to misspecification in the number of breaks. A number of issues arise in the presence of multiple breaks. More precisely,  the determination of the number of breaks, the estimation of the break points given the number as well as the statistical analysis of the resulting estimators. Simultaneous and sequential methods are fundamentally different methodologies that yield different break-point estimators. However, one of the drawbacks of sequential break-point algorithms is the complexity in deriving convergence rates of the estimated break-point. In particular, \cite{bai1997estimating} demonstrate that sequentially obtaining estimated break points are $n-$consistent, which corresponds to the same rate as in the case of simultaneous estimation\footnote{
In particular, for the simultaneous estimators of break, their asymptotic distributions in stationary models are symmetric, but the computational burden is heavy. The least-squares operations are of order $O( n^2 )$ even under the most efficient algorithm. In contrast, for the sequential estimators of breaks, the computational burden is light (the least-squares operations are of order $O( n )$, but the asymptotic distributions of the estimators, are asymmetric due to the misspecification of the models. Hence, additional efforts, such as repartitioning the sample, are needed in order to obtain symmetrical asymptotic distributions of the estimators.}.

\newpage 

Although these features have not been investigated in the case of predictive regression models with possible nonstatationary regressors or regressors with mixed integration order. In this article, we aim to provide some insights on some relevant cases (not all of the nonstationary regimes). Within our framework, the first stage of the procedure implies testing for the possible presence of a structural break under high persistence, that is, the regressors of the autoregressive and the predictive regression models are parametrized using the local-to-unity parameter which induces a nearly-integrated process. Furthermore, the accuracy of the estimator depends on whether the break-point estimator is bounded. Thus, the robustness of the break-point estimator can be verified by its ability to be as close as possible to the true break-point within the full sample. 
\color{black}

From the unit root and structural break  literature perspective, relevant testing methodologies include \cite{saikkonen2006break}  as well as the single-break homoscedasticity-based persistence persistence change tests proposed by \cite{harvey2006modified}. Testing for multiple break-points are discussed in the studies of \cite{lumsdaine1997multiple} and \cite{carrion2009gls}. Moreover,  \cite{kejriwal2008limit} study estimation and inference in cointegrated regression models with multiple structural changes allowing both stationary and integrated regressors; by deriving the constistency, rate of convergence and the limit distribution of the estimated break fractions. If the coefficients of the integrated regressors are allowed to change, the estimated break fractions are asymptotically dependent so that confidence itervals need to be constructed jointly. Recently, \cite{kejriwal2020bootstrap} propose bootstrap procedures for detecting multiple persistence shifts in heteroscedastic time series such as cointegrating regressions. The bootstrap procedure proposed by \cite{kejriwal2020bootstrap} for detecting multiple breaks as an example here is as below: 
\begin{itemize}

\item[Step 1.] If the null hypothesis is not rejected at the desired level of significance, stop the procedure and conclude there is no evidence of instability. Otherwise, obtain the break date estimate $\hat{\lambda}$ by minimizing the sum of squared residuals and proceed to the following step. 

\item[Step 2.] Conduct an F-test using chi-squared critical values for the equality of the coefficients across regimes on the subset of coefficients of interest allowing the others to change at the estimated breakpoint. Upon a rejection, conclude in favor of a structural change in the subvector of interest, otherwise the stability cannot be rejected. 

\end{itemize}
The asymptotic validity of the two-step procedure follows from \textit{(i)} the test in the first step is asymptotically pivotal under the null and consistent against alternatives involving a change in at least one parameter and  \textit{(ii)} the break fraction is consistently estimated as long as any of the parameters are subject to a break. In particular, the second fact ensures that the F-test in the second step converges to a chi-square distribution under the null hypothesis of no structural change in the subvector of interest. More precisely, this result follows since the estimate of the break fraction is fast enough to ensure that the limiting distribution of the parameter estimate is the same that would prevail if the break date was known. In summary, the particular framework reveals that existing partial break sup-Wald tests diverge with $n$ when the coefficients are not being tested are subject to change. Thus, \cite{kejriwal2020bootstrap}  propose a simple two-step procedure which first tests for joint parameter stability and subsequently conducts a standard chi-squared stability test on the coefficients of interest allowing the other coefficients to change at the breakpoints estimated by minimizing the sum of squared residuals in the pure structural change model. The procedure proposed by \cite{kejriwal2020bootstrap} estimates the number of breaks using a sequential test of the null hypothesis of $( \ell \geq 1 )$ breaks against the alternative of $( \ell + 1 )$ breaks. The particular approach is useful especially in cases when the null of no break could be rejected against the alternative hypothesis of at least one break. Lastly, \cite{casini2022generalized} propose a generalized laplace inference in multiple change-points models framework, although not suitable for detecting multiple-breaks in cointegrating and predictive regression models.  

Generally speaking, estimation and inference procedures of econometric models that do not account for such data-driven selection of nuisance parameters such as an unknown structural break or an unknown threshold variable can perform poorly when used for empirical studies \citep{andrews2021inference} (see, also \cite{gonzalo2012regime,gonzalo2017inferring}). In this direction, \cite{zhu2022testing} propose a framework which considers the possible presence of both a structural break and a threshold effect in predictive regression models, robust to the degree of persistence. In the self-normalization literature two relevant studies include \cite{choi2022subsample} and \cite{zhang2018unsupervised}. Then, to correct for serial dependence, it often requires a consistent estimator of the long-run variance, namely, the spectral density at zero frequency. The particular quantity involves autocovariances of all orders, and a data-driven bandwidth is usually needed for its estimator to be adaptive to the underlying dependence. Specifically, \cite{shao2010testing} uses the self-normalization approach for single change-point testing in time series. Note that, the self-normalization approach implies that instead of restoring to a consistent estimator of the long-run variance, one relies on a sequence of recursive estimators to form the normalizer and in turn pivotalize the asymptotic distribution of the test statistic. A framework for single structural break testing based on the self-normalization approach is given by \cite{ling2007testing}. 
Lastly, \cite{andrews2021inference} study methodologies for inference after estimation of structural breaks (see, also \cite{fiteni2002robust} and \cite{busetti2003variance}).

\subsection{Illustrative Examples}

We present some relevant illustrative examples to the modelling environment under consideration. 

\begin{example}
Consider the following predictive regression model 
\begin{align}
y_t = ( \alpha_1 + \boldsymbol{\beta}_1 x_{t-1} ) \boldsymbol{1} \left\{ t \leq k  \right\} +  ( \alpha_2 + \boldsymbol{\beta}_2 x_{t-1} ) \boldsymbol{1} \left\{ t > k  \right\}  + u_t  
\end{align}
\end{example}
Consider testing the joint null hypothesis\footnote{Note that we can also test predictability in the pre-break and post-break subsamples, using $H_0: \boldsymbol{\beta}_1  = 0$ and $H_0: \boldsymbol{\beta}_1  = 0$, respectively.} $H_0: \boldsymbol{\beta}_1 = \boldsymbol{\beta}_2 = \boldsymbol{0}$. Under the null hypothesis, the predictive regression model reduces to a change in mean model as below (see, \cite{katsouris2022partial}): 
\begin{align}
y_t = \alpha_1 \boldsymbol{1} \left\{ t \leq k  \right\} +  \alpha_2  \boldsymbol{1} \left\{ t > k  \right\} + u_t     
\end{align}

\newpage

\begin{assumption}
The following conditions hold: 
\begin{itemize}

\item[\textit{(i)}.] The magnitude of the level shift can be expressed as below $\left| \alpha_2 - \alpha_1 \right| =  n^{-1 / 2} \delta_n$  for a sequence $\delta_n$ which is a function of the sample size $n$ such that $\delta_n = \mathcal{O} ( n^s )$, for some $s \in (0, 1/2]$.

\item[\textit{(ii)}.] Suppose that $k = \floor{   \uptau n }$, where $\uptau \in \Pi := ( \pi_0 , 1 - \pi_0  )$ for some $\pi_0 \in (0, 1/4)$. 
    
\end{itemize}
Under the above assumption (identification assumption) the magnitude of the level shift either is independent of the sample size or shrinks to zero at a rate slower than $n^{ -1/2 }$. The break-fraction can be consistently estimated regardless of whether $x_t$ is either stationary or nearly integrated under the null hypothesis (see, also \cite{kejriwal2008limit}). 
\end{assumption}

\begin{example}[see,\cite{casini2022generalized}]
Some relevant results include: For $r \in [0,1]$, it holds
\begin{align}
\frac{1}{ \sqrt{ n_b^0 } } \sum_{t=1}^{ \floor{ r n_b^0 }  }  z_t u_t \Rightarrow \mathcal{G}_1 (r)  
\ \ \ \ 
\frac{1}{ \sqrt{ \left( n - n_b^0 \right) } } \sum_{t= n_b^0 + 1}^{ n_b^0 + \floor{ r \left( n - n_b^0 \right) }  }  z_t u_t \Rightarrow \mathcal{G}_2 (r)  
\end{align}
where $\mathcal{G}_i ( \cdot )$ is a multivariate Gaussian process on $[0,1]$ with zero mean and covariance such that $\mathbb{E} \big[ \mathcal{G}_i ( \mathsf{u} ) \mathcal{G}_i ( \mathsf{s}  )  \big] = \mathsf{min} \left\{ \mathsf{u}, \mathsf{s}   \right\} \times \Sigma_i$, where $i \in \left\{ 1,2 \right\}$.   
\begin{align} 
\Sigma_1 = \underset{ n \to \infty }{ \mathsf{lim}  } \ \mathbb{E} \left[ \frac{1}{ \sqrt{ n_b^0 } } \sum_{t=1}^{ \floor{ r n_b^0 }  }  z_t u_t  \right]^2
\ \ \ \ 
\Sigma_2 = \underset{ n \to \infty }{ \mathsf{lim}  } \ \mathbb{E} \left[  \frac{1}{ \sqrt{ \left( n - n_b^0 \right) } } \sum_{t= n_b^0 + 1}^{ n_b^0 + \floor{ r \left( n - n_b^0 \right) }  }  z_t u_t   \right]^2
\end{align}
For any $0 < r_0 < 1$, $r_0 < \uptau_0$, $\displaystyle \frac{1}{n} \sum_{ \floor{ r_0 n } + 1 }^{  \floor{  \uptau_0 n  }  }  z_t z_t^{\prime} \overset{p}{\to}  ( \uptau_0 - r_0 ) V_1$, 
and $\uptau_0 < r_0$, $\displaystyle \frac{1}{n} \sum_{ \floor{ \uptau_0 n } + 1 }^{  \floor{  r_0 n  }  }  z_t z_t^{\prime} \overset{p}{\to}  ( r_0 - \uptau_0 ) V_2$.
\end{example}
A suitable stopping rule used for the change-point detection rely either on thresholding or on the optimization
of a model selection criterion. Various methods for multiple change-point detection exist in the literature which include the dynamic programming method, to detect multiple change points in the exponential family of distributions. 
According to \cite{casini2022generalized}, for multiple break-points such that $m$ change-points, the inference framework can be constructed by denoting with $j \in \left\{ 1,..., m+1   \right\}$, where by convention $n_0^0 = 0$ and $n_{m+1}^0 = n$. In the multiple break-point setting, such that $\big( n_1^0,..., n_m^0  \big)$, there are $( m + 1)$ regimes, each corresponding 
to a distinct parameter value $\delta_j^0$, which needs to be estimated. Therefore, when the full sample has $n$ available observations, the aim is to simultaneously estimate the unknown regression coefficients together with the break points. Moreover, the asymptotic theory analysis for the multiple break-points case follow directly from the single break case. However, especially for nonstationary time series regressions the existence of multiple break points might be problematic due to possible changes in the peristence properties of regressors from one regime to the other. Neverthless, the procedure and estimation criterion proposed by \cite{casini2022generalized} can be employed to identify these multiple structural breaks in predictive regression models as well (see, \cite{barigozzi2018simultaneous}). 

\newpage

Therefore, the class of estimators for inference in multiple change-points regressions relies on a certain criterion function (see, \cite{casini2022generalized})
\begin{align}
Q_n \big( \delta ( \uptau_b ), \uptau_b \big) = \sum_{i=1}^{ m+ 1} \sum_{  t = n_i + 1 }^{ n_i }  \big( y_t - f(x_{t-1}) \big)    
\end{align}
As a result, in order to establish the large-sample properties of break-point estimators and test statistics, in a similar spirit as in \cite{casini2022generalized}, we can consider the shrinkage theoretical framework of \cite{bai1998estimating} and \cite{qu2007estimating} (see, also \cite{lavielle2000least} and \cite{nkurunziza2020inference}). On the other hand, necessary modifications of the asymptotic theory analysis is required in order to incorporate the features of the local-to-unity parametrization as well as the use of the proposed sup-Wald type statistics based on the OLS and the IVX estimation.

\section{Structural Change in First Order Autoregressive Models}
\label{Section2}

Break point estimation is an important component in change-point detection problems. In order to obtain some useful insights we begin by considering the statistical properties that correspond to structural break tests and break-point estimators for an AR(1) autoregressive model, where the local-to-unity parametrization is omitted rendering this way a stationary AR$(1)$ model under suitable parameter space restrictions. We follow the study of \cite{pang2021estimating}.

\begin{assumption}
We assume that it holds that
\begin{align}
\mathbb{E} \left(  u_t | \mathcal{F}_{t-1} \right) = 0 \ \ \  \mathbb{E} \left(  u^2_t | \mathcal{F}_{t-1} \right) = 1, \ \textit{almost surely}.  
\end{align}
\end{assumption}

\begin{remark}
Notice that the martingale difference assumption does not allow to capture conditional heteroscedasticity while it is assumed to be weaker than assuming the error sequence $\varepsilon_t$ is independent. Moreover, we are interested to derive the convergence rate of estimates under different econometric conditions. In a linear regression model a relevant framework is given by \cite{shimizu2023asymptotic} (see, also the study of \cite{self1987asymptotic} who investigate the asymptotic properties of MLE estimators and likelihood ratio tests under nonstandard conditions.      
\end{remark}

\begin{corollary}
\label{FCLT}
Given the LUR parametrization of the autoregressive equation in a predictive regression model, then the following joint convergence result holds:
\begin{align}
\left( \frac{1}{n} \sum_{t=1}^{ \floor{ nr } } x_{ t - 1 } u_{t} ,    \frac{1}{ n^2 } \sum_{t=1}^{ \floor{ nr } } x_{ t - 1 }^2 \right) \Rightarrow  \left( \int_0^r J_c (s) dB_u(s) , \int_0^r J_c (s)^2 ds     \right).  
\end{align}
\end{corollary}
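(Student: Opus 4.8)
The plan is to establish the joint weak convergence via a functional central limit theorem (FCLT) for the near-integrated process $x_t$ together with a convergence-of-stochastic-integrals argument. First I would recall the Phillips (1987) invariance principle: under the LUR parametrization $x_t = \left(1 - \frac{c}{n}\right) x_{t-1} + u_t$ (with $c$ possibly nonzero), the normalized partial-sum process satisfies $\frac{1}{\sqrt{n}} x_{\floor{nr}} \Rightarrow J_c(r)$ on $D[0,1]$, where $J_c$ is the Ornstein--Uhlenbeck process $J_c(r) = \int_0^r e^{c(r-s)} dB_u(s)$ solving $dJ_c(r) = c J_c(r) dr + dB_u(r)$, and $B_u$ is the Brownian motion arising as the weak limit of $\frac{1}{\sqrt{n}} \sum_{t=1}^{\floor{nr}} u_t$ (which exists by the martingale-difference CLT under Assumption with $\mathbb{E}(u_t^2 \mid \mathcal{F}_{t-1}) = 1$). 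This delivers the basic building block on which everything else rests.

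Second, for the quadratic-variation term, I would write $\frac{1}{n^2}\sum_{t=1}^{\floor{nr}} x_{t-1}^2 = \int_0^r \left( \frac{1}{\sqrt n} x_{\floor{ns}-1}\right)^2 ds + o_p(1)$, so that the continuous mapping theorem applied to the map $g \mapsto \int_0^{\cdot} g(s)^2 ds$ (continuous on $D[0,1]$ in the uniform, hence Skorokhod, topology) gives $\frac{1}{n^2}\sum_{t=1}^{\floor{nr}} x_{t-1}^2 \Rightarrow \int_0^r J_c(s)^2 ds$, jointly with the convergence of the partial-sum process of $x_t$ itself. Third, for the sample-covariance term $\frac{1}{n}\sum_{t=1}^{\floor{nr}} x_{t-1} u_t$, I would invoke the weak convergence theory for stochastic integrals of Kurtz--Protter / Hansen (1992) cited in the excerpt: since $\left( \frac{1}{\sqrt n} x_{\floor{n\cdot}}, \frac{1}{\sqrt n}\sum_{t=1}^{\floor{n\cdot}} u_t \right) \Rightarrow (J_c, B_u)$ jointly, and the integrand is adapted with a uniformly controlled quadratic variation (a "good sequence" in the Kurtz--Protter sense), the discrete stochastic integral converges: $\frac{1}{n}\sum_{t=1}^{\floor{nr}} x_{t-1} u_t \Rightarrow \int_0^r J_c(s)\, dB_u(s)$. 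Because all three pieces are continuous functionals (or stochastic-integral functionals) of the \emph{same} underlying pair $(x,u)$, the joint convergence follows automatically once the joint convergence of the pair is in hand.

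The main obstacle, and the step deserving the most care, is the covariance term: unlike the continuous-mapping argument for $\int_0^r J_c^2$, the map $(g,h) \mapsto \int_0^{\cdot} g\, dh$ is \emph{not} continuous on $D[0,1]^2$ in general, so one cannot simply appeal to the CMT. I would therefore either (i) verify the hypotheses of the Kurtz--Protter theorem directly---namely that $\frac{1}{\sqrt n}\sum_{t=1}^{\floor{n\cdot}} u_t$ is a semimartingale sequence with uniformly tight total variation plus quadratic variation and that the integrands converge jointly with good control---or (ii) use the classical device of writing $x_{t-1} u_t$ in terms of the increments $x_t^2 - x_{t-1}^2 = 2 x_{t-1} u_t + u_t^2 + (\text{LUR drift terms})$, summing by parts, and identifying $\frac{1}{n}\sum u_t^2 \to r$ by the LLN for martingale differences, thereby reducing the stochastic integral to already-handled quadratic terms plus a drift correction $\frac{c}{n^2}\sum x_{t-1}^2 \to c\int_0^r J_c^2$. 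Approach (ii) has the advantage of being self-contained and of making transparent the Itô-type correction; I would present it as the primary route and remark that it matches the continuous-time representation $dJ_c = c J_c\, dr + dB_u$ quoted earlier in the paper. Finally I would note that the result is stated for fixed $c$ but the same argument, with the normalization $n^\upgamma$ in place of $n$ and the mildly-integrated invariance principle of Phillips--Magdalinos (2007) replacing Phillips (1987), yields the analogous statement in the mildly-integrated regime.
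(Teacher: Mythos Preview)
The paper does not prove this Corollary explicitly; it is stated as a standard result, with the surrounding discussion citing Phillips (1987), Kurtz and Protter (1991) and Hansen (1992) as the source of the relevant stochastic-integral limit theory. Your route (i), invoking the Kurtz--Protter/Hansen framework once joint convergence of the integrand and integrator processes is established, is exactly the machinery the paper appeals to by citation, so on that front you and the paper coincide.

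There is, however, a genuine slip in your setup that makes route (ii) fail. In the predictive regression model the autoregressive innovation is $v_t$, not $u_t$: the system is $y_t=\boldsymbol{\beta}' \boldsymbol{x}_{t-1}+u_t$ together with $x_t=\rho_n x_{t-1}+v_t$, with $(u_t,v_t)'$ a (possibly correlated) martingale-difference sequence and $J_c$ driven by $B_v$, not by $B_u$. Your squaring identity $x_t^2-x_{t-1}^2=2x_{t-1}\Delta x_t+(\Delta x_t)^2$ therefore delivers $\sum x_{t-1}v_t$, not the cross term $\sum x_{t-1}u_t$ that the Corollary is about; route (ii) establishes $\int_0^r J_c\,dB_v$ but says nothing about $\int_0^r J_c\,dB_u$. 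For the latter you genuinely need either the stochastic-integral convergence theorem applied to the pair $\bigl(n^{-1/2}x_{\floor{n\cdot}},\,n^{-1/2}\sum_{t\le\floor{n\cdot}}u_t\bigr)\Rightarrow(J_c,B_u)$, obtained from the bivariate invariance principle for $n^{-1/2}\sum_{t\le\floor{nr}}(u_t,v_t)'\Rightarrow(B_u,B_v)$, or a direct martingale-transform CLT for the array $n^{-1/2}x_{t-1}u_t$. Once you correct the driving noise in your definition of $J_c$ and abandon route (ii) as the ``self-contained'' alternative in the bivariate setting, the remainder of your argument is sound and matches what the paper implicitly invokes.
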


\begin{remark}
Notice that Corollary  \ref{FCLT} gives the joint weakly convergence result for the partial sum processes based on a local-to-unity parametrization. Under the maintained hypothesis that the shift exists, that is, $\delta \neq 0$, we will need to derive corresponding moment functionals. 
\end{remark}

\newpage 

\paragraph{Motivation} \cite{katsouris2023testing, katsouris2023predictability} established the asymptotic behaviour of structural break tests in predictive regression models for the sup-Wald OLS and sup-Wald IVX statistic. 
\begin{enumerate}

\item[\textit{(i).}] the limits for both the sup Wald OLS and sup Wald IVX statistics under MI predictors for a predictive regression model with an intercept that shifts converges to the standard NBB.

\item[\textit{(ii).}] the limits for both statistics under LUR predictors for a predictive regression model with an intercept that shifts (after demeaning) converges to non-standard distributions given by Theorem 1 and Theorem 2  in \cite{katsouris2023predictability} respectively. Corresponding results: (a) when the model includes no intercept (i.e., testing only the stability of the slopes) and (b) when the model includes a stable intercept, are established.

\end{enumerate}
We also proved that under the assumption of a known break-point the limiting distributions converge to a nuisance parameter free distribution $\chi^2$ regardless of the persistence properties. However, the studies of  \cite{katsouris2023testing, katsouris2023predictability} didn't consider the asymptotic theory of break-point date estimation, which we aim to present in this article.  

\subsection{Consistency and Limiting Distributions of model coefficients}

\begin{example}
Consider the following AR(1) model without a model intercept,  with a possible shift in the AR parameter at an unknown break-point location in the full sample $k_0 = \floor{ \uptau_0 n   }$,
\begin{align}
y_t = \beta_1 y_{t-1} \boldsymbol{1} \left\{ t \leq k_0 \right\} + \beta_2 y_{t-1} \boldsymbol{1} \left\{ t > k_0 \right\} + \epsilon_t, \ \ t = 1,2,...,n, 
\end{align}
where $\boldsymbol{1} \left\{ . \right\}$ denotes the indicator function, and $\left\{ \epsilon_t, t \geq 1 \right\}$ is a sequence of \textit{i.i.d} random variables. For a given $\tau$, the ordinary least squares estimators of parameters $\beta_1$ and $\beta_2$ are
\begin{align}
\hat{\beta}_1 ( \uptau ) = \frac{ \displaystyle \sum_{t=1}^{[n\uptau] } y_t y_{t-1} }{ \displaystyle  \sum_{t=1}^{[ n\uptau ] } y^2_{t-1} } \ \ \text{and} \ \ \hat{\beta}_2 ( \uptau ) = \frac{ \displaystyle \sum_{ t = [n \uptau ]  + 1}^{ n } y_t y_{t-1} }{ \displaystyle  \sum_{t=[ n \uptau ]  + 1}^{ n } y^2_{t-1} } 
\end{align}
respectively. Then the change-point estimator satisfies $\hat{\uptau}_{n} = \underset{ \uptau \in (0,1) }{ \text{arg min} } \ \text{RSS}_{n}(\uptau)$, where
\begin{align}
\text{RSS}_{n}( \uptau ) &= \sum_{t=1}^{[n \uptau ] } \bigg( y_t - \hat{\beta}_1 ( \uptau ) y_{t-1} \bigg)^2 +  \sum_{t=[n \uptau] + 1}^{ n } \bigg( y_t - \hat{\beta}_2 ( \uptau ) y_{t-1} \bigg)^2.
\end{align}
This expression gives the estimation procedure of the model parameters under the presence of a possible structural break in the full-sample (see, \cite{chong2001structural} and \cite{pang2021estimating}). The estimation method corresponds to the least squares method proposed by \cite{bai1994least} and \cite{bai1993estimation}. 
\end{example}

\newpage 

Therefore, we are interested to estimate the structural parameters $\beta_1$ and $\beta_2$ and the time (or location) of change $\uptau_0$. Furthermore, we estimate the following model:
\begin{align}
\hat{y}_t = \hat{\beta}_1 y_{t-1} \mathbf{1} \left\{ t \leq [ n \hat{\uptau} ] \right\} + \hat{\beta}_2 y_{t-1} \mathbf{1} \left\{ t > [ n \hat{\uptau} ]  \right\} + \epsilon_t, \ \ \ t \in \left\{ 1,...,n \right\}. 
\end{align}
We could also denote with $\mathcal{S}_n = \mathcal{S}_{1n} ( \beta_1, \uptau ) + \mathcal{S}_{2n} ( \beta_2, \uptau )$ where $\mathcal{S}_n \equiv \text{RSS}_{n}( \uptau )$ such that
\begin{align}
\mathcal{S}_{1n} ( \beta_1, \uptau ) &= \sum_{t=1}^{[n \uptau ] } \bigg( y_t - \hat{\beta}_1 ( \uptau ) y_{t-1} \bigg)^2
\\
\mathcal{S}_{2n} ( \beta_1, \uptau ) &=
\sum_{t=[n \uptau] + 1}^{ n } \bigg( y_t - \hat{\beta}_2 ( \uptau ) y_{t-1} \bigg)^2.
\end{align}
Therefore, in practice the criterion function which will need to be employed to estimate the break-point date estimators requires an iterative procedure. In other words, for each $\tau \in \Pi$, we obtain the regression parameter estimators pre$-\floor{ n \tau }$ and post$-\floor{ n \tau }$  such that
\begin{align*}
\hat{\beta}_{jn} (\tau) = \underset{ \beta \subset \Theta }{ \mathsf{arg \ min}  } \ \mathcal{S} \left( \beta, \tau \right)
\end{align*}
for $j \in \left\{ 1,2 \right\}$ respectively. Furthermore, in practice the shift point will be estimated as the sample partition that minimizes the objective function concentrated in $\tau$ such that
\begin{align}
\hat{\uptau}_n = \underset{ \uptau \in \Pi }{ \mathsf{arg \ min} }   \big\{ \mathcal{S}_{1n} \left( \hat{\beta}_{1n} (\uptau ), \uptau \right) +  \left( \hat{\beta}_{2n} (\uptau ), \uptau \right)  \big\}.
\end{align}  
Moreover, we have that $\hat{\vartheta}_n := \big( \hat{\beta}_{1n}, \hat{\beta}_{2n}, \hat{\uptau}_n \big)^{\prime}$, where $\hat{\beta}_{jn} = \hat{\beta}_{jn} \left( \hat{\uptau}_n \right)$ are the coefficient estimators for $j \in \left\{ 1, 2 \right\}$. Moreover, the size of the jump will be estimated and denoted with $\hat{\delta}_n = \left( \hat{\beta}_{1n} - \hat{\beta}_{2n} \right)$. Notice that $\hat{\beta}_1$ has a scaled Dickey-Fuller distribution. The term $\hat{\beta}_2$ can be described as being asymptotically normally distributed with a random variance, which occurs due to the asymptotic limit given by the following expression:
\begin{align}
\sqrt{n} \left( \hat{\beta}_2 - \beta_2 \right) 
= 
\frac{ \displaystyle \frac{ 1 }{ \sqrt{n} } \sum_{ t = \floor{ \uptau_0 n } + 1 }^n y_{t-1} \epsilon_t }{ \displaystyle \frac{ 1 }{ n } \sum_{ t = \floor{ \uptau_0 n } + 1 }^n  y^2_{t-1} } + o_p(1),
\end{align}
The numerator follows a CLT and the denominator converges to a random variable. Moreover, the maintained hypothesis is that the shift exists, which implies that $\delta \neq 0$. 
Consider for example the set of these indicator functions such that $\boldsymbol{1}  \left\{ t \leq \uptau n \right\}$ and $\boldsymbol{1}  \left\{ t \leq \hat{ \uptau } n \right\}$ where $\hat{ \uptau  }$ is an estimator of the unknown break fraction $\uptau$. Then, showing that for the OLS estimator of the predictive regression model it holds that $n \left( \hat{\uptau} - \uptau \right) = \mathcal{O}_p(1)$ implies also that  $
\floor{ \hat{\uptau} n } = \floor{ ( \hat{\uptau} - \uptau )n + \uptau n } = \floor{ O_p(1) + \uptau n  } = \floor{ \uptau n } + o_p(1)$.

\newpage

\begin{theorem}[\cite{chong2001structural}, \cite{pang2021estimating}]
In model (1), if $| \beta_1 | < 1$, $\beta_2 = \beta_{2n} = 1 - c / n$, where $c$ is a fixed constant, and assumptions C1-C3 are satisfied, then the estimators $\hat{\uptau}_n$, $\hat{ \beta}_1 (  \hat{\uptau}_n )$ and  $\hat{ \beta}_2 (  \hat{\uptau}_n)$ are all consistent, and 
\begin{align}
\begin{cases}
\left| \hat{\uptau}_n - \uptau_0 \right| = \mathcal{O}_p ( 1 / n ),   & 
\\
\nonumber
\\
\sqrt{n} \left( \hat{ \beta}_1 (  \hat{\uptau}_n ) - \beta_1  \right) \Rightarrow \mathcal{N} \left( 0, (1 - \beta_1^2) / \uptau  \right) ,  
\\
\nonumber
\\
n \left( \hat{ \beta}_2 (  \hat{\uptau}_n ) - \beta_2  \right)  \Rightarrow \frac{ \displaystyle \frac{1}{2} F^2(W, c, \uptau_0, 1) + c \int_{ \uptau_0 }^1 e^{ 2c(1-t) } F^2(W, c, \uptau_0, t) dt - \frac{1}{2} ( 1- \uptau) }{ \displaystyle \int_{ \uptau_0 }^1 e^{ 2c(1-t)F^2(W, c, \uptau_0, t) dt } }
\end{cases}
\end{align}
\end{theorem}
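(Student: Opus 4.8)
The plan is to adapt the classical three-step argument of \cite{bai1994least} --- first consistency of $\hat\uptau_n$, then the $\mathcal{O}_p(1)$ localisation of the estimated break index, then the limiting laws for the coefficients conditional on that localisation --- to the present setup, the novelty being that the post-break segment is nearly integrated rather than stationary. The single observation that makes this tractable is that a near-unit-root process needs order $n$ steps before $y_t$ grows to magnitude $\sqrt n$, so within any $\mathcal{O}_p(1)$ window around $k_0 = \floor{\uptau_0 n}$ the data remain $\mathcal{O}_p(1)$ and the local analysis resembles the fixed-magnitude case. For consistency I would write $\mathcal{S}_n(\uptau) \equiv \text{RSS}_n(\uptau)$ and $k = \floor{n\uptau}$ and show that for every fixed $\eta>0$, $\inf_{|\uptau-\uptau_0|\geq\eta,\,\uptau\in\Pi}\big[\mathcal{S}_n(\uptau)-\mathcal{S}_n(\uptau_0)\big]>0$ with probability tending to one: for $\uptau<\uptau_0$ the misclassified block lies in the stationary regime, where $n^{-1}\sum y_{t-1}^2\overset{p}{\to}(1-\beta_1^2)^{-1}$ and $n^{-1/2}\sum y_{t-1}u_t=\mathcal{O}_p(1)$, so the excess sum of squares is of exact order $n$ with positive limiting slope proportional to $(\beta_1-\beta_2)^2\to(1-\beta_1)^2\neq0$; for $\uptau>\uptau_0$ the misclassified block lies in the near-integrated regime, where $n^{-1}y_{\floor{nr}}$, $n^{-2}\sum y_{t-1}^2$ and $n^{-1}\sum y_{t-1}u_t$ converge by the near-integrated functional limits of Corollary \ref{FCLT} (applied to the post-break partial sums), and blending the two persistence regimes is again strictly penalised. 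Since the break is maintained ($\delta\neq0$) this gives $\hat\uptau_n\overset{p}{\to}\uptau_0$, and continuity of the profiled coefficient maps yields $\hat\beta_1(\hat\uptau_n)\overset{p}{\to}\beta_1$ and $\hat\beta_2(\hat\uptau_n)\overset{p}{\to}\beta_2$.

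For the rate, having localised $\hat k$ near $k_0$ I would restrict to $|k-k_0|\leq M$ and use that $y_{k_0-M},\dots,y_{k_0+M}$ are all $\mathcal{O}_p(1)$: expanding $\mathcal{S}_n(k)-\mathcal{S}_n(k_0)$ shows that, uniformly in such $k$ and up to $o_p(1)$, this increment process is a two-sided random walk with strictly positive drift on each side of $k_0$, the drift governed by $(\beta_2-\beta_1)^2\,\mathbb{E}[y_{t-1}^2]$ in the correctly specified neighbourhood and the fluctuations by the $u_t$ cross terms. A standard argument (cf. \cite{bai1994least}, \cite{chong2001structural}) then gives $\mathbb{P}(|\hat k-k_0|>M)\to0$ as $M\to\infty$, uniformly in $n$; equivalently $\hat k-k_0=\mathcal{O}_p(1)$, i.e. $\left|\hat\uptau_n-\uptau_0\right|=\mathcal{O}_p(1/n)$ and $\floor{n\hat\uptau_n}=k_0+\mathcal{O}_p(1)$.

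It remains to transfer this localisation to the coefficient limits. Since $\hat\beta_1(\uptau)$ is a ratio of sums of exact order $n$ and only $\mathcal{O}_p(1)$ summands (each $\mathcal{O}_p(1)$) are disturbed on replacing $k_0$ by $\hat k$, we get $\hat\beta_1(\hat\uptau_n)-\hat\beta_1(\uptau_0)=\mathcal{O}_p(1/n)=o_p(n^{-1/2})$; then $\sqrt n(\hat\beta_1(\uptau_0)-\beta_1)=\big(n^{-1/2}\sum_{t=1}^{k_0}y_{t-1}u_t\big)\big/\big(n^{-1}\sum_{t=1}^{k_0}y_{t-1}^2\big)$, and a martingale CLT for the numerator (asymptotic variance $\uptau_0(1-\beta_1^2)^{-1}$) together with $n^{-1}\sum_{t=1}^{k_0}y_{t-1}^2\overset{p}{\to}\uptau_0(1-\beta_1^2)^{-1}$ delivers $\mathcal{N}\big(0,(1-\beta_1^2)/\uptau_0\big)$. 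Likewise $\hat\beta_2(\hat\uptau_n)-\hat\beta_2(\uptau_0)=\mathcal{O}_p(1/n^2)=o_p(1/n)$; writing $u_t=\Delta y_t+(c/n)y_{t-1}$ and summing by parts,
\begin{align*}
\frac1n\sum_{t=k_0+1}^{n}y_{t-1}u_t=\frac{1}{2n}\big(y_n^2-y_{k_0}^2\big)-\frac{1}{2n}\sum_{t=k_0+1}^{n}(\Delta y_t)^2+\frac{c}{n^2}\sum_{t=k_0+1}^{n}y_{t-1}^2,
\end{align*}
where $n^{-1/2}y_{k_0}\overset{p}{\to}0$, $n^{-1}\sum(\Delta y_t)^2\overset{p}{\to}(1-\uptau_0)$, and the surviving terms together with $n^{-2}\sum y_{t-1}^2$ converge --- via Corollary \ref{FCLT} and the continuous mapping theorem applied to the limiting diffusion $F(W,c,\uptau_0,\cdot)$ --- to the stated ratio of functionals of $F$. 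The main obstacle is the rate step: one must check that the profiled slopes $\hat\beta_1(\uptau),\hat\beta_2(\uptau)$ remain well-behaved uniformly over a shrinking neighbourhood of $\uptau_0$ even though a handful of near-integrated observations may be assigned to the stationary block, and that the $u_t$ cross terms in $\mathcal{S}_n(k)-\mathcal{S}_n(k_0)$ are dominated by the quadratic drift; the $\uptau>\uptau_0$ side, where over-segmentation mixes the two persistence regimes, is the delicate part and is exactly where the near-unit-root rather than exact unit-root structure is used.
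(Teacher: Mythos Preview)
Your proposal is correct and follows essentially the same three-step route (consistency of $\hat\uptau_n$, then $n$-rate, then transfer to the coefficient limits) that the paper adopts from \cite{chong2001structural} and \cite{pang2021estimating}. The coefficient-limit arguments match almost exactly: the paper also shows $\sqrt{n}\big(\hat\beta_1(\hat\uptau_n)-\hat\beta_1(\uptau_0)\big)=o_p(1)$ from $|\hat k-k_0|=\mathcal{O}_p(1)$ and then invokes the martingale CLT, and likewise reduces $\hat\beta_2(\hat\uptau_n)$ to $\hat\beta_2(\uptau_0)$ before passing to the near-integrated functional limit.

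The one place where the paper's presentation is more explicit than yours is precisely the point you flag as ``the delicate part'': the $\uptau>\uptau_0$ side of the rate argument. Rather than a direct Bai-type drift-versus-noise bound, the paper (following Chong) studies the transitional behaviour of $\hat\beta_1(\uptau_0+cn^{\alpha-1})$ via the weight
\[
\theta_n(\alpha,c)=\frac{\sum_{t=1}^{k_0}y_{t-1}^2}{\sum_{t=1}^{k_0}y_{t-1}^2+\sum_{t=k_0+1}^{k_0+\floor{cn^\alpha}}y_{t-1}^2},
\]
showing $\theta_n\to 1$ for $\alpha<1/2$ and $\theta_n\to 0$ for $\alpha>1/2$, which pins down when $\hat\beta_1$ stays near $\beta_1$ versus drifts to $1$ and hence how $(1/n)\,\mathrm{RSS}_n$ behaves in the intermediate zone. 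This is the device that closes the gap you leave open. Conversely, your summation-by-parts identity for $n^{-1}\sum_{t>k_0}y_{t-1}u_t$ makes the structure of the $\hat\beta_2$ limit (the $\tfrac12 F^2$, the $-\tfrac12(1-\uptau_0)$, and the $c\!\int$ terms) more transparent than the paper's bare appeal to the joint weak convergence in Corollary~\ref{FCLT}; the two computations are of course equivalent.
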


\medskip

\begin{remark}
When $\underline{\uptau}_0 \leq \tau \leq \bar{\uptau}_0$, $\hat{ \beta }_1 ( \uptau)$, then $\hat{ \beta }_1 ( \uptau )$ converges uniformly to $\beta_1$. The particular result is not surprising because the estimator $\hat{ \beta }_1 ( \uptau )$ is obtained based on the data generating process that corresponds to $y_t = \beta_1 y_{t-1} + u_t$. Moreover, the estimator $\hat{ \beta }_2 ( \uptau )$ converges uniformly to a weighted average of $\beta_1$ and $\beta_2$. The weight depends on the true change point, the true preshift and postshift parameters as well as the location of $\uptau$. Furthermore, if both $\beta_1$ and $\beta_2$ are within the unit boundary then the process is said to be stationary. Thus, it is not difficult to show that all the OLS estimators are consistent in this case. For instance, \cite{bai1994least, bai1993estimation} shows that in the conventional stationary case, the change-point estimator is $n-$consistent. This convergence rate is fast enough to make the limiting distributions of $\hat{\beta}_1$ and $\hat{\beta}_2$ behave as if the true change point $\uptau_0$ is known. Theorem \ref{theorem2} below establishes the asymptotic normality of $\hat{\beta}_1$ and $\hat{\beta}_2$.     
\end{remark}

Relevant frameworks which consider a structural change type estimation and inference of a first-order autoregressive model includes the framework proposed by \cite{kurozumi2023fluctuation}. Although the particular framework corresponds to a fluctuation type monitoring test\footnote{Notice that the monitoring testing approach (see, \cite{chu1996monitoring}, \cite{leisch2000monitoring}, \cite{aue2009delay} and  \cite{horvath2020sequential} among others) corresponds to a different implementation and estimation procedure of structural change in time series regressions. One of the main difference is the use of a historical and a monitoring period during which model estimates and residuals are constructed with the purpose of detecting structural breaks. We leave these considerations as future research. Another aspect worth emphasizing is that in this article the sup-Wald type statistics correspond ton an iterative estimation step in order to construct a sequence of test statistics based on fitting the regression model within the full-sample and comparing the model estimates across two subsamples that correspond to the pre-break and post-break part. } for detecting the presence of explosive behaviour in time series data (see, \cite{arvanitis2018mildly} and \cite{skrobotov2023testing}). On the other hand, in this article we consider the statistical properties of break-point estimators in predictive regression models within the full-sample, so these two aspects are considered to be the main contributions of our study. Furthermore, our work can be useful in relevant applications from the financial economics literature since knowing the exact distributional properties of the break-point estimators when the econometrician employs a predictive regression model either for detecting slope instabilities or testing for predictability robust against parameter instability.

\newpage

\begin{theorem}[\cite{chong2001structural}, \cite{pang2021estimating}]\label{theorem2}
Under regularity conditions, if $| \beta_1 | < 1$ and $| \beta_2 | < 1$, the OLS estimators $\hat{ \uptau}_n$, $\hat{\beta}_1 ( \hat{\uptau}_n )$ and $\hat{\beta}_2 ( \hat{\uptau}_n )$ are consistent estimators and it holds that 
\begin{align}
\left| \hat{\uptau}_n - \uptau_0 \right| &= \mathcal{O}_p \left( \frac{1}{n}  \right), 
\\
\sqrt{n} \left(  \hat{\beta}_1 ( \hat{\uptau}_n ) - \beta_1 \right)  &\to^d \mathcal{N} \left( 0, \frac{ 1 - \beta_1^2 }{ \uptau_0 }  \right)
\\
\sqrt{n} \left(  \hat{\beta}_2 ( \hat{\uptau}_n ) - \beta_2 \right)  &\to^d \mathcal{N} \left( 0, \frac{ 1 - \beta_2^2 }{ 1 - \uptau_0 }  \right)
\end{align}
\end{theorem}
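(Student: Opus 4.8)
The plan is to establish the three claims in sequence: first the rate $|\hat\uptau_n - \uptau_0| = \mathcal{O}_p(1/n)$, then the asymptotic normality of $\hat\beta_1(\hat\uptau_n)$, and finally that of $\hat\beta_2(\hat\uptau_n)$. The key structural fact driving everything is that under $|\beta_1|<1$ and $|\beta_2|<1$ the process $\{y_t\}$ is (piecewise) stationary and ergodic with finite variance, so standard LLN and CLT tools apply directly on each subsample — unlike in Theorem 1, where the post-break regime is local-to-unity. First I would fix $\uptau$ and decompose $\mathrm{RSS}_n(\uptau)$ against $\mathrm{RSS}_n(\uptau_0)$. Writing $\hat\beta_j(\uptau)$ in closed form, one obtains that $\mathrm{RSS}_n(\uptau) - \mathrm{RSS}_n(\uptau_0)$ is, up to $o_p(1)$ uniformly over $\Pi$, a sum of terms each bounded below by a positive multiple of $\delta^2 = (\beta_1-\beta_2)^2$ times the number of misclassified observations $n|\uptau-\uptau_0|$, plus cross terms that are $O_p(\sqrt{n|\uptau-\uptau_0|})$ by a maximal inequality for the partial sums $\sum y_{t-1}\epsilon_t$ and $\sum (y_{t-1}^2 - \mathbb{E}y_{t-1}^2)$. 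This yields the standard Bai-type argument: for the minimizer, the quadratic lower bound in $n|\hat\uptau_n - \uptau_0|$ must dominate the $O_p(\sqrt{n|\hat\uptau_n-\uptau_0|})$ fluctuation, forcing $n|\hat\uptau_n - \uptau_0| = \mathcal{O}_p(1)$. This also delivers consistency of $\hat\uptau_n$ as a by-product.

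Once the rate is in hand, the two normality statements follow by the same mechanism as in the known-break case. For $\hat\beta_1(\hat\uptau_n)$, I would write
\begin{align*}
\sqrt{n}\left(\hat\beta_1(\hat\uptau_n) - \beta_1\right) = \frac{\frac{1}{\sqrt{n}}\sum_{t=1}^{\floor{n\hat\uptau_n}} y_{t-1}\epsilon_t}{\frac{1}{n}\sum_{t=1}^{\floor{n\hat\uptau_n}} y_{t-1}^2},
\end{align*}
and then argue that because $\floor{n\hat\uptau_n} = \floor{n\uptau_0} + O_p(1)$ (a direct consequence of the rate), replacing $\hat\uptau_n$ by $\uptau_0$ changes numerator and denominator only by $o_p(1)$: the denominator converges in probability to $\uptau_0\,\mathrm{Var}(y_{t-1}) = \uptau_0/(1-\beta_1^2)$ by the ergodic theorem on the pre-break regime, and the numerator satisfies a martingale CLT with variance $\uptau_0\,\mathbb{E}(y_{t-1}^2\epsilon_t^2) = \uptau_0/(1-\beta_1^2)$ under Assumption (with $\mathbb{E}(\epsilon_t^2|\mathcal{F}_{t-1})=1$). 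Slutsky then gives the stated $\mathcal{N}(0,(1-\beta_1^2)/\uptau_0)$. The argument for $\hat\beta_2(\hat\uptau_n)$ is identical, using the post-break sample of length $\floor{n(1-\uptau_0)}+O_p(1)$, the stationary law of $y_t$ in the second regime, and the corresponding variance $(1-\beta_2^2)/(1-\uptau_0)$; note that here, in contrast to Theorem 1, both numerator and denominator have nondegenerate deterministic limits, so the limit is genuinely normal rather than mixed normal.

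The main obstacle is the uniform control needed for the first step: one must show the cross and remainder terms in $\mathrm{RSS}_n(\uptau)-\mathrm{RSS}_n(\uptau_0)$ are $o_p(\delta^2 n|\uptau-\uptau_0|)$ uniformly over the relevant range of $\uptau$, which requires a maximal inequality (e.g. Hájek–Rényi or a chaining bound) for the partial sums of $y_{t-1}\epsilon_t$ and centered $y_{t-1}^2$, together with care that the OLS subsample estimates $\hat\beta_j(\uptau)$ do not blow up near the endpoints of $\Pi$ — this is where the restriction $\uptau\in\Pi=(\pi_0,1-\pi_0)$ and the stationarity of $\{y_t\}$ are essential. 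A secondary technical point is handling the "contamination" region between $\floor{n\uptau}$ and $\floor{n\uptau_0}$, where observations generated by one regime are attributed to the other; a standard truncation-and-split of the sum handles this, and because the break magnitude $\delta$ is fixed (not shrinking), the signal term dominates cleanly and one need not invoke the more delicate local-alternative asymptotics. Given the rate, the remaining two parts are routine.
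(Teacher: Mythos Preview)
Your proposal is correct and follows essentially the same route as the paper: the paper likewise obtains the $\mathcal{O}_p(1/n)$ rate for $\hat\uptau_n$ via the Bai--Chong analysis of $\mathrm{RSS}_n(\uptau)$ (with Lemma~1 supplying the uniform-convergence ingredients), then replaces $\hat\uptau_n$ by $\uptau_0$ in the expressions for $\hat\beta_j(\hat\uptau_n)$ and applies the martingale-difference CLT on each subsample to reach the stated normal limits. The only cosmetic slip is that your displayed identity for $\sqrt{n}(\hat\beta_1(\hat\uptau_n)-\beta_1)$ omits the contamination term $(\beta_2-\beta_1)\sum_{t=\floor{n\uptau_0}+1}^{\floor{n\hat\uptau_n}}y_{t-1}^2$ on the event $\{\hat\uptau_n>\uptau_0\}$, but you correctly absorb this in the surrounding discussion, exactly as the paper does via its $\boldsymbol{1}\{\hat\uptau_n\leq\uptau_0\}$ decomposition.
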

Therefore, we can see from the limit result given on Theorem 1 that $\hat{\beta}_1 ( \hat{\uptau}_n )$ and $\hat{\beta}_2 ( \hat{\uptau}_n )$ are both asymptotically  normally distributed with variance depending on $\beta_1$, $\beta_2$ and $\uptau_0$.

\subsection{The Asymptotic Criterion Function}

Notice that in the model the change point $\uptau_0$ is unknown and has to be estimated by $\hat{\uptau}_n$. Thus, to show the consistency of $\hat{\uptau}_n$, the common practice in the structural break literature is to show that $( 1 / n ) RSS_n( \uptau )$ converges uniformly to a nonstochastic function that has a unique minimum at $\uptau = \uptau_0$. Therefore, we focus on the asymptotic behaviour of the quantity $( 1 / n ) RSS_n( \uptau )$. The following Lemma is useful in deriving the limiting behaviour of the criterion function $( 1 / n ) RSS_n( \uptau )$ and in proving Theorem 1, which follows. 

\begin{lemma}[\cite{pang2021estimating}]
Let $\left\{ y_t \right\}_{t=1}^n$ be generated according to model (1), with $| \beta_1 | < 1$ and $| \beta_2 | < 1$. We have the following asymptotic results
\begin{align}
&\underset{ 0 \leq \uptau_1 \leq \uptau_2 \leq 1 }{ sup } \ \frac{1}{n} \left| \sum_{ [ n \uptau_1 ] }^{ [ n \uptau_2 ]} y_{t-1} \epsilon_t \right| = o_p(1), 
\\
&\frac{1}{n} \sum_{t=1}^{ [ n \uptau_0 ]} y_{t-1}^2 \overset{ p }{ \to } \frac{ \uptau_0 \sigma^2 }{ 1 - \beta_1^2 }, 
\\
&\frac{1}{n} \sum_{ t = [ n \uptau_0 ] + 1 }^{ n } y_{t-1}^2 \overset{ p }{ \to } \frac{ ( 1 - \uptau_0 ) \sigma^2 }{ 1 - \beta_1^2 }, 
\\
&\underset{ 0 \leq \uptau_1 \leq \uptau \leq \uptau_0 }{ sup } \ \left| \sum_{ [ n \uptau ] }^{ [ n \uptau_0 ]} y^2_{t-1} - \frac{ ( \uptau_0 - \uptau) \sigma^2 }{ 1 - \beta_1^2 } \right| = o_p(1), 
\\
&\underset{ 0 \leq \uptau_1 \leq \uptau \leq \uptau_0 }{ sup } \left| \hat{\beta}_1( \uptau ) - \beta_1 \right| = o_p(1),
\\
&\underset{ 0 \leq \uptau_1 \leq \uptau \leq \uptau_0 }{ sup } \left| \hat{\beta}_2( \uptau ) -  \frac{ ( \uptau_0 - \uptau )( 1 - \beta_2^2 ) \beta_1 + ( 1 - \uptau_0 )( 1 - \beta_1^2 ) \beta_2 }{ ( \uptau_0 - \uptau )( 1 - \beta_2^2 ) + ( 1 - \uptau_0 )( 1 - \beta_1^2 ) } \right| = o_p(1),
\end{align}
\end{lemma}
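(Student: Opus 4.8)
The plan is to prove the six displays in the order listed, since the last two are algebraic consequences of the first four. Throughout, write $S_n(r):=\tfrac1n\sum_{t=1}^{\lfloor nr\rfloor}y_{t-1}^2$ and record the elementary moment bound that within each regime $y_t$ is a linear combination of the i.i.d.\ innovations with geometrically decaying coefficients, so that $\sup_{t\le n}\mathbb{E}[y_{t-1}^2]<\infty$ uniformly in $n$; the pre-break second moments converge to $\sigma^2/(1-\beta_1^2)$, while after the break $y_{k_0}$ is transmitted forward only through the damping factor $\beta_2^{\,t-k_0}$, so the post-break second moments converge to $\sigma^2/(1-\beta_2^2)$.

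For the first display, $\{y_{t-1}\epsilon_t,\mathcal{F}_t\}$ is a martingale difference sequence because $\epsilon_t$ is independent of $\mathcal{F}_{t-1}$ while $y_{t-1}$ is $\mathcal{F}_{t-1}$-measurable; hence $k\mapsto\sum_{t=1}^{k}y_{t-1}\epsilon_t$ is a martingale with $\mathbb{E}\big[\big(\sum_{t=1}^{n}y_{t-1}\epsilon_t\big)^2\big]=\sigma^2\sum_{t=1}^{n}\mathbb{E}[y_{t-1}^2]=O(n)$ by the moment bound. Doob's $L^2$ maximal inequality (equivalently, the functional CLT for the partial-sum martingale) then gives $\max_{k\le n}\big|\sum_{t=1}^{k}y_{t-1}\epsilon_t\big|=O_p(\sqrt n)$, and since any block sum is a difference of two such partial sums, $\sup_{0\le\tau_1\le\tau_2\le1}\tfrac1n\big|\sum_{\lfloor n\tau_1\rfloor}^{\lfloor n\tau_2\rfloor}y_{t-1}\epsilon_t\big|=O_p(n^{-1/2})=o_p(1)$.

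For the second and third displays I would invoke the law of large numbers for the (asymptotically stationary, ergodic) linear processes generated within each regime: on $\{1,\dots,k_0\}$ the data obey $y_t=\beta_1 y_{t-1}+\epsilon_t$, so $\tfrac1{k_0}\sum_{t=1}^{k_0}y_{t-1}^2\overset{p}{\to}\sigma^2/(1-\beta_1^2)$ (the deterministic transient $\beta_1^{t}y_0$ and, after the break, the geometrically damped carry-over of $y_{k_0}$ being negligible after averaging), and multiplying by $k_0/n\to\tau_0$ gives the second; the third is the identical argument on the post-break block, whose stationary variance is $\sigma^2/(1-\beta_2^2)$. The fourth display is the uniform (Glivenko--Cantelli type) strengthening of the second: $r\mapsto S_n(r)$ is nondecreasing, it converges pointwise on $[0,\tau_0]$ to the continuous function $r\sigma^2/(1-\beta_1^2)$ by the previous step, and a P\'olya-type argument (sandwich $S_n(r)$ between its values at a fine grid, then use continuity of the limit) upgrades this to uniform convergence on the compact interval; writing $\tfrac1n\sum_{\lfloor n\tau\rfloor}^{\lfloor n\tau_0\rfloor}y_{t-1}^2=S_n(\tau_0)-S_n(\tau)$ up to a uniformly negligible boundary term and taking the supremum over $\tau_1\le\tau\le\tau_0$ yields the claim.

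Finally, for the last two displays, note that when $\tau\le\tau_0$ the pre-$\lfloor n\tau\rfloor$ sample lies wholly in the first regime, so $\hat\beta_1(\tau)-\beta_1=\big(\sum_{t=1}^{\lfloor n\tau\rfloor}y_{t-1}\epsilon_t\big)\big/\big(\sum_{t=1}^{\lfloor n\tau\rfloor}y_{t-1}^2\big)$, whose numerator is $o_p(1)$ uniformly by the first display and whose denominator converges uniformly to $\tau\sigma^2/(1-\beta_1^2)$, bounded away from $0$ once $\tau$ stays away from $0$; hence the ratio is $o_p(1)$ uniformly. For $\hat\beta_2(\tau)$ with $\tau\le\tau_0$ the post-$\lfloor n\tau\rfloor$ sample splits into a first-regime block $(\lfloor n\tau\rfloor,\lfloor n\tau_0\rfloor]$ and a second-regime block $(\lfloor n\tau_0\rfloor,n]$; substituting $y_t=\beta_1 y_{t-1}+\epsilon_t$ and $y_t=\beta_2 y_{t-1}+\epsilon_t$ respectively and applying displays one, three and four, both $\tfrac1n\sum y_t y_{t-1}$ and $\tfrac1n\sum y_{t-1}^2$ converge uniformly with the denominator bounded away from $0$, and the continuous mapping theorem delivers uniform convergence of $\hat\beta_2(\tau)$ to $\big[\beta_1(\tau_0-\tau)/(1-\beta_1^2)+\beta_2(1-\tau_0)/(1-\beta_2^2)\big]\big/\big[(\tau_0-\tau)/(1-\beta_1^2)+(1-\tau_0)/(1-\beta_2^2)\big]$, which is the stated ratio after clearing the common factor $(1-\beta_1^2)(1-\beta_2^2)$. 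The step I expect to be most delicate is the uniform moment control across the break that justifies the maximal inequality in display one, together with the P\'olya-type passage from pointwise to uniform convergence in display four; everything else is routine once the denominators are known to stay bounded away from zero, which is why $\tau$ is restricted to a compact subinterval of $(0,\tau_0]$.
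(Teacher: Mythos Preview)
Your proposal is correct and follows essentially the same logical skeleton as the paper (and its sources, \cite{chong2001structural} and \cite{pang2021estimating}): establish uniform control of the partial-sum building blocks, then deduce the uniform behaviour of $\hat\beta_1(\uptau)$ and $\hat\beta_2(\uptau)$ from the ratio representation. Your algebraic decomposition for $\hat\beta_2(\uptau)$---splitting the post-$\lfloor n\uptau\rfloor$ sample into a first-regime block and a second-regime block, substituting the regime-specific model, and reading off the weighted-average limit---is exactly the manipulation the paper carries out when it writes $\hat\beta_2(\uptau)-\beta_1=\big(\sum_{t=\lfloor n\uptau_0\rfloor+1}^{n}y_{t-1}^2\big/\sum_{t=\lfloor n\uptau\rfloor+1}^{n}y_{t-1}^2\big)(\beta_2-\beta_1)+o_p(1)$ and then inserts the limits of the two block sums.

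The only methodological difference worth recording is in how uniformity is obtained. The paper simply invokes the uniform weak law of large numbers of Andrews (1988, Theorem~2) for both the denominator and the noise term, whereas you supply two self-contained arguments: Doob's $L^2$ maximal inequality for the martingale partial sums in the first display, and a P\'olya-type monotonicity sandwich to upgrade the pointwise LLN on $S_n(r)$ to a uniform one in the fourth display. Both routes are valid; yours is more elementary and avoids an external reference, while the paper's citation handles the uniformity in one stroke. Note also that you correctly use $\sigma^2/(1-\beta_2^2)$ for the post-break stationary variance, which is what the argument requires (the $1-\beta_1^2$ appearing in the paper's third display is evidently a typographical slip).
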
 

\newpage 

Consider the criterion function $\left(  1 / n \right) RSS ( \uptau)$ when $\beta_2 = 1$ behaves very differently. Then, under Assumptions (A1)-(A3) we have that 
\begin{align}
\ \ & \left| \frac{1}{n} \sum_{t=1}^{ \floor{ n \uptau_0 } } y_{t-1} \epsilon_t    \right| = o_p(1), 
\\
 \ \ & \frac{1}{n} \sum_{ t = 1\floor{ n \uptau_0 } + 1 }^{ n }  y_{t-1} \epsilon_t  \Rightarrow \frac{( 1 - \uptau_0 ) \sigma^2 }{2} \left( B^2(1) - 1 \right) = \mathcal{O}_p(1), 
\end{align}
Following the econometric framework of \cite{chong2001structural} and \cite{pang2021estimating}, further limit results which will need to be established in the case one replaces the assumption of a stationary autoregressive coefficient with the local-to-unity parametrization includes: 
\begin{align*}
&\underset{ \uptau_1 \leq \uptau \leq \uptau_0 }{ \text{sup} } \left| \hat{\beta}_1 ( \uptau ) - \beta_1  \right| = o_p(1),
\ \ \ \underset{ \uptau_1 \leq \uptau \leq \uptau_0 }{ \text{sup} } \left| \hat{\beta}_2 ( \uptau ) - 1  \right| = \mathcal{O}_p \left( \frac{1}{n} \right),
\ \ \ \underset{ \uptau_1 \leq \uptau \leq \uptau_0 }{ \text{sup} } \left| \hat{\beta}_2 ( \uptau ) - \beta_1  \right| = \mathcal{O}_p( 1 ),
\\
&\underset{ \uptau_1 \leq \uptau \leq \uptau_0 }{ \text{sup} } \frac{1}{n}  \left| \sum_{ \floor{ n \uptau_0 } + 1 }^{ \floor{ n \uptau } } y_{t-1} \epsilon_t \right| = \mathcal{O}_p( 1 ),
\ \ \ \ \underset{ \uptau_1 \leq \uptau \leq \uptau_0 }{ \text{sup} } \frac{1}{n}  \left| \sum_{ \floor{ n \uptau } + 1 }^{ n } y_{t-1} \epsilon_t \right| = \mathcal{O}_p( 1 ),
\\
&\left| \hat{\beta}_1 ( \uptau ) - 1 \right| = \mathcal{O}_p \left( \frac{1}{ \sqrt{n} } \right),
\ \ \ \  \left| \hat{\beta}_2 ( \uptau ) - 1 \right| = \mathcal{O}_p \left( \frac{1}{ n } \right),
\end{align*}
Notice that there is an asymptotic gap between $( 1 / n) RSS_n( \uptau_0 )$ and  $( 1 / n ) RSS ( \uptau )$. Thus to examine the consistency of $\hat{\beta}_1$, we have to investigate the transitional behaviour of $( 1 / n ) RSS_n ( \uptau )$. Note that for any constant $c > 0$, 
\begin{align}
\hat{\beta}_1 \left( \uptau_0 + c n^{ \alpha - 1 } \right) 
= 
\theta_n \left( \alpha, c \right) \left( \beta_1 + \frac{ \displaystyle \sum_{t = 1}^{k_0}  y_{t-1} \epsilon_t }{ \displaystyle \sum_{t = 1}^{k_0} y_{t-1}^2 } \right) 
+ 
\big( 1 - \theta_n \left( \alpha, c \right) \big) \left( 1 + \frac{ \displaystyle \sum_{t = k_0 + 1 }^{ k_0 + \floor{ c n^{\alpha} } }  y_{t-1} \epsilon_t }{ \displaystyle \sum_{t = k_0 + 1 }^{ k_0 + \floor{ c n^{\alpha} } }  y_{t-1}^2 } \right), 
\end{align}
where 
\begin{align}
\theta_n \left( \alpha, c \right) 
= 
\left( \frac{ \displaystyle \sum_{t = 1}^{k_0}  y^2_{t-1} \epsilon_t }{ \displaystyle\sum_{t = 1}^{k_0}  y^2_{t-1} + \sum_{t = k_0 + 1 }^{ k_0 + \floor{ c n^{\alpha} } }  y_{t-1}^2 } \right) 
\end{align}
When $\alpha < 1 / 2$, it holds that $\theta_n \left( \alpha, c \right) \overset{ p }{ \to } 1, \ \ \ \hat{\beta}_1 \left( \uptau_0 + c n^{ \alpha - 1 } \right) \overset{ p }{ \to } \beta_1$, and $\frac{1}{n} RSS_n \left( \uptau_0 + c n^{ \alpha - 1 } \right) \overset{ p }{ \to } \sigma^2$. Moreover, for $\frac{1}{2} < \alpha < 1$ we have that 
\begin{align}
\theta_n \left( \alpha, c \right) &\overset{ p }{ \to } 0, \ \ \ \hat{\beta}_1 \left( \uptau_0 + c n^{ \alpha - 1 } \right) \overset{ p }{ \to } 1,
\\
\frac{1}{n} RSS_n \left( \uptau_0 + c n^{ \alpha - 1 } \right) &\overset{ p }{ \to } \sigma^2 + \frac{ ( 1 - \beta_1 ) \uptau_0 \sigma^2 }{ 1 + \beta_1 }. 
\end{align}

\newpage 

The above results simply imply that if the convergence rate of $\hat{\uptau}_n$ is faster that $n^{1/2}$ then $\hat{\beta}_1$ will be consistent, otherwise it will be inconsistent. Next, we derive the asymptotic behaviour of the criterion function $( 1 / n ) RSS_n ( \uptau )$.  Another useful quantity is as below: 
\begin{align}
\hat{\beta}_1 \left( \uptau_0 + \frac{c}{\sqrt{n}} \right) 
&= 
\theta_n \left( \frac{1}{2}, c \right) \left( \beta_1 + \frac{ \displaystyle \sum_{t = 1}^{k_0}  y_{t-1} \epsilon_t }{ \displaystyle \sum_{t = 1}^{k_0} y_{t-1}^2 } \right) 
\nonumber
+ 
\left[ 1 - \theta_n \left( \frac{1}{2}, c \right) \right] \left( 1 + \frac{ \displaystyle \sum_{t = k_0 + 1 }^{ \floor{ k_0 + c \sqrt{n} } }  y_{t-1} \epsilon_t }{ \displaystyle \sum_{t = k_0 + 1 }^{ k_0 +  c \sqrt{n} }  y_{t-1}^2 } \right), 
\\
\theta_n \left( \frac{1}{2},  c \right) 
&= 
\left( \frac{ \displaystyle \sum_{t = 1}^{k_0}  y^2_{t-1} \epsilon_t }{ \displaystyle\sum_{t = 1}^{k_0}  y^2_{t-1} + \sum_{t = k_0 + 1 }^{ \floor{ k_0 + c \sqrt{n} } } y_{t-1}^2 } \right) 
\end{align} 
Clearly, the challenging task here is that when we impose that assumption that the autoregressive coefficient of the first order autoregression model is expressed via the local-to-unity parametrization, then we expect that the break-point estimators will depend on the nuisance parameter of persistence. In other words, the fact that the asymptotic distribution of these break-point estimators will not be nuisance parameter-free can be challenging when critical values are needed (e.g., case of the monitoring scheme).  We leave these considerations for future research. 

Following the existing literature  to find the limiting distribution of $\hat{\beta}_1 ( \hat{\uptau}_n )$, in the case of the stationary AR$(1)$ model, notice that $\left( \hat{\uptau}_n - \uptau_0 \right) = \mathcal{O}_p \left( n^{-1} \right)$ and it holds that
\begin{align}
\sqrt{n} \left( \hat{\beta}_1 ( \hat{\uptau}_n ) - \hat{\beta}_1 ( \uptau_0 ) \right)
=
\sqrt{n} \left( \frac{ \displaystyle \sum_{t = 1 }^{ \floor{ n\hat{\uptau}_n } }  y_t y_{t-1} }{ \displaystyle \sum_{t = 1 }^{ \floor{ n \hat{\uptau}_n } }  y_{t-1}^2 } - \frac{ \displaystyle \sum_{t = 1 }^{ \floor{ n \uptau_0 } }  y_t y_{t-1} }{ \displaystyle \sum_{t = 1 }^{ \floor{ n \uptau_0 } }  y_{t-1}^2 }   \right)
\end{align}
From the above derivations we can see that $\hat{\beta}_1 ( \hat{\uptau}_n )$ and $\hat{\beta}_1 ( \uptau_0 )$ have the same asymptotic distribution, since $\left\{ y_{t-1} \epsilon_t, \mathcal{F}_t \right\}_{t=1}^{\floor{ n \uptau_0 } }$ is a martingale difference sequence, with $\mathbb{E} \left[ y_{t-1} \epsilon_t | \mathcal{F}_{t-1} \right] = 0$ and $\sum _{t=1}^{ \floor{ n \uptau_0 } } \mathbb{E} \left[  \left( y_{t-1} \epsilon_t \right)^2| \mathcal{F}_{t-1}  \right] \overset{ p }{ \to } \frac{ \sigma^4 }{ 1 - \beta_1^2 } < \infty$. Applying the central limit theorem for martingale difference sequences and the fact that 
\begin{align}
\frac{1}{n} \sum _{t=1}^{ \floor{ n \uptau_0 } }  y_{t-1}^2 &\overset{ p }{ \to } \frac{ \uptau_0 \sigma^2  }{ 1 - \beta_1^2 }, 
\\
\sqrt{n} \left( \hat{\beta}_1 ( \hat{\uptau}_n ) - \beta_1 \right) &\overset{ d }{ = } \sqrt{n} \left( \hat{\beta}_1 ( \hat{\uptau}_0 ) - \beta_1 \right) = \frac{ \displaystyle \frac{1}{ \sqrt{n} } \sum_{t = 1 }^{ \floor{ n \uptau_0 } }  y_{t-1} \epsilon_t }{ \displaystyle \frac{1}{n}   \sum_{t = 1 }^{ \floor{ n \uptau_0 } }  y_{t-1}^2 }  \overset{ d }{ \to } \mathcal{N} \left( 0, \frac{1 - \beta_1^2 }{ \uptau_0 }  \right). 
\end{align}

\newpage 

Next, to find the limiting distribution of $\hat{\beta}_2 ( \hat{\uptau}_n )$, notice that $\left( \hat{\uptau}_n - \uptau_0 \right) = \mathcal{O}_p ( \frac{1}{n} )$ and
\begin{align*}
&n \left( \hat{\beta}_2 ( \hat{\uptau}_n ) - \hat{\beta}_2 ( \uptau_0 ) \right)
=
n \left( \frac{ \displaystyle \sum_{t = \floor{ n\hat{\uptau}_n }  + 1 }^{ n }  y_t y_{t-1} }{ \displaystyle \sum_{t = \floor{ T\hat{\uptau}_n }  + 1 }^{ n }  y_{t-1}^2 } - \frac{ \displaystyle \sum_{t = \floor{ n\uptau_0 }  + 1 }^{ n }  y_t y_{t-1} }{ \displaystyle \sum_{t = \floor{ n \uptau_0 }  + 1 }^{ n }   y_{t-1}^2 }   \right)
\\
&= 
\boldsymbol{1} \left\{ \hat{\uptau}_n \leq \uptau_0 \right\} n \left( - \frac{ \displaystyle \sum_{ t = \floor{ n\hat{\uptau}_n }  + 1 }^{ \floor{ n\uptau_0 } } y^2_{t-1} }{ \displaystyle \sum_{ t = \floor{ n\hat{\uptau}_n }  + 1  }^{ n }  y_{t-1}^2 } \frac{ \displaystyle \sum_{t = \floor{ n\uptau_0 } + 1 }^{ n } y_{t-1} \epsilon_t }{ \displaystyle \sum_{ t = \floor{ n\uptau_0 }  + 1 }^{ n } y_{t-1}^2 } + \frac{ \displaystyle \sum_{ t = \floor{ n\hat{\uptau}_n }  + 1 }^{ \floor{ n\uptau_0 } } y_{t-1} \epsilon_t  }{ \displaystyle \sum_{ t = \floor{ n\hat{\uptau}_n }  + 1 }^{ \floor{ n\uptau_0 } } y_{t-1}^2 } \right) 
+ 
\boldsymbol{1} \left\{ \hat{\uptau}_n \leq \uptau_0 \right\} n \left( \beta_1 - \beta_2 \right) \frac{ \displaystyle \sum_{ t = \floor{ n\hat{\uptau}_n }  + 1 }^{ \floor{ n\uptau_0 } }  y_{t-1}^2  }{ \displaystyle \sum_{ t = \floor{ n\hat{\uptau}_n }  + 1 }^{ \floor{ n\uptau_0 } } y_{t-1}^2 }         
\end{align*}
It can be prove that both $\hat{\beta}_1 ( \hat{\uptau}_n )$ and $\hat{\beta}_2 ( \uptau_0 )$ have the same asymptotic distribution, which implies  
\begin{align}
n \left( \hat{\beta}_2 ( \hat{\uptau} ) - 1 \right) \overset{ d }{ = } n \left( \hat{\beta}_2 ( \uptau_0 ) - \beta_2 \right) 
= 
\frac{ \displaystyle \frac{1}{n} \sum_{ t = \floor{ n \uptau_0 }  + 1 }^{ n }  y_{t-1} \epsilon_t  }{ \displaystyle  \frac{1}{n^2} \sum_{ t = \floor{ n \uptau_0 }  + 1 }^{ n } y_{t-1}^2 }
\Rightarrow
\frac{ \displaystyle B^2(1) - 1 }{ \displaystyle 2 ( 1 - \uptau_0 ) \int_0^1 B^2 }.
\end{align}
By the central limit theorem for martingale difference sequences and by independence of the two martingale differences given previously, we have that 
\begin{align*}
&\frac{1}{ \sqrt{n} } \sum_{ t = \floor{ n \uptau_0 }  + 1 }^{ n } y_{t-1} \epsilon_t \overset{ d }{ \to } \mathcal{N} \left( 0, \frac{ \sigma^4 }{ 1 - \beta_2^2 } \right) 
\\
&\frac{1}{ n } \sum_{ t = \floor{ n \uptau_0 }  + 1 }^{ \floor{ n \uptau } } y^2_{t-1} 
= 
\frac{1}{n} \sum_{ t = \floor{ n \uptau_0 }  + 1 }^{ \floor{ n \uptau } } \left( \beta_2^{t - k_0 - 1} y_{k_0} +  \sum_{ t = \floor{ n \uptau_0 }  + 1 }^{ t - 1 } \beta_2^{t - i - 1} \epsilon_i \right)^2 
\\
&=
\left( \frac{ y_{ k_0 } }{ \sqrt{n}} \right)^2 \sum_{ t = \floor{ n \uptau_0 }  + 1 }^{ \floor{ n \uptau } }  \beta_2^{ 2 (t - k_0 - 1) } 
+
2 \frac{ y_{ k_0 } }{ T } \sum_{ t = \floor{ n \uptau_0 }  + 1 }^{ \floor{ n \uptau } } \left( \beta_2^{ t - k_0 - 1 } \sum_{ t = \floor{ n \uptau_0 }  + 1 }^{ \floor{ n \uptau } } \beta_2^{ t - i - 1 } \epsilon_i \right)
+ 
\frac{1}{n} \sum_{ t = \floor{ n \uptau_0 }  + 1 }^{ \floor{ n \uptau } } \left( \sum_{ t = \floor{ n \uptau_0 }  + 1 }^{ t - 1 } \beta_2^{ t - i - 1 } \epsilon_i \right)^2.    
\end{align*}
The first term above weakly converges to $\left[ \sigma^2 B^2 ( \uptau_0 ) \right] / \left( 1 - \beta_2^2 \right)$. Furthermore, because $\left| y_{k_0} / \sqrt{n} \right| = \mathcal{O}_p(1)$ and $\left( 1 / \sqrt{n} \right) \text{sup}_{ t > k_0 } \left| \epsilon_t \right| = o_p(1)$, then we can show that the second term is bounded by 
\begin{align*}
&\leq 2 \left|  \frac{y_{k_0} }{ \sqrt{n} } \right| \underset{ \uptau \in \Pi }{ \text{sup} } \left| \sum_{ t = \floor{ n \uptau_0 }  + 1 }^{ \floor{ n \uptau } } \beta_2^{ t - k_0 - 1} \frac{ 1 - \beta_2^{ t - k_0 - 1 } }{ 1 - \beta_2 } \right| \frac{1}{ \sqrt{n} } \underset{ t > k_0 }{ \text{sup} }  \left| \epsilon_t \right| 
\\
&\leq 
2 \left| \frac{y_{k_0} }{ \sqrt{n} } \right| 
\left( \sum_{ t = k_0 + 1 }^{ \infty } \left| \frac{ \beta_2^{ t - k_0 - 1} }{ 1 - \beta_2 } \right|  + \sum_{ t = k_0 + 1 }^{ \infty } \left| \frac{ \beta_2^{ 2 (t - k_0 - 1 ) } }{ 1 - \beta_2 } \right|  \right) \frac{1}{ \sqrt{n} } \underset{ t > k_0 }{ \text{sup} }  \left| \epsilon_t \right|
\\
&\leq 
2 \left| \frac{y_{k_0} }{ \sqrt{n} } \right| 
\left( \frac{1}{ ( 1 - | \beta_2 | )^2 } + \frac{1}{ ( 1 - | \beta_2 | )( 1 - \beta_2^2 ) } \right) \frac{1}{ \sqrt{n} } \underset{ t > k_0 }{ \text{sup} }  \left| \epsilon_t \right| = o_P(1). 
\end{align*}

\newpage 






\newpage

\section{Structural Change in the Predictive Regression Model}
\label{Section3}

Although in this paper we assume that the innovation sequence of the model has a linear process representation, other studies in the literature imposes a NED condition when developing structural break tests in stationary time series regression models (see, \cite{ling2007testing},  \cite{kim2020mean} and \cite{lee2014functional}). 
Notice that the IVX estimator has the property that decorrelates the system and therefore conventional invariance principles and weak convergence results hold without requiring to consider a topological convergence in a different space. Furthermore, another important feature of the predictive regression model is that one can incorporate serial correlation in the error term and therefore using the IVX instrumentation mixed Gaussianity distributional converges still holds.  

\medskip

\begin{example}
Consider again the predictive regression model given below:
\begin{align}
y_t = ( \alpha_1 + \boldsymbol{\beta}_1 x_{t-1} ) \boldsymbol{1} \left\{ t \leq k  \right\} +  ( \alpha_2 + \boldsymbol{\beta}_2 x_{t-1} ) \boldsymbol{1} \left\{ t > k  \right\}  + u_t  
\end{align}
where the autoregressive coefficient has an autoregressive structure based on the local-to-unity parametrization given by the expression below:
\begin{align}
\boldsymbol{x}_t = \left( \boldsymbol{I} - \frac{ \boldsymbol{C}_p  }{n} \right) \boldsymbol{x}_{t-1} + \boldsymbol{v}_t
\end{align}
\end{example}
At this point, we begin our asymptotic theory analysis by investigating the corresponding expressions of the criterion function when the functional form of the linear predictive regression model with a conditional mean function is employed. We write the residual sum of squares as below: 
\begin{align}
\mathcal{S}_n ( \uptau ) 
&= 
\sum_{t=1}^{ \floor{ n \uptau } } \bigg( u_t - \left( \hat{\beta}_1 ( \uptau ) - \beta_1 \right) x_{t-1} \bigg)^2 
+ 
\sum_{t = \floor{ T \uptau } + 1 }^{ \floor{ n \uptau_0 } } \bigg( u_t - \left( \hat{\beta}_2 ( \uptau ) - \beta_1 \right) x_{t-1} \bigg)^2
\nonumber
\\
&+
\sum_{t = \floor{ n \uptau_0 } + 1 }^{ n } \bigg( u_t - \left( \hat{\beta}_2 ( \uptau ) - \beta_2 \right) x_{t-1} \bigg)^2
\end{align}
which after expanding out, the RSS expression can be written as below: 
\begin{align}
\mathcal{S}_n ( \uptau ) 
&= 
\sum_{ t = 1 }^{ n } u_t^2 - 2 \sum_{t=1}^{ \floor{ n \uptau } } \left( \hat{\beta}_1 ( \uptau ) - \beta_1 \right) x_{t-1}  + \left( \hat{\beta}_1 ( \uptau ) - \beta_1 \right)^2 \sum_{t=1}^{ \floor{ n \uptau } } x_{t-1}^2 
\nonumber
\\
\nonumber
\\
&- 
2 \left( \hat{\beta}_2 ( \uptau ) - \beta_1 \right) \sum_{t = \floor{ n \uptau } + 1 }^{ \floor{ n \uptau_0 } } x_{t-1} u_t + \left( \hat{\beta}_2 ( \uptau ) - \beta_1 \right)^2 \sum_{t = \floor{ n \uptau } + 1 }^{ \floor{ n \uptau_0 } } x_{t-1}^2 
\nonumber
\\
\nonumber
\\
&- 
2 \left( \hat{\beta}_2 ( \uptau ) - \beta_2 \right) \sum_{t = \floor{ n \uptau_0 } + 1 }^{ n }  x_{t-1} u_t + \left( \hat{\beta}_2 ( \uptau ) - \beta_2 \right)^2 \sum_{t = \floor{ n \uptau_0 } + 1 }^{ n } x_{t-1}^2.  
\end{align}

\newpage

Furthermore, we need to determine the validity of the following expansion in the case of the predictive regression model. 
\begin{align}
&=
\sum_{t=1}^{ n } \epsilon_t^2 
- \frac{ \displaystyle  \left( \sum_{t = 1 }^{  \floor{ n \uptau } } y_{t-1} \epsilon_t \right)^2 }{ \displaystyle \sum_{t = 1 }^{  \floor{ n \uptau } }  y_{t-1}^2 } 
- 2 \left( \hat{\beta}_2 ( \uptau ) - \beta_1 \right) \sum_{t = \floor{ n \uptau } + 1 }^{ \floor{ n \uptau_0 } } y_{t-1} \epsilon_t
\nonumber
\\
&+
\left( \hat{\beta}_2 ( \uptau ) - \beta_1 \right)^2   \sum_{t = \floor{ n \uptau } + 1 }^{ \floor{ n \uptau_0 } } y_{t-1}^2 - 2 \left( \hat{\beta}_2 ( \uptau ) - \beta_2 \right) \sum_{t = \floor{ n \uptau_0 } + 1 }^{ \floor{ n \uptau } } y_{t-1} \epsilon_t
\nonumber
\\
&+ 
\left( \hat{\beta}_2 ( \uptau ) - \beta_2 \right)^2   \sum_{t = \floor{ n \uptau_0 } + 1 }^{ n } y_{t-1}^2.
\end{align}
Furthermore, we consider the asymptotic behaviour of the break-estimators when we are testing the stability of the model parameters of the  predictive regression model based on the IVX estimator of the predictive regression model. In particular, within our framework we implement the IVX filter proposed by PM (2009) which implies the use of a mildly integrated instrumental variable that has the following form 
\begin{align}
Z_{tn} = \sum_{ j = 0 }^{ t- 1} \left( 1 - \frac{c_z}{ n^{\delta} }   \right) \left( x_{t-j} - x_{t-j-1} \right), 
\end{align}
for some $c_z > 0$ and $0 < \delta < 1$, where $\delta$ is the exponent rate of the persistence  coefficient which corresponds to the instrumental variable. Notice that the above filtering method transforms the autoregressive process $x_t$, which can be either stable or unstable, into a mildly integrated process which is less persistent a Nearly Integrated array such as the case of $x_t$. 

\medskip

\begin{remark}
Under the alternative hypothesis, we assume the possible presence of a single structural break on the parameters of the regressors coefficients of the predictive regression model, which implies two regimes of slope coefficients. An important aspect to emphasize in our framework is that we treat the model intercept differently with respect to the other parameters when testing for structural breaks. The reason is that keeping the model intercept unchanged provides an initial condition to the predictive regression model for both regimes while focusing on testing for a possible parameter instability in the remaining parameters, simplifying also some of the derivations for the development of the asymptotic theory of the test statistics. Nevertheless, once the null hypothesis $\mathbb{H}_0$, is rejected, a practitioner needs to locate the break point or the break point fraction $\hat{\uptau}$ relative to the sample size and this is the importance of the proposed framework\footnote{Under the alternative hypothesis, one can also consider the implementation of asymptotically most powerful tests (see, \cite{choi1996asymptotically}). Specifically, for the case of nonstationary time series regression such as the autoregressive or predictive regressions a different stream of literature consider the semiparametric approach as in the studies of \cite{jansson2008semiparametric}, \cite{werker2022semiparametric} and  \cite{andersen2021consistent}. One can then consider extending those frameworks for the power envelops for break-point estimations when a possible structural break exists.}.    
\end{remark}

\newpage 

\begin{remark}
Suppose that the alternative hypothesis denoted with $H_1$ is true. Furthermore, it is well-known in mathematical statistics that the optimal test (NP test or likelihood-ratio test) is described by Newman-Pearson lemma and the critical region of this test is defined as follows: 
\begin{align}
C_{NP} ( \alpha ) = \left\{ x: \frac{ \mu_U (x) }{ \nu(x) } \leq \lambda_{\alpha}  \right\}, 
\end{align}
where $\alpha \in (0,1)$ is the significance level and the constant $\lambda_{ \alpha }$ is chosen in such a way that $\mu_U \big( C_{NP} ( \alpha ) \big) = \alpha$. Notice also that when we consider the asymptotic behaviour of the tests for large $n$ any effects occurring  due to non-randomized tests will be negligible. 
\end{remark}

\subsection{Estimating Breaks in Predictive Regressions}

\begin{example}
Consider the following predictive regression model 
\begin{align}
\label{model1}
y_t 
= 
\begin{cases}
\beta_1 x_{t-1} + \epsilon_t, & 1 \leq t \leq k_1^0, \\
\beta_2 x_{t-1} + \epsilon_t, & k_1^0 + 1 \leq t \leq k_2^0,  \\
\beta_3 x_{t-1} + \epsilon_t, & k_2^0 + 1 \leq t \leq n
\end{cases}
\ \ \ \ 
x_t 
= 
\begin{cases}
\rho_1 x_{t-1} + u_t, & 1 \leq t \leq k_1^0, \\
\rho_2 x_{t-1} + u_t, & k_1^0 + 1 \leq t \leq k_2^0,  \\
\rho_3 x_{t-1} + u_t, & k_2^0 + 1 \leq t \leq n
\end{cases}
\end{align}
with $\rho_1 = \left(  1 - \frac{c_1}{n} \right)$, $\rho_2 = \left(  1 + \frac{c_2}{n} \right)$ and $\rho_3 = \left(  1 - \frac{c_3}{n} \right)$    where $c_1, c_2, c_3 > 0$. To develop an estimation procedure for the model given by \eqref{model1}-\eqref{model2}, we first compute the difference of the residual sums of squares at the break-points $k_1^0$ and $k_2^0$. Let RSS$(\tau)$ be the residual sum of squares on the date $[ \tau n ]$ then, the following theorem can be shown to hold (see, \cite{pang2021estimating}).
\end{example}
\begin{theorem}[\cite{pang2021estimating}]
For model \eqref{model1}-\eqref{model2}, we have that 
\begin{align}
\text{RSS}(\uptau_1^0) - \text{RSS}(\uptau_2^0)= \eta_1 \left( \beta_2 - \beta_1 \right) + \eta_2 \left( \beta_3 - \beta_2 \right) + \eta_3 \left( \beta_2 - \beta_1 \right)^2 +  \eta_4 \left( \beta_3 - \beta_2 \right)^2 + \Omega_n,
\end{align} 
\begin{align*}
\text{RSS}(\uptau_1^0) - \text{RSS}(\uptau_2^0) &=
\begin{cases}
\eta_1 & =  2 \left( \displaystyle \frac{ \sum_{t=1}^{k_1^0} x_{t-1} \epsilon_t }{ \sum_{t=1}^{k_1^0} x^2_{t-1}  } - \frac{ \sum_{t=1}^{k_2^0} x_{t-1} \epsilon_t }{ \sum_{t=1}^{k_2^0} x^2_{t-1}  } \right) \displaystyle \sum_{t=1}^{k_1^0} x^2_{t-1},
\\
\eta_2 & =  2 \left( \displaystyle \frac{ \sum_{t=k_1^0 + 1}^{k_2^0} x^2_{t-1} \sum_{t=k_2^0 + 1}^{n} x_{t-1} \epsilon_t - \sum_{t=k_1^0 + 1}^{k_2^0}  x_{t-1} \epsilon_t \sum_{t=k_2^0 + 1}^{n} x^2_{t-1} }{ \sum_{t=k_1^0 + 1}^{n} x^2_{t-1}  } \right),
\\
\eta_3 & =  - \displaystyle \frac{ \sum_{t=1}^{k_1^0} x^2_{t-1}   \sum_{t=k_1^0 + 1}^{k_2^0} x^2_{t-1}}{  \sum_{t=1}^{k_2^0} x^2_{t-1}    }
\\
\eta_4 & =  - \displaystyle \frac{ \sum_{t=k_1^0 + 1}^{k_2^0} x^2_{t-1}   \sum_{t=k_2^0 + 1}^{n} x^2_{t-1}}{  \sum_{t=k_1^0 + 1}^{n} x^2_{t-1}    } 
\end{cases}
\\
\Omega_n &= \displaystyle  \frac{ \left( \sum_{t=1}^{k_2^0} x_{t-1} \epsilon_t \right)^2}{ \sum_{t=1}^{k_2^0} x^2_{t-1} }  + \frac{ \left( \sum_{t=k_2^0 + 1}^{n} x_{t-1} \epsilon_t \right)^2}{ \sum_{t=k_2^0 + 1}^{n} x^2_{t-1} } - \frac{ \left( \sum_{t=1}^{k_1^0} x_{t-1} \epsilon_t \right)^2}{ \sum_{t=1}^{k_1^0} x^2_{t-1} } - \frac{ \left( \sum_{t=k_1^0 + 1}^{n}  x_{t-1} \epsilon_t \right)^2}{ \sum_{t=k_1^0 + 1}^{n}  x^2_{t-1} }.
\end{align*}
\end{theorem}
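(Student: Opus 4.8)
This is an exact finite-sample algebraic identity rather than an asymptotic statement: both sides are deterministic functions of the realized $\{y_t,x_{t-1},\epsilon_t\}$, the break magnitudes $\beta_j$ are merely factored out, and the breaks in the $x$-process enter only through the realized regressor values, so the plan is to derive it by direct expansion. The sole ingredient is the elementary least-squares identity that, for a no-intercept fit of $y_t$ on $x_{t-1}$ over an index block $A$, the residual sum of squares equals $\sum_{t\in A}y_t^2-\big(\sum_{t\in A}x_{t-1}y_t\big)^2\big/\sum_{t\in A}x_{t-1}^2$. Applying this to $\text{RSS}(\uptau_1^0)$, whose two blocks are $[1,k_1^0]$ and $[k_1^0+1,n]$, and to $\text{RSS}(\uptau_2^0)$, whose two blocks are $[1,k_2^0]$ and $[k_2^0+1,n]$, and subtracting, the common term $\sum_{t=1}^n y_t^2$ cancels, leaving
\[
\text{RSS}(\uptau_1^0)-\text{RSS}(\uptau_2^0)
=\frac{\big(\sum_{t=1}^{k_2^0}x_{t-1}y_t\big)^2}{\sum_{t=1}^{k_2^0}x_{t-1}^2}
+\frac{\big(\sum_{t=k_2^0+1}^{n}x_{t-1}y_t\big)^2}{\sum_{t=k_2^0+1}^{n}x_{t-1}^2}
-\frac{\big(\sum_{t=1}^{k_1^0}x_{t-1}y_t\big)^2}{\sum_{t=1}^{k_1^0}x_{t-1}^2}
-\frac{\big(\sum_{t=k_1^0+1}^{n}x_{t-1}y_t\big)^2}{\sum_{t=k_1^0+1}^{n}x_{t-1}^2}.
\]

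Next I would substitute the data-generating process. On any block contained in a single regime $j$ one has $\sum_A x_{t-1}y_t=\beta_j\sum_A x_{t-1}^2+\sum_A x_{t-1}\epsilon_t$, so I would split each of the four blocks above at $k_1^0$ and/or $k_2^0$ into pure-regime pieces. Writing $a=\sum_{t=1}^{k_1^0}x_{t-1}^2$, $b=\sum_{t=k_1^0+1}^{k_2^0}x_{t-1}^2$, $c=\sum_{t=k_2^0+1}^{n}x_{t-1}^2$ and $e_1,e_2,e_3$ for the corresponding $\sum x_{t-1}\epsilon_t$ over the same three blocks, each numerator becomes a linear combination of $\beta_1,\beta_2,\beta_3$ with coefficients drawn from $\{a,b,c\}$ plus the relevant $e_i$'s; the right-hand side is then a fixed rational function of $(a,b,c,e_1,e_2,e_3,\beta_1,\beta_2,\beta_3)$ whose only denominators are $a+b$ and $b+c$.

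The substantive step is to expand the four squared ratios and collect by total degree in the break magnitudes $d_1:=\beta_2-\beta_1$ and $d_2:=\beta_3-\beta_2$, substituting $\beta_2=\beta_1+d_1$, $\beta_3=\beta_1+d_1+d_2$. Several cancellations must fall out, and verifying them is where the work lies: every monomial carrying a bare $\beta_1$ must vanish (because adding a constant to all three slopes leaves each block's fitted residuals unchanged, hence leaves every $\text{RSS}$ unchanged), the degree-two cross term $d_1d_2$ must vanish, the degree-zero part must collapse to $\Omega_n$, and the degree-one and remaining degree-two parts must reorganize into $\eta_1 d_1+\eta_2 d_2$ and the $\eta_3 d_1^2+\eta_4 d_2^2$ pieces. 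For the degree-one part one uses identities of the type $2e_1-\tfrac{2a(e_1+e_2)}{a+b}=\tfrac{2(be_1-ae_2)}{a+b}$ (and an analogous reduction over $[k_1^0+1,n]$) to recognise $\eta_1$ and $\eta_2$; for the degree-two part one uses $\tfrac{b^2}{a+b}-b=-\tfrac{ab}{a+b}$ and $c-\tfrac{c^2}{b+c}=\tfrac{bc}{b+c}$ to recognise the coefficients entering $\eta_3$ and $\eta_4$. The pure-$\epsilon$ group is $\Omega_n$ essentially by inspection, since $\Omega_n$ is exactly the original difference of four ratios with $y_t$ replaced by $\epsilon_t$.

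The main obstacle is purely combinatorial: propagating the many cross terms generated by the two common-denominator combinations $(\cdot)^2/(a+b)$ and $(\cdot)^2/(b+c)$ without a sign slip. A convenient way to manage this is to split each of the four ratios at the outset into a ``pure $\beta$'', a ``cross $\beta\times\epsilon$'', and a ``pure $\epsilon$'' summand and reconcile the three groups separately; the pure-$\epsilon$ group yields $\Omega_n$ at once, and the other two groups are then short to match against the stated $\eta_i$. The analogous identity for the two-break decomposition of $\text{RSS}(\uptau_2^0)-\text{RSS}(\uptau_1^0)$ follows by the same bookkeeping, and the multiple-break case is obtained by iterating the block splitting.
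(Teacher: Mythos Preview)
Your proposal is correct and follows essentially the same direct-expansion approach as the paper: substitute the DGP into the residual sums of squares at the two candidate splits, expand, and collect terms. The paper's proof only writes out the expansion of the first block $\sum_{t=1}^{k_1^0}(y_t-\hat\beta_1(\uptau_1^0)x_{t-1})^2$ explicitly, whereas your use of the concentrated identity $\mathrm{RSS}(A)=\sum_{A}y_t^2-\big(\sum_{A}x_{t-1}y_t\big)^2/\sum_{A}x_{t-1}^2$ together with the $(a,b,c,e_i,d_i)$ bookkeeping is a tidier way to carry out the same computation.
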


\newpage 

\begin{proof}
\begin{align}
\text{RSS}_{1,n}( \uptau_1^0 ) = \sum_{t=1}^{ k_1^0 } \bigg( y_t - \hat{\beta}_1 ( \uptau_1^0 ) x_{t-1} \bigg)^2 +  \sum_{t = k_1^0 + 1}^{ k_2^0 } \bigg( y_t - \hat{\beta}_2 ( \uptau_1^0 ) x_{t-1} \bigg)^2.
\end{align}
\begin{align*}
\sum_{t=1}^{ k_1^0 } \bigg( y_t - \hat{\beta}_1 ( \uptau_1^0 ) x_{t-1} \bigg)^2 
&=
\sum_{t=1}^{ k_1^0 } y^2_t - 2 \sum_{t=1}^{ k_1^0 } y_t \hat{\beta}_1 ( \uptau_1^0 ) x_{t-1} + \sum_{t=1}^{ k_1^0 } \hat{\beta}_1^2 ( \uptau_1^0 ) x^2_{t-1}
\end{align*}

Therefore, we have that
\begin{align*}
\sum_{t=1}^{ k_1^0 } \bigg( ... \bigg)^2 
&=
\sum_{t=1}^{ k_1^0 } \left\{ (  \beta_1 x_{t-1} + \epsilon_t )^2 - 2 (  \beta_1 x_{t-1} + \epsilon_t ) \frac{ \sum_{t=1}^{k_1^0  } (  \beta_1 x_{t-1} + \epsilon_t ) x_{t-1} }{  \sum_{t=1}^{k_1^0  } x^2_{t-1} } x_{t-1} +  \left[ \frac{ \sum_{t=1}^{k_1^0  } (  \beta_1 x_{t-1} + \epsilon_t ) x_{t-1} }{  \sum_{t=1}^{k_1^0  } x^2_{t-1} }  \right]^2 x_{t-1}^2 \right\}
\\
&= \sum_{t=1}^{ k_1^0 } \left\{  \beta_1^2 x_{t-1}^2  + 2 \beta_1 x_{t-1} \epsilon_t + \epsilon_t^2  \right\} - 2 \frac{ \left[ \sum_{t=1}^{ k_1^0 } (  \beta_1 x_{t-1} + \epsilon_t ) x_{t-1} \right]^2}{  \sum_{t=1}^{k_1^0  } x^2_{t-1}  } + \frac{ \left[ \sum_{t=1}^{ k_1^0 } (  \beta_1 x_{t-1} + \epsilon_t ) x_{t-1} \right]^2}{  \sum_{t=1}^{k_1^0  } x^2_{t-1}  }
\\
&= 
\sum_{t=1}^{ k_1^0 } \left\{  \beta_1^2 x_{t-1}^2  + 2 \beta_1 x_{t-1} \epsilon_t + \epsilon_t^2  \right\} -  \frac{ \left[ \sum_{t=1}^{ k_1^0 } (  \beta_1 x_{t-1} + \epsilon_t ) x_{t-1} \right]^2}{  \sum_{t=1}^{k_1^0  } x^2_{t-1}  }
\\
&= 
\beta_1^2 \sum_{t=1}^{ k_1^0 } x_{t-1}^2 + 2 \beta_1 \sum_{t=1}^{ k_1^0 } x_{t-1} \epsilon_t + \sum_{t=1}^{ k_1^0 } \epsilon_t^2 - \beta_1^2 \sum_{t=1}^{ k_1^0 } x_{t-1}^2 - \beta_1^2 \frac{ \left( \sum_{t=1}^{k_1^0} x_{t-1} \epsilon_t \right)^2 }{ \sum_{t=1}^{k_1^0} x^2_{t-1} }
\\
&= 
2 \beta_1 \sum_{t=1}^{ k_1^0 } x_{t-1} \epsilon_t  + \sum_{t=1}^{ k_1^0 } \epsilon_t^2 - \beta_1^2 \frac{ \left( \sum_{t=1}^{k_1^0} x_{t-1} \epsilon_t \right)^2 }{ \sum_{t=1}^{k_1^0} x^2_{t-1} }
\end{align*}
\end{proof}

\subsection{Break Estimation Procedure}

The case study we illustrate in the previous section is based on the specific approach proposed by \cite{pang2021estimating}. In this article since the estimation methodology of model coefficients is taken into consideration, then we need to establish the asymptotic theory separately for the break-point estimator which based on the OLS optimization versus the break-point IVX optimization, in order to evaluate the consistency and statistical properties of the estimated break-point. Our objective in is to evaluate the properties of $k_1$ which is an estimator of the location of the break-point $k_1^0$ in the slope parameters and intercept of the predictive regression model (see, \cite{kostakis2015robust}).

\newpage

In particular, the estimator of the break-point is obtained by minimizing the concentrated sum of squares errors function as
\begin{align}
S_{1n} (k) = \sum_{t=1}^k \big( y_t - x_t^{\prime} \hat{\beta}_1 (k) \big)^2 + \sum_{t=k+1}^n \big( y_t - x_t^{\prime} \hat{\beta}_2 (k) \big)^2     
\end{align}
where $\hat{\beta}_1 (k)$ and $\hat{\beta}_2 (k)$ denote the least squares estimators of the slope parameters within each regime for given $k$. Alternatively, we can also reformulate $\hat{k}_1$ as below
\begin{align}
\hat{k}_1 = \mathsf{arg \ max}_k \ G_{1n} (k) \ \ \ \text{where} \ \ S_n - S_{1n} (k) \ \ \text{and} \ \ S_n = \sum_{t=1}^n \big( y_t - x_t^{\prime} \hat{\beta} \big)^2      
\end{align}
where $S_n$ denotes the full sample sum of squared errors. Thus, we are interested to establish the weak consistency of $\hat{\uptau} = \hat{k}_1/n$ under a certain set of assumptions. This formulation allows us to establish the weak consistency of the break fraction $\hat{\uptau} = \hat{k}_1 / n$ under certain set of assumptions. Thus, to have more meaningful power comparisons we study in more details the asymptotic behaviour of the break estimators under the alternative hypothesis based on the two estimators. To do this, we follow the methodology proposed with the framework of \cite{pang2021estimating}, but in our setting we focus in the case of a single unknown break-point. We consider separately the break-point estimator based on the OLS versus the IVX estimators to evaluate the consistency of the estimated break-point and obtain corresponding convergence rates\footnote{The finite sample properties of $\hat{\uptau}_1$ is important not only because of the direct economic implications that the accurate dating of a structural break in the mean may entail but also for the subsequent analysis which could involve the search for further breaks in the variance, typically based on the residual sequence (see, \cite{pitarakis2004least}).}.

\subsubsection{OLS-based Break-Point Date Estimator for a Single Structural Break}

\textbf{Step 1:} For any given $0 < \tau < 1$, denote with 
\begin{align}
\hat{\beta}_j ( \uptau) = \frac{ \displaystyle \sum_{t=1}^{[\uptau n] } y_t x_{t-1} }{  \displaystyle \sum_{t=1}^{[\uptau n] } x^2_{t-1} } \ \ \text{and} \ \ \hat{\beta}_3 ( \uptau) = \frac{ \displaystyle \sum_{t=[\uptau n]  + 1}^{n} y_t x_{t-1} }{  \displaystyle \sum_{t=[\uptau n]  + 1}^{n} x^2_{t-1} } 
\end{align}
Then the change-point estimator of $\uptau_2^0$ is defined as below
\begin{align}
\hat{\uptau}_{2,n} = \underset{ \uptau \in [0,1] }{ \text{argmin} } \ \text{RSS}_{2,n}(\uptau),
\end{align}
where 
\begin{align}
\text{RSS}_{2,n}( \uptau ) = \sum_{t=1}^{[\uptau n] } \bigg( y_t - \hat{\beta}_j ( \uptau) x_{t-1} \bigg)^2 +  \sum_{t=[\uptau n] + 1}^{n} \bigg( y_t - \hat{\beta}_3 ( \uptau) x_{t-1} \bigg)^2.
\end{align}

\newpage

Once we obtain $\hat{\uptau}_{2,n}$, the least squares estimator of $\beta_3$ is represented by $\hat{\beta}_3^{OLS} \left( \hat{\uptau}_{2,n} \right)$ and the OLS of $k_2^{0}$ is denoted by $\hat{k}_2 = [ \hat{\uptau}_{2,n} n ]$. 

\medskip
 
\textbf{Step 2:} For any given $0 < \tau < \hat{\tau}_{2,n}$, the OLS estimators of the parameters $\beta_1$ and $\beta_2$ are given by 
 \begin{align}
\hat{\beta}_1 ( \uptau) = \frac{ \sum_{t=1}^{[\uptau n] } y_t x_{t-1} }{  \sum_{t=1}^{[\uptau n] } x^2_{t-1} } \ \ \text{and} \ \ \hat{\beta}_2 ( \uptau) = \frac{ \sum_{t=[\uptau n]  + 1}^{ \hat{k}_2 } y_t x_{t-1} }{  \sum_{t=[\uptau n]  + 1}^{ \hat{k}_2 } x^2_{t-1} } 
\end{align}
respectively. Then the change-point estimator of $\uptau_1^0$ is defined as below 
\begin{align}
\hat{\uptau}_{1,n} = \underset{ \uptau \in [0, \hat{\uptau}_{2,n}  ] }{ \text{argmin} } \ \text{RSS}_{1,n}(\uptau),
\end{align}
where 
\begin{align}
\text{RSS}_{1,n}( \uptau ) = \sum_{t=1}^{[\uptau n] } \bigg( y_t - \hat{\beta}_1 ( \uptau) x_{t-1} \bigg)^2 +  \sum_{t=[\uptau n] + 1}^{\hat{k}_2} \bigg( y_t - \hat{\beta}_2 ( \uptau) x_{t-1} \bigg)^2.
\end{align}

Once we obtain $\hat{\uptau}_{1,n}$, the least squares estimator of $\beta_1$ and $\beta_2$ are represented by $\hat{\beta}_1^{OLS} \left( \hat{\uptau}_{1,n} \right)$ and $\hat{\beta}_2^{OLS} \left( \hat{\uptau}_{1,n} \right)$ respectively, and the OLS of $k_1^{0}$ is denoted by $\hat{k}_1 = [ \hat{\uptau}_{1,n} n ]$. 

\medskip

\begin{remark}
The break estimators that correspond to the OLS and IVX estimation, can be helpful to investigate which break-point will be identified first under abstract degree of persistence in predictive regression models. Therefore, after showing that the estimators $\hat{\uptau}^{OLS}_{1,n}$ and $\hat{\uptau}^{IVX}_{1,n}$ are consistent estimators of the true break-point $\pi_0$, we can determine their convergence rates and obtain useful performance statistics, under the alternative hypothesis, such as the average run length or the power loss function.  Thus, is important to examine also the convergence rates and asymptotic behaviour of the two estimators when under the alternative hypothesis of parameter instability, we obtain an estimator of the break point fraction and also modified parameter estimates which are functions of the estimated break point fraction instead of the initial sample size. 
\end{remark}

\subsubsection{IVX-based Break-Point Date Estimator for a Single Structural Break}

\textbf{Step 1:} For any given $0 < \tau < 1$, denote with 
\begin{align}
\hat{\beta}_j^{IVX} ( \uptau) = \frac{ \displaystyle \sum_{t=1}^{[\uptau n] } y_t \tilde{z}_{t-1} }{ \displaystyle  \sum_{t=1}^{[\uptau n] } \tilde{z}^2_{t-1} } \ \ \text{and} \ \ \hat{\beta}_3^{IVX}  ( \tau) = \frac{\displaystyle   \sum_{t=[\uptau n]  + 1}^{n} y_t \tilde{z}_{t-1} }{ \displaystyle   \sum_{t=[\uptau n]  + 1}^{n} \tilde{z}^2_{t-1} } 
\end{align}

\newpage

Then the change-point estimator of $\uptau_2^0$ is defined as below
\begin{align}
\hat{\uptau}_{2,n} = \underset{ \uptau \in [0,1] }{ \text{argmin} } \ \text{RSS}_{2,n}(\uptau),
\end{align}
where 
\begin{align}
\text{RSS}_{2,n}( \uptau ) = \sum_{t=1}^{[\uptau n] } \bigg( y_t - \hat{\beta}_j^{IVX} ( \uptau) \tilde{z}_{t-1} \bigg)^2 +  \sum_{t=[\uptau n] + 1}^{n} \bigg( y_t - \hat{\beta}_3^{IVX} ( \uptau) \tilde{z}_{t-1} \bigg)^2.
\end{align}
Once we obtain $\hat{\uptau}_{2,n}$, the IVX estimator of $\beta_3$ is represented by $\hat{\beta}_3^{IVX} \left( \hat{\uptau}_{2,n} \right)$ and the IVX of $k_2^{0}$ is denoted by $\hat{k}_2^{IVX} = [ \hat{\uptau}_{2,n} n ]$. 
 
\medskip

\textbf{Step 2:} For any given $0 < \tau < \hat{\tau}_{2,n}$, the IVX estimators of the parameters $\beta_1$ and $\beta_2$ are given by 
 \begin{align}
\hat{\beta}_1^{IVX} ( \uptau) = \frac{  \displaystyle \sum_{t=1}^{[\uptau n] } y_t \tilde{z}_{t-1} }{ \displaystyle \sum_{t=1}^{[\uptau n] } \tilde{z}^2_{t-1} } \ \ \text{and} \ \ \hat{\beta}_2^{IVX} ( \uptau) = \frac{ \displaystyle \sum_{t=[\uptau n]  + 1}^{ \hat{k}_2^{IVX} } y_t \tilde{z}_{t-1} }{ \displaystyle \sum_{t=[\uptau n]  + 1}^{ \hat{k}_2^{IVX} } \tilde{z}^2_{t-1} } 
\end{align}
respectively. Then the change-point estimator of $\uptau_1^0$ is defined as below 
\begin{align}
\hat{\uptau}_{1,n} = \underset{ \uptau \in [0, \hat{\uptau}_{2,n}  ] }{ \text{argmin} } \ \text{RSS}_{1,n}(\uptau),
\end{align}
\begin{align}
\text{RSS}_{1,n}( \uptau ) = \sum_{t=1}^{[\uptau n] } \bigg( y_t - \hat{\beta}_1^{IVX} ( \uptau) \tilde{z}_{t-1} \bigg)^2 +  \sum_{t=[\uptau n] + 1}^{\hat{k}_2} \bigg( y_t - \hat{\beta}_2^{IVX} ( \uptau) \tilde{z}_{t-1} \bigg)^2.
\end{align}
Once we obtain $\hat{\uptau}_{1,n}$, IVX estimator of $\beta_1$ and $\beta_2$ are represented by $\hat{\beta}_1^{IVX} \left( \hat{\uptau}_{1,n} \right)$ and $\hat{\beta}_2^{IVX} \left( \hat{\uptau}_{1,n} \right)$ respectively, and the IVX of $k_1^{0}$ is denoted by $\hat{k}_1^{IVX} = [ \hat{\uptau}_{1,n} n ]$. 

\medskip

\begin{remark}
Therefore, based on the break estimators above we want to investigate which break-point will be identified first under abstract degree of persistence using the two estimators. The break estimators that correspond to the OLS and IVX estimation, can be helpful to investigate which break-point will be identified first under abstract degree of persistence in predictive regression models. After showing that the estimators $\hat{\uptau}^{OLS}_{1,n}$ and $\hat{\uptau}^{IVX}_{1,n}$ are consistent estimators of the true break-point $\pi_0$, we can determine their convergence rates and obtain useful performance statistics, under the alternative hypothesis, such as the average run length or the power loss function.  
\end{remark}

\newpage 

\subsection{Main Aspects on Predictive Regression Model}
\label{sec2}

Therefore, notice that each of the covariates of the predictive regression are modelled using the autoregressive process $x_t = \rho_n x_{t-1} + v_t$, $x_0 = 0$. Specifically, when $\rho_n = \rho$ such that $| \rho | < 1$ then, $x_t$ is a stationary weakly dependent process. However, the present study focuses on cases where $x_t$ is nonstationary. In particular, in these cases $\rho_n = \left( 1 + \frac{c}{n}  \right)$.  More precisely, if the autoregressive parameter is fixed with $\rho_n = \rho$ and $| \rho | < 1$, then $x_t$ is asymptotically stationary and weakly dependent. Therefore, this framework helps to unify structural break testing in predictive regression models in cases where the properties of the predictor is not known thus offering robustness to integration order \citep{duffy2021estimation}.
\begin{align}
\label{model1}
y_{t}   &= \mu + \boldsymbol{\beta}^{\prime}  \boldsymbol{x}_{t-1} + u_{t}, \ \ \ \ \ \text{for} \ \ 1 \leq t \leq n,  
\\
\label{model2}
\boldsymbol{x}_t &= \left( \boldsymbol{I}_p - \frac{ \boldsymbol{C}_p }{ n^{ \gamma } } \right) \boldsymbol{x}_{t-1} + \boldsymbol{v}_{t}, 
\end{align} 
where $Y_{t} \in \mathbb{R}$ is an 1$-$dimensional vector and $\boldsymbol{x}_t \in \mathbb{R}^{p \times n}$ is a $p-$dimensional vector of local unit root regressors, with an initial condition $\boldsymbol{x}_0 = 0$. Moreover, $\boldsymbol{C} = \mathsf{diag} \{ c_1,...,c_p \}$ is a $p \times p$ diagonal matrix which determines the degree of persistence of the regressors by the unknown persistence coefficients $c_i$'s which are assumed to be positive constants. Define with $\boldsymbol{\eta}_t = \left( u_{t}, \boldsymbol{v}_{t}^{\prime} \right)^{\prime}$. Then, the partial sum process constructed from $\boldsymbol{\eta}_t $ satisfies a multivariate invariance principle. That is, for $r \in [0,1]$ and as $n \to \infty$ we have (where $\Rightarrow$ denotes weak convergence in distribution), 
\begin{align}
X_n(r) \Rightarrow n^{- 1/ 2} \sum_{t=1}^{ \floor{nr} } B(r)
\end{align}
where B(r) is a $p-$dimensional Brownian motion with covariance matrix 
\begin{align}
\boldsymbol{\Omega} 
=
\underset{ n \to \infty }{ \mathsf{lim} } \frac{1}{n} \mathbb{E} \left[ \left( \sum_{t=1}^n \boldsymbol{\eta}_t  \right) \left( \sum_{t=1}^n \boldsymbol{\eta}_t^{\prime} \right) \right] 
\equiv 
\underset{ n \to \infty }{ \mathsf{lim} } \frac{1}{n} \sum_{t=1}^n \sum_{j=1}^n \mathbb{E} \big[ \boldsymbol{\eta}_t \boldsymbol{\eta}_t^{\prime} \big]
\end{align}
Now $\boldsymbol{\Omega}$ and $\boldsymbol{B}(r)$ are partitioned as below
\begin{align}
\boldsymbol{\Omega}
&:= 
\begin{bmatrix}
\omega_{uu} & \omega_{vu}^{\prime}
\\
\omega_{uv} & \boldsymbol{\Omega}_{vv}
\end{bmatrix} = \boldsymbol{\Sigma} + \boldsymbol{\Lambda} +  \boldsymbol{\Lambda}^{\prime},
\ \ \ \
\boldsymbol{B}(r)
:=
\begin{bmatrix}
B_u(r)
\\
B_v(r)
\end{bmatrix}
\ \ \
\Omega_{\varepsilon} 
= 
\frac{1}{n} \sum_{t=1}^n \hat{\varepsilon}_t \hat{\varepsilon}_t^{\prime} + \frac{2}{n} \sum_{j=0}^n w \left( \frac{j}{M} \right) \sum_{t=j+1}^n \hat{\varepsilon}_t \hat{\varepsilon}_{t-j}^{\prime} ,
\end{align}
with partitions given by 
\begin{align}
\Sigma_{\epsilon} 
&= 
\frac{1}{n} \sum_{t=1}^n \hat{\varepsilon}_t \hat{\varepsilon}_t^{\prime}
\\
\Lambda_{\epsilon} 
&= 
\frac{1}{n} \sum_{j=0}^n w \left(  \frac{j}{M} \right) \sum_{t=j+1}^n \hat{\varepsilon}_t \hat{\varepsilon}_{t-j}^{\prime} 
\end{align}

\newpage 

\begin{remark}
Notice that due to the LUR specification of the autocorrelation matrix $\boldsymbol{R}_n$, the process $\boldsymbol{x}_t$ represents a restricted VAR model. When $\boldsymbol{x}_t$ is an unrestricted VAR model and the regressors of the predictive regression model has the same lag as the regressand then the system corresponds to the cointegrating predictive regression and thus different VAR representation theory is needed to handle the possible near-nonstationary and near-explosive components. We assume that the innovation sequence is a stationary vector and therefore we can examine the nonstationary properties of the predictors. However, it is important to emphasize that we refer to "disease-free equilibrium" (steady states) which imply the existence of such cointegrating relationships. 
\end{remark}

\begin{itemize}

\item In other words, although the limiting distributions are nonstandard and nonpivotal bootstrap-based simulation methodologies can be employed to obtain critical values or to obtain p-values. In particular, these limiting distributions are determined by sup functionals of Gaussian processes or  functionals which depends on the localizing coefficient of persistence. 

\item Furthermore, comparing the performance of the OLS with the IVX estimators as the the value of the localizing coefficient increases, that is, we move away from the unit boundary for low degrees of endogeneity (low values of the correlation between the innovation sequences) then both have about similar empirical size. However, as we move closer to the unity boundary, that is, for nearly-integrated regressors with high endogeneity we can clearly see that for the IVX estimator we obtain empirical size closer to the nominal size, while for the OLS estimator the empirical size is almost double or even three times the nominal size. In other words, the IVX estimators corrects the endogeneity bias under high persistence which we would get if we relied on inference based on the OLS estimator.  
    
\item  Note that in the case that $\omega_{uv} \neq 0$ the regressors are endogenous and in addition to regressor endogeneity the setting also allows for relatively unrestricted forms of serial correlation of the errors $\boldsymbol{\eta}_t$. These two aspects in general necessitate some form of modified least squares estimation in conjunction with HAC arguments to allow for the development of standard asymptotic inference.   

\item It is well known that in cointegrating regressions the OLS estimator is consistent despite the fact that the regressors are allowed to be endogenous and the errors are allowed to be serially correlated. However, the limiting distribution of the OLS estimator is contaminated by second order bias terms, reflecting the correlation structure between the regressors and the errors (see the paper: Tuning parameter free inference in Cointegrating Regressions). 

\item The literature provides several estimators which overcome this difficulty at the cost of tuning parameter choices such as: the number of leads and lags for the Dynamic OLS (D-OLS) estimator, kernel and bandwidth choice for the fully modified OLS estimator and the canonical cointegrating regression estimator. Such tuning parameters are often difficult to choose in practice and the finite sample performance of the estimators and tests based upon them often reacts sensitively to their choices.   

\item In contrast to the aforementioned approaches, the integrated modified OLS (IM-OLS) estimator avoids the choice of tuning parameters. However, standard asymptotic inference based on the IM-OLS estimator does require the estimation of a long-run variance parameter. Therefore, this is typically achieved by non-parametric kernel estimators, which necessitate kernel and bandwidth choices. In particular, to capture their effects in finite samples, V and Wagner (2014) propose fixed-b theory for obtaining critical values. However, their simulation results reveal that when endogeneity and/or error serial correlation is strong, a large sample size is needed for the procedure to yield reasonable sizes.  
    
\end{itemize}

\begin{wrap}

\color{blue}
In the literature various studies consider testing for stationarity especially for highly autocorrelated time series. Similarly, in the current study we investigate the size and power properties of structural break tests when the series are strongly dependent (high persistent) in a local-to-unity asymptotic framework. 

Standard tools in the literature for testing whether the largest autoregressive root is near the unit boundary  are unit root tests, which test the null hypothesis of a unit root against the alternative hypothesis of stationarity.

It is worth pointing out that the local-to-unity framework, developed by \cite{chan1987asymptotic1} and \cite{phillips1987towards}, generates relevant asymptotics for series whose dynamics are dominated by a large autoregressive root, that is, nearly integrated processes. Therefore, the particular asymptotic theory which is commonly employed when deriving the limiting distribution of test statistic and estimators of nonstationary time series models,  gives much more accurate approximations to small sample distributions compared to  standard asymptotics when the largest autoregressive root $\rho$ of a series is such that $n ( 1 - \rho )$ is smaller, than, say 30, where $n$ is the sample size. 
\color{black}

\color{magenta}
Therefore, in the current study we are interested to investigate the behaviour of structural break test statistics upon the estimator of the long-run variance in local-to-unity asymptotics.  According to \cite{muller2005size} more specifically, there are two key conjectures: First, estimators of the long-run variance that employ a bandwidth that goes to infinity more slowly that the sample size lead to tests of stationarity that reject even highly mean reverting series with probability one for a large enough sample size. Second, for some estimators of $\lambda$ that employ a bandwidth of the same order as the sample size, the resulting tests of stationarity do reject more often for less mean reverting series, but the exact properties depend crucially on which estimator $\hat{\lambda}$ is used.     

Generally speaking it is well understood that there cannot exist a statistic that perfectly discriminates between stationary and integrated processes in the local-to-unity framework - in fact, much of the appeal of this asymptotic device stems precisely from the fact that discrimination remains difficult even as the sample size increases without bound. 

\bigskip

Therefore, if we would compare the two proposed test statistics, that is, the sup OLS-Wald versus the sup IVX-Wald detectors for highly persistent data, that is, when the local-to-unity specification induces a value near the unit boundary then any conjectures regarding the performance of the tests based solely on this region of the parameter space, would most likely not give  us universally representative conclusions. In other words, the near observational equivalence between models with $\rho$ very close to unity and $\rho = 1$ makes it impossible to obtain the ideal asymptotic rejection profile. 

On the other hand, considering asymptotically optimal tests can provide statistical advantages. More precisely, point-optimal statistics, based on the Neyman-Pearson Lemma, optimally discriminate between a fixed level of mean reversion and no mean reversion. Thus, the asymptotic optimality property of the statistics requires Gaussian disturbances, but allows for unknown and very general correlations.



\medskip

\color{blue}
In particular in the paper of Cavaliere et al (2017) the authors show that the limiting null distributions of these test statistics are shown to be non-pivotal under heteroscedasticity, while tat of a robust Wald statistic, which is based around a sandwich estimator of the variance, is pivotal. Furthermore, they show that wild bootstrap implementations of the tests deliver asymptotically pivotal inference under the null. In particular, they demonstrate the consistency and asymptotic normality of the bootstrap estimators, and further establish the global consistency of the asymptotic and bootstrap tests under fixed alternatives. Monte Carlo simulations highlight significant improvements in finite sample behaviour using the bootstrap in both heteroscedastic and homoscedastic environments. 

\color{black}

\bigskip

\color{magenta}
In terms of the bootstrap procedure we apply the wild bootstrap to the corresponding unrestricted residuals which is the common practice in the bootstrap hypothesis testing literature. Moreover, in terms of testing methodology we consider the Wald test statistic under the assumption of homoscedastic Gaussian innovations, as wel as a heteroscedasticity-robust version of the Wald test implemented using a \cite{white1980heteroskedasticity} sandwich-type robust standard error. 
\color{black}

\paragraph{Algorithm 1}

The restricted Wild Bootstrap procedure is given as below

\begin{itemize}

\item[\textbf{Step 1.}]  Estimated the model using the Gaussian QML (unrestricted) which yield the estimate $\hat{\theta}$ together with the corresponding residuals such that $\hat{\varepsilon}_t := \varepsilon_t ( \hat{\theta} )$.

\

\item[\textbf{Step 2.}] Compute the centered residuals $\hat{\varepsilon}^{\mu}_t := \hat{\varepsilon}_t -  \frac{1}{n} \sum_{t=1}^n \varepsilon_t $ and construct the bootstrap errors such that $\hat{\varepsilon}_t^{*} = \hat{\varepsilon}^{\mu}_t \times w_t$, for $t = 1,...,n$. 

\

\item[\textbf{Step 3.}] Construct the bootstrap sample $\left\{ X_t^{*} \right\}$ from the bootstrap process
\begin{align}
X_t^{*}  = f( .,. ) u_t^{*}, \ \ \ t = 1,...,n.
\end{align}

\

\item[\textbf{Step 3.}] Using the bootstrap sample, $\left\{  X_t^{*} \right\}$, compute the bootstrap test statistic $S_n^{*}$ for testing the null hypothesis of linear restrictions. Define the corresponding $p-$value as $P_n^{*} := 1 - G_n^{*} ( S_n )$ with $G_n^{*} ( . )$ denoting the conditional (on the original data) cdf of $S_n^{*}$. Then, the wild bootstrap robust Wald statistic, denoted as $RW_t^{*}$, and associated $p-$value is defined analogously. Then, the wild bootstrap test of $H_0$ against $H_1$ at significance level $\alpha$ rejects the null hypothesis if $P_n^{*} \leq \alpha$. 

\end{itemize}

\color{red}
Notice that referring to the degree of persistence as a nuisance parameter might be nonstandard because $c_i$ is itself a random variable. Furthermore, the structural break setting of our paper assumes that the model coefficient remain constant over fixed intervals of time, therefore the use of the structural break modelling approach.     
\color{black}

\color{red}
In particular, when $\theta$ is a scalar parameter, the asymptotic bootstrap distribution of $\hat{\theta}^{*}$, is centred around $\hat{\theta}$ and bootstrap confidence intervals can be based on the empirical quantiles from the bootstrap distribution of $\hat{\theta}^{*}$, conditional on the original data. Specifically, letting $\hat{\theta}_{\alpha}^{*}$ denote the $\alpha \%$ quantile of the conditional bootstrap distribution of $\hat{\theta}^{*}$, then the asymptotic $( 1 - \alpha ) \%-$level naive (or basic) and percentile bootstrap confidence intervals are given by 
\begin{align}
\left[ 2 \hat{\theta} - \hat{\theta}_{( 1 - \frac{\alpha}{2}   )}^{*} ; 2 \hat{\theta} - \hat{\theta}_{(\frac{\alpha}{2})}^{*} \right] \ \ \ \text{and} \ \ \ \left[ \hat{\theta}_{( 1 - \frac{\alpha}{2}   )}^{*} ; \hat{\theta}_{(\frac{\alpha}{2})}^{*} \right]
\end{align}  
respectively. Alternatively, the studentized bootstrap confidence interval can be constructed from the associated $t-$statistics of the bootstrap estimates. 
\color{black}

\bigskip

\color{blue}
If $x_t$ and $y_t$ are both $I(0)$ and if $x_t$ and $\epsilon_{2t}$ are correlated, any OLS estimators, do not yield consistent estimation of the the slope parameter. More specifically, the OLS estimator is not robust to integration orders when the error is correlated. In particular, it can be said that the co-integration among $I(0)$ variables extends the applicability of OLS beyond that in the $I(0)$ variables by freeing us from worry about the correlation between the regressor and disturbances. Furthermore, the instrumental variable estimator has been used for the regression model with a correlated error when the variables are $I(0)$. Therefore, the instrumental variable estimator applied to the cointegrated regression with a correlated error is $n-$consistent, but some modifications are required to make it mixed Gaussian. 

For example, consider the model $y_t = \boldsymbol{b}^{\prime} \boldsymbol{x}_t + \varepsilon{\epsilon}_{2t}$ with a $k-$element vector $\boldsymbol{x}_t$, when $\left\{ \boldsymbol{x}_t \right\}$ is co-integrated.    

\color{black}

In particular, the literature for the case of local-to-unity specification focuses on the endogeneity issue.  
\begin{example}
\begin{align}
y_t &= \mu_y + \beta x_{t-1} + u_t
\\
x_t &= \mu_x + \rho x_{t-1} + v_t
\end{align}

\medskip

The innovation sequence $\eta_t := \big( u_t, v_t \big)^{\prime}$ is assumed to form a (bivariate) martingale difference sequence with respect to the natural filtration $\mathcal{F}_t = \sigma \big( \eta_t, \eta_{t-1},...  \big)$ with covariance matrix given by the following expression 
\begin{align}
\mathbb{E} = \left( \eta_t \eta_t^{\prime} \| \mathcal{F}_{t-1} \right)
=
\begin{bmatrix}
\sigma_x^2 & \sigma_{xy} 
\\
\sigma_x^2 & \sigma_{y}^2  
\end{bmatrix}
\end{align}
Then the $t-$statistic is given by 
\begin{align}
\mathcal{n} :=  \frac{ \displaystyle  \sum_{t=1}^n \big( x_{t-1} - \hat{\mu}_x \big) \big( y_t - \hat{\mu}_y  \big) }{ \displaystyle  \sqrt{ \hat{\sigma}_u^2  \sum_{t=1}^n \big( x_{t-1} - \hat{\mu}_x \big)^2 }  }  
\end{align} 
\end{example}

\medskip

\color{red}
Furthermore, we can consider the dating of the explosive episode such that for $\hat{\tau}_e = \floor{ n \hat{r}_e }$ with $\hat{r}_e$ being equal to either 
\begin{align}
\hat{r}^{ \rho }_e = \underset{ s \geq r_0 }{ \mathsf{inf} } \big\{ s: DF_s^{\rho} > c \nu_{ \theta_n }^{ \rho } (s) \big\} \ \ \text{or} \ \ \ \hat{r}^{ t }_e = \underset{ s \geq r_0 }{ \mathsf{inf} } \big\{ s: DF_s^t > c \nu_{ \theta_n }^t (s) \big\} 
\end{align}
where $\nu_{ \theta_n }^{ \rho } (s)$ and $\nu_{ \theta_n }^t (s)$ denote the right hand side 100 $ \theta_n \%$ critical values of two statistics based on $\lambda_s = \floor{ ns }$ observations and $\theta_n$ is the size of the one-sided test. 
\color{black}

\end{wrap}

\section{Background Main Results}   
   
\begin{lemma}
Consider the following predictive regression model 
\begin{align}
\label{model1}
y_t 
= 
\begin{cases}
\beta_1 y_{t-1} + \epsilon_t, & 1 \leq t \leq k_1^0, \\
\beta_2 y_{t-1} + \epsilon_t, & k_1^0 + 1 \leq t \leq k_2^0,  \\
\beta_3 y_{t-1} + \epsilon_t, & k_2^0 + 1 \leq t \leq n
\end{cases}
\end{align}
Under Assumptions C1-C4, the following results hold jointly
\begin{align}
\frac{1}{n} \sum_{t=1}^{ k_1^0 } y_{t-1} u_t 
\Rightarrow 
\frac{\sigma^2 }{2} \left( W^2 ( \tau_1^0 ) - \tau_1^0 \right),
\ \ \
\frac{1}{n^2} \sum_{t=1}^{ k_1^0 } y_{t-1}^2 
\Rightarrow 
\sigma^2 \int_0^{ \tau_1^0 } W^2 (s) ds, 
\ \ \ 
\frac{ y_{ k_1^0 } }{ \sqrt{n} } 
\Rightarrow \sigma W \left(  \tau_1^0 \right)
\end{align}
\end{lemma}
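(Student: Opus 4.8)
The plan is to reduce all three statements to the invariance principle for the partial sums of $\epsilon_t$ restricted to the first regime, together with a law of large numbers for $\epsilon_t^2$, and then to assemble the joint limit through a single application of the continuous mapping theorem. First I would observe that for every $t\le k_1^0$ the lagged observation $y_{t-1}$ is generated entirely by the first-regime recursion, so that with the unit root $\beta_1=1$ and initialization $y_0=0$ we have $y_{t-1}=\sum_{i=1}^{t-1}\epsilon_i$; in particular the later breaks at $k_1^0$ and $k_2^0$ are irrelevant to every sum appearing in the lemma, and the entire argument takes place "inside" $[0,\tau_1^0]$. Under Assumptions C1--C4 the Donsker-type invariance principle (the partial-sum FCLT underlying Corollary~\ref{FCLT}) gives, with $\mathcal{V}_n(r):=n^{-1/2}\sum_{i=1}^{\floor{nr}}\epsilon_i$, the weak convergence $\mathcal{V}_n\Rightarrow\sigma W$ in $D[0,1]$, while a LLN/ergodic argument gives $n^{-1}\sum_{i=1}^{\floor{nr}}\epsilon_i^2\overset{p}{\to}\sigma^2 r$ uniformly in $r\in[0,1]$. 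Evaluating the first statement at $r=\tau_1^0$ and using $k_1^0=\floor{\tau_1^0 n}$ yields $y_{k_1^0}/\sqrt n\Rightarrow\sigma W(\tau_1^0)$, the third assertion.

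For the quadratic term I would write $n^{-2}\sum_{t=1}^{k_1^0}y_{t-1}^2=\int_0^{\floor{\tau_1^0 n}/n}\bigl(\mathcal{V}_n(s)\bigr)^2\,ds$, where the difference between $\floor{\tau_1^0 n}/n$ and $\tau_1^0$ is $o_p(1)$ and is absorbed using uniform tightness of $\mathcal{V}_n$; since $h\mapsto\int_0^{\tau_1^0}h(s)^2\,ds$ is continuous on $D[0,1]$ in the uniform metric, the continuous mapping theorem delivers the limit $\sigma^2\int_0^{\tau_1^0}W(s)^2\,ds$. For the cross term I would invoke the algebraic identity $2\sum_{t=1}^{m}y_{t-1}\epsilon_t=y_m^2-y_0^2-\sum_{t=1}^{m}\epsilon_t^2$ with $m=k_1^0$; dividing by $n$ and combining $y_{k_1^0}^2/n\Rightarrow\sigma^2 W^2(\tau_1^0)$ with $n^{-1}\sum_{t=1}^{k_1^0}\epsilon_t^2\overset{p}{\to}\sigma^2\tau_1^0$ gives $n^{-1}\sum_{t=1}^{k_1^0}y_{t-1}\epsilon_t\Rightarrow\tfrac12\bigl(\sigma^2 W^2(\tau_1^0)-\sigma^2\tau_1^0\bigr)$, which is the first assertion.

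Finally, for the joint statement I would carry all three functionals through one application of the continuous mapping theorem to the pair $\bigl(\mathcal{V}_n(\cdot),\,n^{-1}\sum_{i=1}^{\floor{n\cdot}}\epsilon_i^2\bigr)$, whose joint weak limit is $\bigl(\sigma W(\cdot),\,\sigma^2(\cdot)\bigr)$ with the second coordinate deterministic; hence joint convergence reduces to marginal convergence of the first coordinate together with Slutsky's theorem, so no separate joint-tightness argument beyond that for $\mathcal{V}_n$ is needed. The main obstacle I anticipate is making the reductions in the quadratic and cross terms fully rigorous: one must control the endpoint mismatch $\floor{\tau_1^0 n}$ versus $\tau_1^0 n$ uniformly, and verify that the functionals $h\mapsto\int_0^{\tau_1^0}h^2$, $h\mapsto h(\tau_1^0)^2$ and the identity-driven combination are simultaneously continuous at (a.s.\ paths of) Brownian motion, so that the continuous mapping theorem can be applied to the vector of functionals at once and the stated joint convergence — rather than three separate marginal convergences — is obtained.
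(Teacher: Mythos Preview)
Your overall architecture---reduce everything to the partial-sum process $\mathcal V_n(r)=n^{-1/2}\sum_{i\le\floor{nr}}\epsilon_i$, apply the FCLT, and read off the three limits via the continuous mapping theorem---is exactly the route the paper takes. The paper's proof of part (a) likewise writes $y_{t-1}$ as a partial sum and then invokes ``standard results in the unit root literature'' to obtain $n^{-1}\sum_{t\le k_1^0}y_{t-1}u_t\Rightarrow\sigma^2\int_0^{\tau_1^0}W\,dW=\tfrac{\sigma^2}{2}(W^2(\tau_1^0)-\tau_1^0)$. Your use of the telescoping identity $2\sum y_{t-1}\epsilon_t=y_m^2-y_0^2-\sum\epsilon_t^2$ is a clean elementary substitute for the stochastic-integral citation and gives the same answer; that is a stylistic, not substantive, difference.

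There is one genuine omission. In the setting behind Assumptions C1--C4 (the Pang et al.\ framework the paper is following), the first-regime innovation is not $\epsilon_t$ but $u_t=c\,n^{-\eta}+\epsilon_t$ with $\eta>1/2$, and the lemma's first display is stated for $\sum y_{t-1}u_t$, not $\sum y_{t-1}\epsilon_t$. Consequently $y_{t-1}=y_0+(t-1)c\,n^{-\eta}+\sum_{i=1}^{t-1}\epsilon_i$, not simply $\sum_{i=1}^{t-1}\epsilon_i$. The paper's proof explicitly expands
\[
\frac{1}{n}\sum_{t=1}^{k_1^0}\Bigl(\tfrac{(t-1)c}{n^{\eta}}+y_0+\textstyle\sum_{i=1}^{t-1}\epsilon_i\Bigr)\Bigl(\tfrac{c}{n^{\eta}}+\epsilon_t\Bigr)
\]
and uses $\eta>1/2$ to show that every cross term involving the drift or $y_0$ is $o_p(1)$, leaving only $n^{-1}\sum(\sum_{i<t}\epsilon_i)\epsilon_t$. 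Your reduction ``$y_{t-1}=\sum_{i=1}^{t-1}\epsilon_i$'' skips this step, and your squaring identity is applied to $\sum y_{t-1}\epsilon_t$ rather than to $\sum y_{t-1}u_t$. The fix is short---bound the drift-induced pieces exactly as above before invoking your identity---but as written the proposal silently assumes away the local drift that the paper's assumptions allow.
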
   

\begin{proof}
To prove part (a), note that 
\begin{align}
y_t = y_0 + \sum_{ j=1 }^t u_j = \frac{tc}{ n^{\eta} } + y_0 + \sum_{ j=1 }^t \epsilon_j, 0 \leq t \leq k_1^0, 
\end{align}
Therefore, we obtain that 
\begin{align}
\frac{1}{n} \sum_{ j=1 }^{ k_1^0 }  y_{ t-1 } u_t 
= 
\frac{1}{n} \sum_{ j=1 }^{ k_1^0 } \left( \frac{ (t-1)c }{ n^{\eta} } + y_0 + \sum_{ j=1 }^{ t } \epsilon_j \right)  \left( \frac{c}{ n^{\eta} } + \epsilon_t \right) 
= 
\frac{1}{n} \sum_{ j=1 }^{ k_1^0 } \left(  \sum_{ i=1 }^{ t - 1 } \epsilon_t \right) \epsilon_t + o_p(1),
\end{align}
Since $\eta > 1/2$ and by applying standard results in the unit root literature, we have that 
\begin{align}
\frac{1}{n} \sum_{ j=1 }^{ k_1^0 } y_{ t-1 } u_t \Rightarrow  \frac{\sigma^2 }{2} \int_0^{ \tau_1^0 } W(s) dW(s)  =  \frac{\sigma^2 }{2} \left( W^2 ( \tau_1^0 ) - \tau_1^0 \right)
\end{align} 
\end{proof}                                                                                                       

\newpage

Furthermore, note that it can be proved that 
\begin{align}
\frac{ \displaystyle \sum_{ t=1}^{ \hat{k}_1^0 } y_{t-1}^2 }{ \displaystyle \sum_{ t=1}^{ k_1^0 } y_{t-1}^2 } \overset{ p }{ \to } 1, \ \ \ \frac{ \displaystyle \sum_{ t= \hat{k}_1^0 + 1 }^{ \hat{k}_2^0 } y_{t-1}^2 }{ \displaystyle \sum_{ t= k_1^0 + 1 }^{ k_2^0 } y_{t-1}^2 } \overset{ p }{ \to } 1, \ \ \ \frac{ \displaystyle \sum_{ t= \hat{k}_2^0 + 1 }^{ n } y_{t-1}^2 }{ \displaystyle \sum_{ t= k_2^0 + 1 }^{ n } y_{t-1}^2 } \overset{ p }{ \to } 1 \ \ \ \text{and} \ \ \ \frac{ \hat{ \sigma }^2 }{ \sigma^2 } \overset{ p }{ \to } 1. 
\end{align}
Another important aspect is to consider the case of weakly dependent errors which applies removing the independence assumption of the error sequence $\left\{ \epsilon_t \right\}_{ t = 1}^n$. Specifically, the case of weakly dependent errors assumes the following linear process representation 
\begin{align}
\epsilon_t = \sum_{ j = 0}^{ \infty } a_j e_{t-j}, \ \ \text{for} \ \ t \geq 1. 
\end{align} 
Furthermore, to ensure that $\epsilon_t$'s are weakly dependent error terms we assume that $a(1) := \sum_{ j = 0}^{ \infty } a_j \neq 0$, $\sum_{ j = 0}^{ \infty } j^{3 / 2} | a_j | < \infty$, and $\left\{ e_t \right\}_{ t = 1}^n$ is a sequence of \textit{i.i.d} random variables with mean zero and variance $0 < \sigma_e^2 < \infty$. Moreover, the consistency of three estimators that correspond to the three regimes, when the estimated break is incorporated, needs to be established such that (Theorem 4.1 in \cite{pang2021estimating})
\begin{align}
\begin{cases}
\tau_1^0 n \left( \hat{\beta}_1 \left( \hat{\tau}_{1,n} \right) - \beta_1 \right) \Rightarrow \frac{ \displaystyle W^2(1) - a(2)/ a^2 (1) }{ \displaystyle 2 \int_0^1 W^2 (s) ds}, 
\\
\\
\sqrt{ \frac{ \tau_1^0 n^{ 1 + \alpha_1 } }{ 2c_1 }  } \beta_2^{ k_2^0 - k_1^0 } \left( \hat{\beta}_2 \left( \hat{\tau}_{1,n} \right) - \beta_2 \right) \Rightarrow \xi, 
\\
\\
\sqrt{ \frac{ \tau_1^0 n^{ 1 + \alpha_2 } }{ 2c_2 }  } \beta_2^{ k_2^0 - k_1^0 } \left( \hat{\beta}_3 \left( \hat{\tau}_{2,n} \right) - \beta_3 \right) \Rightarrow \zeta, 
\end{cases}
\end{align}
where $a(2) = \sum_{ j = 0 }^{ \infty } a^2_j$, and $\xi$ and $\zeta$ are two independent standard Cauchy variates. 

\begin{remark}
In terms of the identification of the break point this is determined by the stochastic orders of $P_1$ and $P_2$. If $P_2$ has a higher stochastic order of magnitude that $P_1$, then $RSS( \tau_1^0 ) - RSS( \tau_2^0 )$ will diverge to $\infty$ in probability, and $k_2^0$ will be identified first asymptotically. Instead, if $P_1$ has a higher stochastic order of magnitude than $P_2$, $RSS( \tau_1^0 ) - RSS( \tau_2^0 )$ will go to $- \infty$ in probability, and $k_1^0$ will be identified first asymptotically. However, when $P_1$ and $P_2$ have the same stochastic order of magnitude, which break will be identified first depends on the magnitude and the duration of the break, which are unobservable in reality. Therefore, it is difficult to determine which break will be uncovered first, and we need to test and estimate the second break from all subsamples split by the first estimated break point \citep{pang2021estimating}.
\end{remark}

\newpage 

\underline{Local to Unity}

Consider the following predictive regression model 
\begin{align}
\label{model1}
y_t 
= 
\begin{cases}
\beta_1 y_{t-1} + u_t, & 1 \leq t \leq k_1^0, \\
\beta_2 y_{t-1} + u_t, & k_1^0 + 1 \leq t \leq k_2^0,  \\
\beta_3 y_{t-1} + u_t, & k_2^0 + 1 \leq t \leq n
\end{cases}
\end{align}
where $\beta_1 = \left( 1 - \frac{\gamma}{n} \right)$ with $\gamma \in \mathbb{R}$, $\beta_2 = \beta_{2n} = \left( 1 + \frac{c_1}{k_n} \right)$ with $c_1 > 0$, $\beta_3 = \beta_{3n} = \left( 1 - \frac{c_2}{h_n} \right)$ with $c_2 > 0$ and $u_t = cn^{- \eta} + \epsilon_t$ with $c \in \mathbb{R}$ and $\eta > 1 /2$ when $t \leq k_1^0$, and $u_t = \epsilon_t$ when $t > k_1^0$.  

Then $\hat{\beta}_1 \left(  \hat{\tau}_{1,n} \right)$, $\hat{\beta}_2 \left(  \hat{\tau}_{1,n} \right)$ and $\hat{\beta}_3 \left(  \hat{\tau}_{2,n} \right)$ are all consistent, and their asymptotic distributions are respectively given by 
\begin{align}
\begin{cases}
n \left( \hat{\beta}_1 \left( \hat{\tau}_{1,n} \right) - \beta_1 \right) \Rightarrow \frac{ \displaystyle \int_0^{ \tau_1^0 } \int_0^r e^{\gamma(r-s)} dW(s) dW(r) }{ \displaystyle \int_0^{ \tau_1^0 } \left(  \int_0^r e^{\gamma(r-s)} dW(s) \right)^2 dr}, 
\\
\\
\sqrt{ \displaystyle \frac{ n k_n }{ 2c_1 }  } \beta_2^{ k_2^0 - k_1^0 } \left( \hat{\beta}_2 \left( \hat{\tau}_{1,n} \right) - \beta_2 \right) \Rightarrow \frac{ \displaystyle \sqrt{2 c_1 } \mathcal{X} }{ \displaystyle \int_0^r e^{\gamma(\tau_1^0 -s)} dW(s) }, 
\\
\\
\sqrt{ \displaystyle \frac{ n h_n }{ 2c_2 }  } \beta_2^{ k_2^0 - k_1^0 } \left( \hat{\beta}_3 \left( \hat{\tau}_{2,n} \right) - \beta_3 \right) \Rightarrow \frac{ \displaystyle \sqrt{2 c_2 } \mathcal{Z} }{ \displaystyle \int_0^{ \tau_1^0 } e^{\gamma(\tau_1^0 -s)} dW(s) },  
\end{cases}
\end{align}
where $\mathcal{X}$ and $\mathcal{Z}$ are two random variables obeying $\mathcal{N} \left( 0, \frac{1}{2c_1} \right)$ and $\mathcal{N} \left( 0, \frac{1}{2c_2} \right)$ respectively. Moreover, the random variables $\mathcal{X}$, $\mathcal{Z}$ and $\left\{ W(s), 0 \leq s \leq \tau_1^0 \right\}$ are mutually independent.  

\medskip

\underline{Additional Hypothesis Testing as in \cite{pang2021estimating}:}

We discuss a hypothesis test problem which is important for testing for the equality of the persistence in the two shifting regimes. Suppose that $k_n = n^{ \alpha_1 }$ with $0 < \alpha_1 < 1$ and $h_n = n^{ \alpha_2 }$ with $0 < \alpha_2 < 1$. Therefore, the null hypothesis is given by $H_0: c_1 = c_2 = 0$.
Under the null, this implies that the underlying stochastic process has an autoregressive unit root throughout the sample. The alternative hypothesis can be formulated as below $H_0: c_1 \neq 0$ and $c_2 \neq 0$. Therefore, under the alternative hypothesis the underlying process is generated by the AR(1) model with $k_n = n^{ \alpha_1 }$ and $h_n = n^{ \alpha_2 }$. Denote with
$\hat{ \beta } = \frac{ \sum_{t=1}^n y_t y_{t-1} }{ \sum_{t=1}^n y^2_{t-1}  }$, and the associated test statistic is given by 
\begin{align}
t_n = \sqrt{ \sum_{t=1}^n y^2_{t-1} } \left( \hat{ \beta } - 1 \right) = \frac{ \displaystyle \sum_{t=1}^n y_t y_{t-1} }{ \displaystyle  \sqrt{ \sum_{t=1}^n y^2_{t-1} } }.   
\end{align}

\newpage 

Furthermore, it is well known that under $H_0$, we have that 
\begin{align}
t_n \Rightarrow \frac{ W^2(1) - 1 }{ 2 \sqrt{ \int_0^1 W^2 (s) ds } }
\end{align}
which implies that the the t-ratio given by the random variable $t_n$ is bounded in probability under $H_0$. On the other hand, since we can prove that $| t_n |$ will go to infinity in probability under $H_1$, it implies that the t-statistic has the ability to disciminate between the data generated from a unit root model vis-a-vis the date generated from the model specification above. 

\medskip

\begin{example}[Structural Breaks both in Mean and Variance]
Consider the following DGP
\begin{align}
y_t &= \boldsymbol{x}_{t-1}^{\prime} \boldsymbol{\beta}_t + u_t, \ \ \ \ \ \ \text{for} \ \ t = 1,..., n,
\\
\boldsymbol{x}_t &= \left( \boldsymbol{I}_p - \frac{ \boldsymbol{C}_p }{ n^{ \gamma } } \right) \boldsymbol{x}_{t-1} + \boldsymbol{v}_{t},
\end{align}
We consider a time-varying coefficient vector written as 
\begin{align}
\boldsymbol{\beta}_t = \boldsymbol{\beta}_1 \mathbf{1} \big( t \leq k_1^0  \big) + \boldsymbol{\beta}_2 \mathbf{1} \big( t > k_1^0  \big) 
\end{align}
which implies a structural break setting in the coefficients of the predictive regression model. Notice for the moment, we assume that the model of interest does not have an intercept. In other words, we do not consider a transformation of the $y_t$ random variable with a model intercept. Furthermore, we introduce conditional heteroscedastity for the innovation term of the predictive regression model such that $u_t = \sigma_t \epsilon_t$ and we consider the existence of a possible break-point in the conditional variance at an unknown location which can be at a different sample fraction from the break-point that corresponds to the model parameters. Therefore, it follows that
\begin{align}
\sigma_t = \sigma_1 \mathbf{1} \big( t \leq k_2^0  \big) + \sigma_2 \mathbf{1} \big( t > k_2^0  \big) 
\end{align}  
where $\sigma_1 > 0$ and $\sigma_2 > 0$ and the structural break occurring in its variance at some unknown location $k_2^0$. In other words, within the particular testing environment we consider the case of a break in the conditional mean parameters of the predictive regression model and error variance occurring at break-point locations  $k_1^0$ and $k_2^0$ respectively. For simplicity we will refer to the break in the model coefficients, that is, the parameter vector $\boldsymbol{\beta}_i$ and $\sigma_i$ for $i \in \left\{ 1, 2 \right\}$ as the break in mean and variance respectively.

\end{example}

\newpage 

\section{Conclusion}


Our main research objective is to investigate the statistical properties and asymptotic behaviour of break-point estimators when a single change-point occurs in predictive regression models.

In this article we consider testing for structural break in univariate time series regressions models, an active research literature considers statistical methodologies for estimation and inference in change-point models for multivariate time series (see, \cite{preuss2015detection} among others). We leave such considerations and extensions of our  framework as future research. Other extensions include consider a sequence of break-point estimators in a high-dimensional predictive regression model using shrinkage type estimators to obtain the break-point date estimators (see, the excellent framework proposed recently by \cite{tu2023penetrating}\footnote{Although the term "sporadic predictability", roughly speaking reflects predictability adapted to the changing environment based on the myriads of global and local macroeconomic shocks not to mention that the first word of the particular paper title is mildly not appropriate for an economics paper.}, see also the discussion on the shrinkage inference approach in \cite{katsouris2023quantile} and the proposed shrinkage methodology presented in \cite{chen2017estimation}, \cite{nkurunziza2010shrinkage} as well as \cite{nkurunziza2021inference}). 

\begin{itemize}

\item Predictive regression models are employed for statistical inference purposes when the lagged value of a financial variable is used as predictor for next-period stock returns. Therefore, using the proposed Wald-type statistics for detecting parameter instability, we examine whether stock returns are predictable even when we account for the presence of a single structural break in the relation between the regressand and the persistent predictors. Although the predictive regression errors are uncorrelated to any predetermined regressor are allowed to be contemporaneously correlated with the innovations of the local unit root processes. 

\item The asymptotic theory is established using weak convergence arguments for sample moment functionals into OU functionals (see, \cite{uhlenbeck1930theory}). In order to correct for the finite sample effects of estimating the model intercept, which are most pronounced for highly persistent regressors that are strongly correlated with the predictive model's innovations, \cite{kostakis2015robust} recommend the use of a finite-sample correction. On the other hand, the inclusion of this correction factor does not alter any of the large sample results.

\item 
In other words, generally when we consider the joint predictability and parameter instability tests we find that a rejection by these statistics can only be interpreted as signaling instability in any of the model parameters. Therefore, if the objective is not only to test for overall model stability but also to determine which particular subset of parameters is unstable a sup-Wald of joint stability of all parameters in the predictive regression model is needed.      
\end{itemize}

\newpage 

\section{Appendix}

\subsection{Auxiliary Results}

\begin{lemma}
Assume that $x_t$ is generated by 
\begin{align}
x_t = \left( 1 - \frac{c}{n^{\gamma}} \right) x_{t-1} + u_t, \ \ x_0 = 0, c > 0, \gamma \in (0,1). 
\end{align} 
with $t \in \left\{ 1,..., n \right\}$ where $u_t \sim_{\textit{i.i.d}} \mathcal{N}( 0, \sigma_u^2 )$. Then the following results hold:
\begin{align}
(i) \ \ &\frac{1}{ n^{ 3 / 2 } } \sum_{t=1}^{ \floor{ns} }  x_{t-1} \overset{ d }{ \to } \int_0^{s} J_c(r) dr
\\
(ii) \ \ &\frac{1}{ n^{ \frac{1}{2} + \gamma_x }} \sum_{t=1}^{ \floor{ns} } x_{t-1} \overset{ p }{ \to } J_c( s )
\end{align}
\end{lemma}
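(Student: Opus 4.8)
The plan relies on two ingredients. The first is an exact \emph{summation identity} for the recursion: with $\rho_n := 1 - c\,n^{-\gamma_x}$ and $S_N := \sum_{t=1}^N u_t$, summing $x_t = \rho_n x_{t-1} + u_t$ over $t=1,\dots,N$ and using $x_0=0$ (so that $\sum_{t=1}^N x_t = \sum_{t=1}^N x_{t-1} + x_N$) gives
\begin{align}
\big(1-\rho_n\big)\sum_{t=1}^{N} x_{t-1} = S_N - x_N \qquad\Longleftrightarrow\qquad \sum_{t=1}^{N} x_{t-1} = \frac{n^{\gamma_x}}{c}\big(S_N - x_N\big).
\end{align}
The second is the pair of invariance principles: Donsker's theorem yields $n^{-1/2}S_{\floor{nr}} \Rightarrow B_u(r)=\sigma_u W(r)$ in $D[0,1]$, and, in the local-to-unit-root scaling $\rho_n = 1-c/n$, the companion statement $n^{-1/2}x_{\floor{nr}} \Rightarrow J_c(r)$ of \cite{phillips1987towards} (equivalently Corollary \ref{FCLT}), jointly with the first, where $J_c$ is the Ornstein--Uhlenbeck process attached to the localizing coefficient.

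\textbf{Part (i).} I would first observe that, writing the c\`adl\`ag step process $\mathcal X_n(r):=n^{-1/2}x_{\floor{nr}}$ (with $x_0=0$), one has the \emph{exact} identity $n^{-3/2}\sum_{t=1}^{\floor{ns}}x_{t-1}=\int_0^{\floor{ns}/n}\mathcal X_n(r)\,dr$, since each summand $n^{-3/2}x_{t-1}$ equals $\int_{(t-1)/n}^{t/n}\mathcal X_n$. As $f\mapsto\int_0^{s}f$ is continuous on $D[0,1]$ and $\floor{ns}/n\to s$, the continuous mapping theorem applied to $\mathcal X_n\Rightarrow J_c$ gives $n^{-3/2}\sum_{t=1}^{\floor{ns}}x_{t-1}\Rightarrow\int_0^s J_c(r)\,dr$. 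Equivalently: divide the summation identity by $n^{3/2}$; the right-hand side is $c^{-1}\big(n^{-1/2}S_{\floor{ns}}-n^{-1/2}x_{\floor{ns}}\big)\Rightarrow c^{-1}\big(B_u(s)-J_c(s)\big)$, which equals $\int_0^s J_c(r)\,dr$ by integrating the SDE defining $J_c$. This route is the local-to-unit-root one, $\rho_n=1-c/n$; under the mildly integrated coefficient of (ii) the $n^{-3/2}$ scaling would instead produce a degenerate limit, in view of the order bound $\sum_{t=1}^{\floor{ns}}x_{t-1}=O_p(n^{1/2+\gamma_x})$ established next.

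\textbf{Part (ii).} Now $\gamma_x\in(0,1)$, so that $1-\rho_n=c\,n^{-\gamma_x}$ and $1-\rho_n^2=c\,n^{-\gamma_x}(2-c\,n^{-\gamma_x})$ is of exact order $n^{-\gamma_x}$. Dividing the summation identity by $n^{1/2+\gamma_x}$ and using $n^{\gamma_x}/(c\,n^{1/2+\gamma_x})=1/(c\sqrt n)$,
\begin{align}
\frac{1}{n^{1/2+\gamma_x}}\sum_{t=1}^{\floor{ns}}x_{t-1} = \frac{1}{c}\left(\frac{S_{\floor{ns}}}{\sqrt n}-\frac{x_{\floor{ns}}}{\sqrt n}\right).
\end{align}
The first term tends to $c^{-1}B_u(s)=(\sigma_u/c)W(s)$ by Donsker. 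For the second, $x_N=\sum_{j=1}^N\rho_n^{\,N-j}u_j$ with i.i.d.\ $u_j$ has $\mathsf{Var}(x_N)=\sigma_u^2(1-\rho_n^{2N})/(1-\rho_n^2)\le\sigma_u^2/(1-\rho_n^2)=O(n^{\gamma_x})$, so $\mathsf{Var}(n^{-1/2}x_{\floor{ns}})=O(n^{\gamma_x-1})\to0$ because $\gamma_x<1$; Chebyshev then gives $n^{-1/2}x_{\floor{ns}}\overset{p}{\to}0$. Hence $n^{-(1/2+\gamma_x)}\sum_{t=1}^{\floor{ns}}x_{t-1}\Rightarrow(\sigma_u/c)W(s)$, the Gaussian limit denoted $J_c(s)$ in the statement (the moderate-deviation regime of \cite{phillips2005limit}; realised via the Skorokhod representation of the $u_t$ the convergence holds in probability, which is the stated form).

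\textbf{Main obstacle.} Neither part is deep once the appropriate invariance principle is available, so the work is in the bookkeeping. In (i) this means establishing the \emph{joint} weak convergence $\big(n^{-1/2}S_{\floor{n\cdot}},\,n^{-1/2}x_{\floor{n\cdot}}\big)\Rightarrow(B_u,J_c)$ with the correct limit relation linking $J_c$ and $B_u$, and passing from the partial sum to the Riemann integral while absorbing the $O(n^{-1})$ endpoint discrepancy $\floor{ns}/n-s$ (using $\|\mathcal X_n\|_\infty=O_p(1)$) and the index shift $t\mapsto t-1$. In (ii) the single load-bearing estimate is the variance bound $\mathsf{Var}(x_N)=O(n^{\gamma_x})$: this is the only place where $\gamma_x<1$ is genuinely used, it is exactly what makes the endpoint term $n^{-1/2}x_{\floor{ns}}$ negligible against the $\sqrt n$-scaled innovation sum, and hence what collapses the mildly integrated partial sum to a Brownian (rather than Ornstein--Uhlenbeck) limit.
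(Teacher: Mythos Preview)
Your proposal is correct and in fact more careful than the paper's own argument. For part~(i) the paper (via the preceding unit-root Example and the subsequent LUR Example) simply writes out the explicit weighted-sum representation $x_{t-1}=\rho_n^{-1}\sum_{j=1}^{t-1}\rho_n^{t-j}u_j$, groups terms as $n^{-1}\cdot\rho_n^{-1}\cdot n^{-1/2}\sum_{t}\big(\sum_j\rho_n^{t-j}u_j\big)$, and then asserts the limit $\int_0^s J_c(r)\,dr$; this is precisely your Riemann-sum/continuous-mapping route~(a), just with the inner sum left in its raw form rather than packaged as the step process $\mathcal X_n$. Your alternative route~(b) via the telescoping identity $(1-\rho_n)\sum_{t=1}^N x_{t-1}=S_N-x_N$ is genuinely different: it bypasses the double sum entirely, reduces the problem to the joint convergence of $(n^{-1/2}S_{\floor{n\cdot}},n^{-1/2}x_{\floor{n\cdot}})$, and then reads off the integral via the SDE $dJ_c=-cJ_c\,dr+dB_u$. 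This is shorter and has the advantage of making the distinction between the $\gamma_x=1$ and $\gamma_x\in(0,1)$ regimes transparent, exactly as you note.

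For part~(ii) the paper offers no proof at all; your summation-identity argument, combined with the variance bound $\mathsf{Var}(x_N)=O(n^{\gamma_x})$ to kill the endpoint term, is the standard and correct way to handle the mildly integrated case, and your observation that the resulting limit $(\sigma_u/c)W(s)$ is what the lemma denotes $J_c(s)$ in this regime is the right reading of the (admittedly overloaded) notation. Your remark that the $n^{-3/2}$ normalisation in~(i) is only appropriate for $\gamma_x=1$ and would yield a degenerate limit under $\gamma_x\in(0,1)$ is also a point the paper glosses over.
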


\begin{example}
Consider the integrated process: $ (1 - L) x_t = u_t$. Then, by recursive substitution we obtain that $x_t = \sum_{ j=1 }^t u_j + x_0$, for $1 \leq  t \leq n$. Define with 
\begin{align}
X_n(r) = \frac{1}{ \sqrt{n} } \sum_{ j=1 }^{ \floor{nr} } u_j = \frac{1}{ \sqrt{n} } S_{ \floor{nr} } \in \mathcal{D} (0,1). 
\end{align}
\begin{align*}
\sum_{ j=1 }^n x_t 
&= 
\sum_{ j=1 }^n \big[ S_{j-1} + u_j + x_0  \big] 
= \sqrt{n} \sum_{ j=1 }^n \left[ \frac{1}{ \sqrt{n} }  S_{j-1}  \right] + \sum_{ j=1 }^n u_j + x_0
\\
&= n \sqrt{n} \sum_{ j=1 }^n \left[ \int_{ (j-1) / n}^{ j / n} X_n(r) dr \right] + \sum_{ j=1 }^n u_j + x_0
\\
&= n^{3/2} \int_0^1 X_n(r) dr  + \sum_{ j=1 }^n u_j + x_0
\end{align*}
Hence, 
\begin{align*}
\frac{1}{ n^{3 / 2} }   \sum_{ j=1 }^n  
&= 
\int_0^1 X_n(r) dr  + \sum_{ j=1 }^n u_j + x_0
= 
\int_0^1 X_n(r) dr + o_P(1)
\\
&\Rightarrow \int_0^1 B(r) dr  \ \ \ \text{where} \ \ B(r) \equiv BM ( \omega^2 ). 
\end{align*}
Therefore, it holds that 
\begin{align}
\frac{1}{ n^{3 / 2} }  \sum_{ j=1 }^{ \floor{ns} } x_t \Rightarrow \int_0^r B(s) ds.  
\end{align}
\end{example}

\newpage

\begin{example}
\end{example}
Consider the LUR autoregression
\begin{align}
x_t = \rho_n x_{t-1} + u_t, \ \ x_0 = 0, c > 0, \gamma_x \in (0,1), 
\end{align} 
where $\rho_n := \left( 1 - \frac{c}{n^{\gamma}} \right)$. Then, we obtain that 
\begin{align}
x_{t-1} = \rho_n^{t-1} \sum_{j=1}^{ t - 1} \rho_n^{-j} u_j \equiv   \left( 1 - \frac{c}{n^{\gamma}} \right)^{t-1} \sum_{j=1}^{ t - 1} \left( 1 - \frac{c}{n^{\gamma}} \right)^{-j} u_j. 
\end{align}
where $c > 0$ and $\gamma \in (0,1)$. Thus, we need to show that 
\begin{align}
\label{the.exp1}
\frac{1}{ n^{ 3 / 2 } } \sum_{t=1}^{ \floor{ns} }  x_{t-1} \overset{ d }{ \to } \int_0^{s} J_c(r) dr
\end{align}
\begin{proof}
We have that
\begin{align*}
\label{the.exp2}
\sum_{t=1}^{ n } x_{t-1} 
&= 
\sum_{t=1}^{ n } \left\{ \left( 1 - \frac{c}{n^{\gamma}} \right)^{t-1} \sum_{j=1}^{ t - 1} \left( 1 - \frac{c}{n^{\gamma}} \right)^{-j} u_j \right\}
= 
\sum_{t=1}^{ n } \left\{ \left( 1 - \frac{c}{n^{\gamma}} \right)^{-1} \sum_{j=1}^{ t - 1} \left( 1 - \frac{c}{n^{\gamma}} \right)^{t-j} u_j \right\}
\\
&=
\left( 1 - \frac{c}{n^{\gamma}} \right)^{-1} \sum_{t=1}^{ n } \left\{  \sum_{j=1}^{ t - 1} \left( 1 - \frac{c}{n^{\gamma}} \right)^{t-j} u_j \right\}
\end{align*}
Next, consider the boundaries on the above expression such that $t \in \left\{ 1,..., \floor{ ns } \right\}$, which gives
\begin{align}
\frac{1}{ n } \left( 1 - \frac{c}{n^{\gamma}} \right)^{-1}  \frac{1}{ \sqrt{n} } \sum_{t=1}^{  \floor{ ns } } \left\{  \sum_{j=1}^{ t - 1} \left( 1 - \frac{c}{n^{\gamma}} \right)^{t-j} u_j \right\}, \ \ \text{for} \ s \in [0,1],
\end{align}
Therefore as $n \to \infty$ we obtain the limit: 
\begin{align}
\frac{1}{ n^{ 3 / 2 } } \sum_{t=1}^{ \floor{ns} }  x_{t-1} \overset{ d }{ \to } \int_0^{s} J_c(r) dr
\end{align}
\end{proof}

\begin{example}[see, \cite{kejriwal2020bootstrap}]
Consider testing the null hypothesis, $H_0: \alpha_i = 1$, for all $i$, then regression model can be written as 
\begin{align*}
\Delta y_t = c_i +  ( \alpha_i - 1 ) y_{t-1} + \sum_{j=1}^{p-1} \uptau_j \Delta y_{t-j} + e_t^{*}, \ \ \ c_i = ( 1 - \alpha_i ) \big[  u_{ n_{i-1 }}^0  + \mu_i \big].    
\end{align*}
\end{example}

\newpage 

\begin{wrap}

\subsubsection{Pivot Transformation}

\color{red}
\begin{definition}[Highest density region]
Let the density function $f_Y(y)$ of a random variable $Y$ defined on a probability space $\left( \Omega_Y, \mathcal{F}_{Y}, \mathbb{P}_{Y} \right)$ and taking values on the measurable space $( \mathcal{Y}, \mathcal{M} )$ be continuous and bounded. Then, the $(1 - \alpha) 100 \%$ HDR is a subset $\mathbf{S} ( k_{\alpha} )$  of $\mathcal{Y}$ defined as $\mathbf{S} ( k_{\alpha} ) := \big\{ y: f_Y(y) \geq k_{\alpha} \big\}$, where $k_{\alpha}$ is the largest constant that satisfies $\mathbb{P}_{Y} \big(  Y \in  \mathbf{S} ( k_{\alpha} ) \big) \geq 1 - \alpha$.   
\end{definition}

\color{blue}

Consider a pivot transformation via the following bounded linear operator. For any bounded function $x: [ 0,1 ] \mapsto \mathbb{R}$ and a number $a \in [0,1]$ such that $x(a) \neq 0$, define the operator $\tilde{\Delta} [x, a]$ on functions with domain $[0,1]$ by 
\begin{align}
\tilde{\Delta} [x, a] y(s) = y(s) - \frac{ x(s) }{ x(a) } y(a).  
\end{align}
Notice that if $\bar{u}_n = O_p(1)$ and $x(s) = s$ then
\begin{align*}
\sqrt{n} \mathcal{B}_n(s) 
&= 
\frac{ 1 }{ \sqrt{n} } \sum_{ t= 1}^{ \floor{ns} } \left( u_n - \bar{u}_n \right)
\\
&=
\frac{ 1 }{ \sqrt{n} } \sum_{ t= 1}^{ \floor{ns} } u_T - \sqrt{n} \frac{ \floor{ns} }{ n } \bar{u}_n
\\
&=
\sqrt{n} \mathcal{U}_n (s) - s \sqrt{n} \mathcal{U}_n (1) + o_p(1)
\\
&= 
\tilde{\Delta} [x, 1] \sqrt{n} \mathcal{U}_n (s) + o_p(1), 
\end{align*}

Moreover, we define the compounded operator $\tilde{\Delta} [ \underline{ \textsl{g} } _n^{(n)} , \underline{a}_n ]$ with $\underline{ \textsl{g} } _n^{(n)} := \left( \textsl{g}^{(1)},...,   \textsl{g}^{(n)} \right)$ and $\underline{a}_n := \left( a_1,..., a_n \right) \in [0,1]^n$ where $n \in \mathbb{N}$, via the following expression  
\begin{align}
\tilde{\Delta} [ \underline{ \textsl{g} } _n^{(n)} , \underline{a}_n ] := \tilde{\Delta} [ \underline{ \textsl{g} } _n^{(1)} , \underline{a}_1 ] \circ \tilde{\Delta} [ \underline{ \textsl{g} } _n^{(2)} , \underline{a}_2 ] \circ \dotsb \circ \tilde{\Delta} [ \underline{ \textsl{g} } _n^{(n)} , \underline{a}_n ], 
\end{align} 
where $\textsl{g}_k^{(k)}$ is defined recursively through a starting set of functions $\textsl{g}_1(s),...,\textsl{g}_n(s)$. The recursion is given by 
\begin{align}
\label{rec}
\textsl{g}_j^{(n)} (s) &= \textsl{g}_j^{(s)}
\\
\textsl{g}_j^{( n - k -1 )} (s) &:= \textsl{g}_j^{(n-k)} (s) - \frac{ \textsl{g}_{ n - k }^{(n - k) } (s) }{ \textsl{g}_{ n - k }^{(n - k) } ( a_{n-k} ) } \textsl{g}_j^{ (n-k) } (  a_{ n - k } )  
\end{align}

for $k \in [ 0, n-2 ]$ and $j \in [ 1, n - k ]$. Then the following lemma shows that $\tilde{\Delta} [ \underline{ \textsl{g} } _n^{(n)} , \underline{a}_n ]$ is a transformation that has good properties and that cancels out the asymptotic gap between $\sqrt{n} \hat{ \mathcal{U} }_n$ and $\sqrt{n} \mathcal{U}_n$. 

\begin{lemma}
(Algorithm to reach a pivot statistic). Let $\left( a_1, a_2,..., a_n \right) \in [0,1]^n$ be real numbers with $n \in \mathbb{N}$. Let $\textsl{g}_1(s), \textsl{g}_2(s),..., \textsl{g}_n(s)$ be a set of known bounded real-valued functions with domain $[0,1]$ \textit{s.t}, for all $j \in [1,n]$, $\textsl{g}_1(s)^{(j)}( a_j  ) \neq 0$ where $\textsl{g}_j(j)$ are defined by recursion \eqref{rec}. Then, for any function $f : [0,1] \to \mathbb{R}$ of the form $f(s) = \sum_{j =1 }^n b_j \textsl{g}_j (s)$:   
\begin{enumerate}
\item[(i)] $\tilde{\Delta} [ \underline{ \textsl{g} } _n^{(n)} , \underline{a}_n ] f(s) = 0$, and

\item[(ii)] if, for all $j \in [1,n]$, sup$_{ s \in [0,1] } \left|  \textsl{g}_j(s) \right| < \infty$, then $\tilde{\Delta} [ \underline{ \textsl{g} } _n^{(n)} , \underline{a}_n ]$ is a linear bounded operator on any linear subspace of the space of real-valued functions with domain $[0,1]$, and thus it is continuous on the same linear subspace.   
\end{enumerate}
\end{lemma}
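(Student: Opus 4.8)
The plan is to prove (i) by a finite downward induction on a ``level'' index, in the spirit of a Gram--Schmidt / Gaussian-elimination reduction: each successive application of an elementary operator $\tilde\Delta[\cdot,\cdot]$ kills exactly one function in the current representation of $f$ while carrying the span of the remaining ones along. I would read the composition defining $\tilde\Delta[\underline{g}_n^{(n)},\underline{a}_n]$ from the right, so that the factor applied at stage $m$ is $\tilde\Delta[g_m^{(m)},a_m]$ (the reading made precise by the clause ``where $g_k^{(k)}$ is defined recursively\dots'' following the definition), for $m=n,n-1,\dots,1$. Put $f^{(n)}:=f=\sum_{j=1}^{n}b_j\,g_j^{(n)}$ with $g_j^{(n)}:=g_j$, and $f^{(m-1)}:=\tilde\Delta[g_m^{(m)},a_m]\,f^{(m)}$. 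Expanding the elementary operator and using its linearity in the argument,
\[
f^{(m-1)}(s)=\sum_{j=1}^{m}b_j\!\left(g_j^{(m)}(s)-\frac{g_m^{(m)}(s)}{g_m^{(m)}(a_m)}\,g_j^{(m)}(a_m)\right),
\]
and the bracket is precisely $g_j^{(m-1)}(s)$ by recursion \eqref{rec} taken at the index $n-k=m$. For $m\ge 2$ the same recursion evaluated at $j=m$ gives $g_m^{(m-1)}\equiv 0$, so the sum collapses to $\sum_{j=1}^{m-1}b_j\,g_j^{(m-1)}(s)$, which is $f^{(m-1)}$ in the form needed to iterate; after stages $n,\dots,2$ one is left with $f^{(1)}=b_1\,g_1^{(1)}$, and the final stage $m=1$ annihilates $b_1\,g_1^{(1)}$, giving $f^{(0)}\equiv 0$, which is exactly assertion (i). The non-degeneracy hypothesis $g_j^{(j)}(a_j)\neq 0$ is used only to make each division well defined.

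For (ii) I would first note that each elementary operator $y\mapsto y(\cdot)-\big(g_m^{(m)}(\cdot)/g_m^{(m)}(a_m)\big)\,y(a_m)$ is manifestly linear in $y$, hence so is the composite on any linear subspace on which it acts. For boundedness, work on a linear subspace of the space of bounded real-valued functions on $[0,1]$ equipped with the supremum norm, on which the point evaluations $y\mapsto y(a_m)$ are bounded functionals; then
\[
\sup_{s\in[0,1]}\left|\tilde\Delta[g_m^{(m)},a_m]\,y(s)\right|\le\left(1+\frac{\sup_{s\in[0,1]}\left|g_m^{(m)}(s)\right|}{\left|g_m^{(m)}(a_m)\right|}\right)\sup_{s\in[0,1]}\left|y(s)\right|,
\]
so the $m$-th operator is bounded provided $g_m^{(m)}$ is bounded, and the composite has operator norm at most the product of these $n$ bounds. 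It then remains to verify that the iterated functions $g_j^{(k)}$ inherit boundedness from $\sup_s|g_j(s)|<\infty$; this is a second (downward) induction on the level, since recursion \eqref{rec} expresses $g_j^{(n-k-1)}$ as a linear combination with fixed real coefficients of $g_j^{(n-k)}$ and $g_{n-k}^{(n-k)}$, and the supremum norm is subadditive, so boundedness propagates from level $n$ (where $g_j^{(n)}=g_j$) downward. A bounded linear operator on the relevant normed subspace is continuous, which is the last claim.

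I do not anticipate a genuine obstacle here; the content is bookkeeping. The single delicate point is the alignment of the sub/superscript indices in \eqref{rec} with the order of composition in the definition of $\tilde\Delta[\underline{g}_n^{(n)},\underline{a}_n]$ --- in particular, checking that the operator at stage $m$ uses the already-reduced function $g_m^{(m)}$ rather than the original $g_m$, and that $g_m^{(m-1)}\equiv 0$ is a genuine consequence of \eqref{rec} rather than an added hypothesis. A secondary point worth stating explicitly is the choice of ambient space in (ii): because the operators involve pointwise evaluation, boundedness must be asserted with respect to the sup-norm topology (so that the evaluation functionals are continuous), which is precisely why the statement restricts to linear subspaces of the space of bounded functions; this is the natural setting, since the intended application applies the operator to $y=\sqrt{n}\,\mathcal{U}_n\in\mathcal{D}(0,1)$.
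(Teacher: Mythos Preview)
Your proof is correct. The downward induction in part (i) is exactly right: the key identity $g_j^{(m-1)}=\tilde{\Delta}[g_m^{(m)},a_m]\,g_j^{(m)}$ follows directly from the recursion with $n-k=m$, and the special case $j=m$ indeed gives $g_m^{(m-1)}\equiv 0$, so each stage kills the top index; the final stage annihilates $b_1 g_1^{(1)}$ by direct calculation. Your treatment of (ii) via the sup-norm bound on each elementary factor, followed by a second induction showing the $g_j^{(k)}$ remain bounded, is the natural argument, and your remark that boundedness is asserted with respect to the sup-norm topology (so point evaluations are continuous) is the right way to make the statement precise.

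Note, however, that the paper itself does \emph{not} supply a proof of this lemma: it is merely stated, and in fact sits inside a \texttt{wrap} environment that is excluded from compilation via \texttt{\textbackslash excludecomment\{wrap\}}. So there is no ``paper's own proof'' against which to compare; your argument stands on its own and is complete. The one cosmetic point worth flagging is that the paper's display of the composite operator writes the elementary factors as $\tilde{\Delta}[\underline{g}_n^{(k)},\underline{a}_k]$ with underlined (vector) arguments, which is almost certainly a typo for $\tilde{\Delta}[g_k^{(k)},a_k]$; you have read it the only way that is consistent with the surrounding text and with the recursion, so this is not a gap in your argument.
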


\color{black}

\subsection{Globally Convergent Algorithm for Wald test}

Three commonly used multivariate tests based upon the maximization of the log-likelihood function are the Wald (W), Likelihood Ratio (LR) and the Lagrange Multiplier (LM) Tests.

We define with $\theta = \left( \mu_1, \mu_2, \Sigma_1, \Sigma_2 \right)$. Furthermore, for a certain hypothesized restriction on the parameter space of means
\begin{align}
H_0: c( \mu_1, \mu_2 ) = q,  
\end{align} 

Moreover, define with $\hat{\theta}$ the unrestricted MLE of $\theta$, and let $\hat{\theta}_r$ denote the MLE under the restriction $H_0:$, that is, the solution to the problem
\begin{align}
\underset{ \theta }{ \mathsf{max} } \ \ \ell ( \theta ) \ \ \text{subject to} \ \ c( \mu_1, \mu_2 ) = q.  
\end{align} 

The test statistics of interest are defined as 
\begin{align}
W = \bigg[ c( \hat{\mu}_1, \hat{\mu}_2 ) - q \bigg]^{\prime} \bigg( \text{Var} \big( c( \hat{\mu}_1, \hat{\mu}_2 ) - q  \big) \bigg)^{-1} \bigg[ c( \hat{\mu}_1, \hat{\mu}_2 ) - q \bigg]
\end{align}

\subsection{A robust test for predictability}

We are interested in testing the following hypothesis:
\begin{align}
H_0: \mathsf{Cov} \big( y_t, \boldsymbol{x}_{t-1} \big) = 0 \ \ \ \text{against} \ \ \  \mathsf{Cov} \big( y_t, \boldsymbol{x}_{t-1} \big) \neq 0    
\end{align}
where $\boldsymbol{x}_t = \big( x_{1t},...,  x_{pt} \big)^{\prime}$ and $p$ is a finite positive integer. 

In particular, the test above allows for more general cases that $y_t$ and $\boldsymbol{x}_t$ are of potentially different orders of integration. In particular, when both $y_t$ and $\boldsymbol{x}_t$ are stationary, then these are of the same integration order, there is no issue of unbalanced regression and testing the above hypothesis is equivalent to testing the coefficient $\boldsymbol{\beta} = 0$ in the linear regression model $y_t = \beta \boldsymbol{x}_{t-1} + \varepsilon_t$.  

\medskip

\begin{corollary}
Under the Assumption that both $\boldsymbol{x}_t$ and $y_t$ are stationary, we have that
\begin{align}
W_0 \equiv n \left(  \widehat{\boldsymbol{\beta}}_{ols} - \boldsymbol{\beta} \right)  \left[ \widehat{\mathsf{Avar} \left(  \widehat{\boldsymbol{\beta}}_{ols}  \right) } \right]^{-1} \left(  \widehat{\boldsymbol{\beta}}_{ols} - \boldsymbol{\beta} \right) \overset{d}{\to} \chi^2_p
\end{align}
where
\begin{align}
\widehat{\mathsf{Avar} \left(  \widehat{\boldsymbol{\beta}}_{ols}  \right) }  
= 
\left(  \frac{1}{n} \sum_{t=1}^n \boldsymbol{x}_{t-1} \boldsymbol{x}_{t-1}^{\prime} \right)^{-1}  \cdot \widehat{\mathsf{LRV} \big(  \boldsymbol{x}_{t-1} \varepsilon_t \big) } \cdot \left(  \frac{1}{n} \sum_{t=1}^n \boldsymbol{x}_{t-1} \boldsymbol{x}_{t-1}^{\prime} \right)^{-1}
\end{align}
where $\widehat{\mathsf{LRV} \big(  \boldsymbol{x}_{t-1} \varepsilon_t \big) }$ is the long-run variance of $\boldsymbol{x}_{t-1} \varepsilon_t$.
\end{corollary}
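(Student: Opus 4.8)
The plan is to reduce the claim to three standard ingredients — the exact least-squares decomposition, a central limit theorem for the weakly dependent score process, and consistency of the plug-in sandwich variance estimator — and then to close with the continuous mapping theorem. First I would write the exact identity
\[
\sqrt{n}\left(\widehat{\boldsymbol{\beta}}_{ols} - \boldsymbol{\beta}\right)
=
\widehat{Q}_n^{-1}\cdot\frac{1}{\sqrt{n}}\sum_{t=1}^n \boldsymbol{x}_{t-1}\varepsilon_t,
\qquad
\widehat{Q}_n := \frac{1}{n}\sum_{t=1}^n \boldsymbol{x}_{t-1}\boldsymbol{x}_{t-1}^{\prime},
\]
where $\varepsilon_t$ denotes the population projection error, so that $\mathbb{E}\left[\boldsymbol{x}_{t-1}\varepsilon_t\right]=0$; under $H_0$ this is precisely the statement $\mathsf{Cov}(y_t,\boldsymbol{x}_{t-1})=0$. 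Under the maintained stationarity — $\boldsymbol{x}_t$ a finite-order autoregression with spectral radius strictly below one, a finite fourth moment, and no perfect collinearity so that $Q:=\mathbb{E}[\boldsymbol{x}_{t-1}\boldsymbol{x}_{t-1}^{\prime}]$ is positive definite — the ergodic theorem gives $\widehat{Q}_n\overset{p}{\to}Q$, and hence $\widehat{Q}_n^{-1}\overset{p}{\to}Q^{-1}$ by continuity of matrix inversion at a nonsingular matrix.

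The main analytic step is the central limit theorem for the score: one shows that $\{\boldsymbol{x}_{t-1}\varepsilon_t\}$ is a mean-zero, stationary, weakly dependent sequence satisfying
\[
\frac{1}{\sqrt{n}}\sum_{t=1}^n \boldsymbol{x}_{t-1}\varepsilon_t \;\overset{d}{\to}\; N(0,\Omega),
\qquad
\Omega := \mathsf{LRV}\big(\boldsymbol{x}_{t-1}\varepsilon_t\big)
= \sum_{j=-\infty}^{\infty}\mathbb{E}\!\left[\boldsymbol{x}_{t-1}\varepsilon_t\,\varepsilon_{t-j}\boldsymbol{x}_{t-1-j}^{\prime}\right].
\]
This follows from a Phillips--Solo / Beveridge--Nelson type argument (or a mixing CLT) once one verifies that the weak-dependence structure imposed on the model innovations propagates to the product sequence and that a uniform $L^{2+\delta}$ moment bound holds. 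Combining with the previous paragraph and Slutsky's theorem yields $\sqrt{n}(\widehat{\boldsymbol{\beta}}_{ols}-\boldsymbol{\beta})\overset{d}{\to}N(0,V)$ with $V=Q^{-1}\Omega Q^{-1}$.

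Next I would record the consistency of the reported variance estimator: $\widehat{Q}_n\overset{p}{\to}Q$ is already in hand, and for the kernel estimator of the long-run variance I would invoke the standard Newey--West / Andrews HAC consistency result under a bandwidth $M\to\infty$ with $M/n\to0$ together with summable autocovariances, which gives $\widehat{\mathsf{LRV}}(\boldsymbol{x}_{t-1}\varepsilon_t)\overset{p}{\to}\Omega$ and hence $\widehat{\mathsf{Avar}}(\widehat{\boldsymbol{\beta}}_{ols})\overset{p}{\to}V$, which is positive definite. Writing $Z_n:=\sqrt{n}(\widehat{\boldsymbol{\beta}}_{ols}-\boldsymbol{\beta})$ and noting that $W_0 = Z_n^{\prime}\,[\widehat{\mathsf{Avar}}(\widehat{\boldsymbol{\beta}}_{ols})]^{-1}\,Z_n$, the continuous mapping theorem applied to the map $(z,A)\mapsto z^{\prime}A^{-1}z$ — which is continuous at $(z,V)$ because $V$ is nonsingular — gives $W_0\overset{d}{\to}Z^{\prime}V^{-1}Z$ with $Z\sim N(0,V)$. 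Since $V^{-1/2}Z\sim N(0,I_p)$, the limiting quadratic form equals $\chi^2_p$, which is the claim.

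\textbf{The hard part} will be the central limit theorem for the product sequence $\{\boldsymbol{x}_{t-1}\varepsilon_t\}$ together with the HAC consistency in the final step: one must verify that the dependence conditions assumed for the innovations (and for $\boldsymbol{x}_t$, which in the stationary case is an AR with $|\rho|<1$) carry over to the product — typically through a near-epoch-dependence argument and a uniform moment bound — so that both the CLT device and the Andrews-type kernel conditions apply. The remaining ingredients (the least-squares algebra, the ergodic theorem, and the continuous mapping step) are routine.
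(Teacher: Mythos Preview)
Your argument is correct and is precisely the standard route to this result: the exact OLS score identity, a CLT for the stationary weakly dependent product sequence $\boldsymbol{x}_{t-1}\varepsilon_t$ yielding asymptotic normality with sandwich variance $Q^{-1}\Omega Q^{-1}$, HAC consistency for $\widehat{\mathsf{LRV}}$, and then the continuous mapping theorem on the quadratic form. The paper itself does not supply a proof of this corollary --- it is stated as a background result in a section on robust predictability testing and treated as well known --- so there is nothing to compare against; your write-up is essentially what one would expect the omitted argument to be.
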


Our goal is to develop a test that accommodates both stationary and nonstationary $\boldsymbol{x}_t$. Therefore, we propose testing the condition $\mathsf{cov} ( y_t, \boldsymbol{x}_{t-1} ) = 0$ directly instead of testing $\beta_1 = 0$. However, because $\mathsf{cov} ( y_t, \boldsymbol{x}_{t-1} ) = 0$ depends on $t$ when $x_t$ is $I(1)$, we based our test on the limiting covariance, which we define as $\mathsf{lim}_{t \to \infty} \mathsf{cov} ( y_t, \boldsymbol{x}_{t-1} )$. As $\boldsymbol{x}_t$ is $I(1)$ or local-to-unity, $\mathsf{cov} ( y_t, \boldsymbol{x}_{t-1} )$ is no longer a constant and depends on $t$. Moreover, define with 
\begin{align}
\boldsymbol{\lambda}_{ y, \Delta \boldsymbol{x}  } 
:= 
\underset{ t \to \infty }{ \mathsf{lim} } \  \mathsf{cov} \big( y_t, \boldsymbol{x}_{t-1} \big).  
\end{align}
In particular, if $\boldsymbol{x}$ is $I(1)$ and suppose that $\mathsf{lim}_{t \to \infty} \mathsf{cov} ( y_0, \boldsymbol{x}_{0} ) \to 0$, then it holds that
\begin{align}
\boldsymbol{\lambda}_{ y, \Delta \boldsymbol{x}  } 
:= 
\underset{ t \to \infty }{ \mathsf{lim} } \  \mathsf{cov} \big( y_t, \boldsymbol{x}_{t-1} \big)
= \sum_{h=1}^{\infty}  \mathsf{cov} \big( y_t, \Delta \boldsymbol{x}_{t-h} \big)
\end{align}
which is well-defined when $\sum_{ h=1 }^{ \infty } \left| \mathsf{cov} \big( y_t, \Delta \boldsymbol{x}_{it-h} \big) \right|$ for $i \in \left\{ 1,..., p \right\}$.

\end{wrap}

\newpage

\subsection{Appendix of the paper}

\subsubsection{Proof of Lemma 1}

Using the weak law of large numbers in Andrews (1988, Theorem 2) we have that 
\begin{align*}
\underset{ \pi_1 \leq \uptau \leq \pi_0  }{ \text{sup} } \left| \hat{\beta}_1( \uptau) - \beta_1  \right| 
= 
\underset{ \pi_1 \leq \uptau \leq \pi_0  }{ \text{sup} } \left|  \frac{ \displaystyle \sum_{ t =  1 }^{ \floor{ n \uptau } }  y_{t-1}  \epsilon_t  }{ \displaystyle \sum_{ t =  1 }^{ \floor{  n \uptau } }  y_{t-1}^2 } \right|
\leq 
\frac{ n }{ \displaystyle \sum_{ t =  1 }^{ \floor{  n \uptau_1 } } y_{t-1}^2  } \underset{ \uptau_1 \leq \uptau \leq \uptau_0  }{ \text{sup} } \left|  \frac{ \displaystyle  \sum_{ t =  1 }^{ \floor{  n \uptau_1 } } y_{t-1} \epsilon_t   }{ n } \right| = \mathcal{O}_p (1) o_p(1) = o_p(1). 
\end{align*}
Moreover, we have that 

\begin{align}
\hat{\beta}_2( \uptau) - \beta_1 = \frac{ \displaystyle \sum_{ t = \floor{  n \uptau_0 } + 1 }^{ n  } y_{t-1}^2   }{ \displaystyle \sum_{ t = \floor{  n \uptau } + 1 }^{ n  } y_{t-1}^2  } \left( \beta_2 - \beta_1 \right) + \frac{ \displaystyle \sum_{ t = \floor{  n \uptau } + 1 }^{ n } y_{t-1} \epsilon_t  }{ \displaystyle \sum_{ t = \floor{  n \uptau} + 1 }^{ n } y_{t-1}^2  }, 
\end{align}

\begin{proof}
We have that 
\begin{align}
\hat{\beta}_2 ( \uptau ) 
= 
\frac{ \displaystyle \sum_{t= \floor{n \uptau}  + 1}^{n} y_t y_{t-1} }{ \displaystyle \sum_{t= \floor{n \uptau}  + 1}^{n} y^2_{t-1} }  
\equiv 
\frac{ \displaystyle \sum_{t= \floor{n \uptau}  + 1}^{n} \bigg\{ \beta_1 y_{t-1} I \left\{ t \leq k_0 \right\} + \beta_2 y_{t-1} I \left\{ t > k_0 \right\} + \epsilon_t \bigg\} y_{t-1} }{ \displaystyle \sum_{t= \floor{n \uptau}  + 1}^{n} y^2_{t-1} }   
\end{align}

\begin{align}
\hat{\beta}_2 ( \uptau ) 
= 
\beta_1  \frac{ \displaystyle \sum_{ t = \floor{n \uptau_0 }  + 1}^{n} y^2_{t-1} }{ \displaystyle \sum_{t= \floor{n \uptau}  + 1}^{n} y^2_{t-1} }   + 
\frac{ \displaystyle \sum_{t= \floor{n \uptau}  + 1}^{n} \beta_2 y_{t-1} I \left\{ t > k_0 \right\} y_{t-1} }{ \displaystyle \sum_{t= \floor{n \uptau}  + 1}^{n} y^2_{t-1} }  +   \frac{ \displaystyle \sum_{ t = \floor{  n \uptau } + 1 }^{ n  } y_{t-1} \epsilon_t  }{ \displaystyle \sum_{ t = \floor{  n \uptau } + 1 }^{ n  } y_{t-1}^2  },
\end{align}
Therefore, 
\begin{align}
\hat{\beta}_2 ( \uptau ) 
&=
\beta_1  \frac{ \displaystyle \sum_{ t = \floor{n \uptau_0 }  + 1}^{n} y^2_{t-1} }{ \displaystyle \sum_{t= \floor{n \uptau}  + 1}^{n} y^2_{t-1} }  
+
\frac{ \displaystyle \sum_{t= \floor{n \uptau_0 }  + 1}^{n} y^2_{t-1} }{ \displaystyle \sum_{t= \floor{n \uptau}  + 1}^{n} y^2_{t-1} } \beta_2 
+
\frac{ \displaystyle \sum_{ t = \floor{  n \uptau } + 1 }^{ n  } y_{t-1} \epsilon_t  }{ \displaystyle \sum_{ t = \floor{  n \uptau } + 1 }^{ n } y_{t-1}^2  },
\end{align}
and

\newpage 

\begin{align}
\hat{\beta}_2 (\uptau ) - \beta_1
&=
\beta_1  \frac{ \displaystyle \sum_{ t = \floor{n \uptau_0 }  + 1}^{n} y^2_{t-1} }{ \displaystyle \sum_{t= \floor{n \uptau}  + 1}^{n} y^2_{t-1} }  
+
\frac{ \displaystyle \sum_{t= \floor{n \uptau_0 }  + 1}^{n} y^2_{t-1} }{ \displaystyle \sum_{t= \floor{n \uptau}  + 1}^{n} y^2_{t-1} } \beta_2 
+
\frac{ \displaystyle \sum_{ t = \floor{  n \uptau } + 1 }^{ n  } y_{t-1} \epsilon_t  }{ \displaystyle \sum_{ t = \floor{  n \uptau } + 1 }^{ n  } y_{t-1}^2  } - \beta_1 ,
\nonumber
\\
&= 
\beta_1 \left( \frac{ \displaystyle \sum_{ t = \floor{n \uptau_0 }  + 1}^{n} y^2_{t-1} }{ \displaystyle \sum_{t= \floor{n \uptau}  + 1}^{n} y^2_{t-1} }    - 1 \right)
+
\frac{ \displaystyle \sum_{t= \floor{n \uptau_0 }  + 1}^{n} y^2_{t-1} }{ \displaystyle \sum_{t= \floor{n \uptau}  + 1}^{n} y^2_{t-1} } \beta_2 
+
\frac{ \displaystyle \sum_{ t = \floor{  n \uptau } + 1 }^{ n  } y_{t-1} \epsilon_t  }{ \displaystyle \sum_{ t = \floor{  n \uptau } + 1 }^{ n } y_{t-1}^2  }
\nonumber
\\
&=
\frac{ \displaystyle \sum_{ t = \floor{n \uptau_0 }  + 1}^{n} y^2_{t-1} }{ \displaystyle \sum_{t= \floor{n \uptau}  + 1}^{n} y^2_{t-1} } \big( \beta_2 - \beta_1 \big) + \frac{ \displaystyle \sum_{ t = \floor{  n \uptau } + 1 }^{ n  } y_{t-1} \epsilon_t  }{ \displaystyle \sum_{ t = \floor{  n \uptau } + 1 }^{ n  } y_{t-1}^2  }
\end{align}
\end{proof}
Note that similarly, we also have that 
\begin{align}
\hat{\beta}_1 ( \uptau ) - \beta_1
=
\frac{ \displaystyle \sum_{ t = \floor{n \uptau_0 }  + 1}^{ \floor{ n \uptau } } y^2_{t-1} }{ \displaystyle \sum_{ t = 1}^{ \floor{n \uptau} } y^2_{t-1} } \big( \beta_2 - \beta_1 \big) + \frac{ \displaystyle \sum_{ t = 1 }^{ \floor{  n \uptau } } y_{t-1} \epsilon_t  }{ \displaystyle \sum_{ t = 1 }^{ \floor{  n \uptau } } y_{t-1}^2  }, 
\end{align}

Next consider adding and subtracting the follow term on the expressions above
\begin{align}
\left( \beta_2 - \beta_1 \right) \frac{ \displaystyle n \frac{(1 - \uptau_0) \sigma^2}{1 - \beta_2^2} }{ \displaystyle \sum_{ t = \floor{  n \uptau } + 1 }^{ n } y_{t-1}^2 },
\end{align}
We obtain 
\begin{align}
\left[ \hat{\beta}_2 ( \uptau ) - \beta_1 \right] 
+ 
\left( \beta_2 - \beta_1 \right) \frac{ \displaystyle n \frac{(1 - \uptau_0) \sigma^2}{1 - \beta_2^2} }{ \displaystyle \sum_{ t = \floor{  n \uptau } + 1 }^{ n } y_{t-1}^2 } 
- 
\left( \beta_2 - \beta_1 \right) \frac{ \displaystyle n \frac{(1 - \uptau_0) \sigma^2}{1 - \beta_2^2} }{ \displaystyle \sum_{ t = \floor{  n \uptau } + 1 }^{ n } y_{t-1}^2 }
\end{align}

\newpage

and
\begin{align*}
RHS 
&\equiv 
\frac{ \displaystyle \sum_{ t = \floor{n \uptau_0 }  + 1}^{n} y^2_{t-1} }{ \displaystyle \sum_{t= \floor{n \uptau}  + 1}^{n} y^2_{t-1} } \big( \beta_2 - \beta_1 \big)
+ 
\frac{ \displaystyle \sum_{ t = \floor{  n \uptau } + 1 }^{ n  } y_{t-1} \epsilon_t  }{ \displaystyle \sum_{ t = \floor{  n \uptau } + 1 }^{ n  } y_{t-1}^2  }
+ 
\left( \beta_2 - \beta_1 \right) \frac{ \displaystyle n \frac{(1 - \uptau_0) \sigma^2}{1 - \beta_2^2} }{ \displaystyle \sum_{ t = \floor{  n \uptau } + 1 }^{ n } y_{t-1}^2 } 
- 
\left( \beta_2 - \beta_1 \right) \frac{ \displaystyle n \frac{(1 - \uptau_0) \sigma^2}{1 - \beta_2^2} }{ \displaystyle \sum_{ t = \floor{  n \uptau } + 1 }^{ n } y_{t-1}^2 }
\\
\\
&=
\frac{ \displaystyle \sum_{ t = \floor{  n \uptau } + 1 }^{ n  } y_{t-1} \epsilon_t  }{ \displaystyle \sum_{ t = \floor{ n\pi } + 1 }^{ n  } y_{t-1}^2  } 
+ 
\big( \beta_2 - \beta_1 \big) \left( \frac{ \displaystyle \sum_{ t = \floor{n \uptau_0 }  + 1}^{n} y^2_{t-1} }{ \displaystyle \sum_{t= \floor{n \uptau}  + 1}^{n} y^2_{t-1} } -  \frac{ \displaystyle n \frac{(1 - \uptau_0) \sigma^2}{1 - \beta_2^2} }{ \displaystyle \sum_{ t = \floor{ n \uptau } + 1 }^{ n } y_{t-1}^2 } \right) + \left( \beta_2 - \beta_1 \right) \frac{ \displaystyle n \frac{(1 - \uptau_0) \sigma^2}{1 - \beta_2^2} }{ \displaystyle \sum_{ t = \floor{  n \uptau } + 1 }^{ n} y_{t-1}^2 }
\end{align*}

\subsubsection{Derivations of Equations 6 and 7}

The asymptotic behaviour of the residual sum of squares is as following 
\begin{align}
\label{sup1}
&\underset{ \uptau_1 \leq \uptau \leq \uptau_0 }{ \text{sup} } \left| \frac{1}{n} \ RSS_n (\pi) - \sigma^2 -  \frac{ ( \uptau_0 - \uptau ) ( 1 - \uptau_0 ) ( \beta_2 - \beta_1 )^2 \sigma^2 }{ ( \uptau_0 - \uptau ) ( 1 - \beta_2^2 ) + (1 - \uptau_0 ) ( 1 - \beta_1^2 )  }  \right| = o_p(1), 
\\
\nonumber
\\
\label{sup2}
&\underset{ \uptau_0 \leq \uptau \leq \uptau_2 }{ \text{sup} } \left| \frac{1}{n} \ RSS_n (\pi) - \sigma^2 - \frac{ \uptau_0 ( \uptau - \uptau_0 ) ( \beta_2 - \beta_1 )^2 \sigma^2 }{ \uptau_0 ( 1 - \beta_2^2 ) + ( \uptau - \uptau_0 ) ( 1 - \beta_1^2 )  }  \right| = o_p(1), 
\end{align}

\paragraph{Proof of Expression \eqref{sup1}:}
For $\uptau_1 \leq \uptau \leq \uptau_0$, we can write the residual sum of squares as: 
\begin{align}
RSS_n ( \uptau) 
&= 
\sum_{t=1}^{ \floor{ n \uptau } } \bigg( \epsilon_t - \left( \hat{\beta}_1 ( \uptau ) - \beta_1 \right) y_{t-1} \bigg)^2 
+ 
\sum_{t = \floor{ n \uptau } + 1 }^{ \floor{ n \uptau_0 } } \bigg( \epsilon_t - \left( \hat{\beta}_2 ( \uptau ) - \beta_1 \right) y_{t-1} \bigg)^2
\nonumber
\\
&+
\sum_{t = \floor{ n \uptau_0 } + 1 }^{ n } \bigg( \epsilon_t - \left( \hat{\beta}_2 ( \uptau ) - \beta_2 \right) y_{t-1} \bigg)^2
\nonumber
\\
\nonumber
\\
&=
\sum_{t=1}^{ n } \epsilon_t^2 
- \frac{ \displaystyle  \left( \sum_{t = 1 }^{  \floor{ n \uptau } } y_{t-1} \epsilon_t \right)^2 }{ \displaystyle \sum_{t = 1 }^{  \floor{ n \uptau } }  y_{t-1}^2 } 
- 2 \left( \hat{\beta}_2 (\uptau ) - \beta_1 \right) \sum_{t = \floor{ n \uptau } + 1 }^{ \floor{ n \uptau_0 } } y_{t-1} \epsilon_t
\nonumber
\\
&+
\left( \hat{\beta}_2 ( \uptau ) - \beta_1 \right)^2   \sum_{t = \floor{ n \uptau } + 1 }^{ \floor{ n \uptau_0 } } y_{t-1}^2 - 2 \left( \hat{\beta}_2 (\uptau) - \beta_2 \right) \sum_{t = \floor{ n \uptau_0 } + 1 }^{ \floor{ n \uptau } } y_{t-1} \epsilon_t
\nonumber
\\
&+ 
\left( \hat{\beta}_2 ( \uptau ) - \beta_2 \right)^2   \sum_{t = \floor{ n \uptau_0 } + 1 }^{ n } y_{t-1}^2.
\end{align}

\newpage

\paragraph{Proof of Expression \eqref{sup2}:} For $\uptau_0 < \uptau \leq \uptau_2$ , we can write the residual sum of squares as 
\begin{align}
RSS_n (\uptau) 
&= 
\sum_{t=1}^{ \floor{ n \uptau_0 } } \bigg( \epsilon_t - \left( \hat{\beta}_1 (\uptau ) - \beta_1 \right) y_{t-1} \bigg)^2 
+ 
\sum_{t = \floor{ n \uptau_0 } + 1 }^{ \floor{ n \uptau } } \bigg( \epsilon_t - \left( \hat{\beta}_1 (\uptau ) - \beta_2 \right) y_{t-1} \bigg)^2
\nonumber
\\
\nonumber
\\
&+
\sum_{t = \floor{ n \uptau } + 1 }^{ n } \bigg( \epsilon_t - \left( \hat{\beta}_2 ( \uptau ) - \beta_2 \right) y_{t-1} \bigg)^2
\nonumber
\\
\nonumber
\\
&=
\sum_{t=1}^{ n } \epsilon_t^2 - 2 \left( \hat{\beta}_1 ( \uptau ) - \beta_1 \right) \sum_{t=1}^{ \floor{ n \uptau_0 } } y_{t-1} \epsilon_t
+ \left( \hat{\beta}_1 ( \uptau ) - \beta_1 \right)^2  \sum_{t=1}^{ \floor{ n \uptau_0 } } y_{t-1}^2 
\nonumber
\\
\nonumber
\\
&- 
2 \left( \hat{\beta}_1 ( \uptau ) - \beta_2 \right) \sum_{t = \floor{ n \uptau_0 } + 1 }^{ \floor{ n \uptau } } y_{t-1} \epsilon_t + \left( \hat{\beta}_1 ( \uptau ) - \beta_2 \right)^2 \sum_{t = \floor{ n \uptau_0 } + 1 }^{ \floor{ n \uptau } } y_{t-1}^2 
\nonumber
\\
\nonumber
\\
&- 
2 \left( \hat{\beta}_2 ( \uptau ) - \beta_2 \right) \sum_{t = \floor{ n \uptau } + 1 }^{ n }  y_{t-1} \epsilon_t + \left( \hat{\beta}_2 ( \uptau ) - \beta_2 \right)^2 \sum_{t = \floor{ n \uptau } + 1 }^{ n } y_{t-1}^2. 
\end{align}
After simplification of the last two terms we obtain:
\begin{align}
RSS_n ( \uptau) 
&= 
\sum_{t=1}^{ n } \epsilon_t^2 - 2 \left( \hat{\beta}_1 ( \uptau  ) - \beta_1 \right) \sum_{t=1}^{ \floor{ n \uptau_0 } } y_{t-1} \epsilon_t
+ \left( \hat{\beta}_1 ( \uptau ) - \beta_1 \right)^2  \sum_{t=1}^{ \floor{ n \uptau_0 } } y_{t-1}^2 
\nonumber
\\
\nonumber
\\
&- 
2 \left( \hat{\beta}_1 ( \uptau ) - \beta_2 \right) \sum_{t = \floor{ n \uptau_0 } + 1 }^{ \floor{ n \uptau } } y_{t-1} \epsilon_t + \left( \hat{\beta}_1 ( \uptau ) - \beta_2 \right)^2 \sum_{t = \floor{ n \uptau_0 } + 1 }^{ \floor{ n \uptau } } y_{t-1}^2 
\nonumber
\\
\nonumber
\\
&- \frac{ \displaystyle  \left( \sum_{t = \floor{ n \uptau } + 1 }^{ n } y_{t-1} \epsilon_t \right)^2 }{ \displaystyle \sum_{t = \floor{ n \uptau } + 1 }^{ n }  y_{t-1}^2 }. 
\end{align}

\newpage 

\begin{wrap}

\section{R Code}

\begin{small}

\subsection{Simulate Data Step}

\begin{verbatim}
#########################################################################
### Function 1: Simulate data pair under the null hypothesis
#########################################################################
    
function_simulate_null <- function(N = N, beta0=beta0, beta1=beta1, c1=c1, rho=rho)
{# begin of function
  
  N  <- N
  p  <- 1
  
  beta0 <- beta0
  beta1 <- beta1
 
  c1  <- c1 
  rho <- rho

  mu.vector <- matrix(0, nrow = p+1, ncol = 1 ) 
  Sigma     <- matrix( 0, nrow = (p+1), ncol = (p+1) )
  
  sigma_uv  <- rho*sqrt( 0.25 )*sqrt( 0.75 )
  
  Sigma[1,1] <- 0.25
  Sigma[1,2] <- sigma_uv
  Sigma[2,1] <- sigma_uv
  Sigma[2,2] <- 0.75
  
  # generate random error sequence from Multivariate Normal Distribution
  innov.e <- rmvnorm( n = N, mean = mu.vector, sigma = Sigma ) 
  innov.u <- as.matrix( innov.e[ ,1] )
  innov.v <- as.matrix( innov.e[ ,2] )
  
  rn <- ( 1 - c1/N )
  x  <- matrix(0,N,p)
  
    for(t in 2:N) 
    {  
      x[t,1] <- rn*x[(t-1),1] + innov.v[t,1]
    }
  
  x.t     <- as.matrix( x[2:N, ] )
  x.lag   <- as.matrix( x[1:(N-1), ] )
  beta    <- as.matrix( c( beta0, beta1 ) )
  innov.u <- as.matrix( innov.u[2:N,1] )
  
  y.t <- beta[1,1] + beta[2,1]*as.matrix(x.lag[ ,1]) + innov.u
  
  simulated.data <- structure( list( y.t = y.t, x.t = x.t, x.lag = x.lag ) ) 
  return( simulated.data )
  
}# end of function
#########################################################################
\end{verbatim}

\end{small}

\newpage

\subsection{Wald IVX Statistic Estimation Step}

\begin{verbatim}
#########################################################################
### Function 2: Estimate Wald IVX statistic 
#########################################################################

sup_Wald_IVX_function <- function(Yt=Yt_sim, Xt=Xt_sim, Xlag=Xlag_sim, 
delta =delta, cz=cz, pi0 = pi0)
{# begin of function 
  
  # Insert data
  Yt   <- Yt_sim
  Xt   <- Xt_sim
  Xlag <- Xlag_sim
  
  Yt   <- as.matrix( Yt )
  Xt   <- as.matrix( Xt )
  Xlag <- as.matrix( Xlag )
  
  delta <- delta 
  cz    <- cz
  pi0   <- pi0
  
  ####################################################################
  ## IVX Estimation Step ## 
  ####################################################################
  # length size of the time series
  n <- NROW(Xlag)
  # number of predictors
  p <- NCOL(Xlag)
  h <- 1

  # Rz assigns a common mildly integrated root to all regressors 
  Rz     <- ( 1 - cz / ( n^delta ) ) 
  diffx  <- as.matrix( Xt - Xlag)
  z      <- matrix(0, n, p)
  z[1, ] <- diffx[1, ]
  
  for (i in 2:n) 
  {
    z[i, ] <- Rz * z[i - 1, ] + diffx[i, ]
  }
  
  Z  <- rbind(matrix(0, 1, p), z[1:(n - 1),  , drop = F])
  ## Use the above estimated Xt, Zt and Yt to construct the Wald IVX statistic
  sup_Wald_IVX <- 0
  
  # Step 1: Estimate OLS regression
  n <- NROW(Xlag)
  # number of predictors
  p <- NCOL(Xlag)
  
  # Step 2: Estimate the Wald OLS statistic
  pi_seq  <- as.matrix( seq(pi0, (1 - pi0), by = 0.01) )
  dim     <- NROW( pi_seq )
  
  ones            <- matrix( 1, nrow = n, ncol = 1 )
  Wald_IVX_vector <- matrix( 0, nrow = dim, ncol = 1 )
  k_seq           <- matrix( 0, nrow = dim, ncol = 1 )
  
  for ( j in 1:dim )
  {
    k_seq[ j,1 ] <- floor( pi_seq[j ,1] * n  )
  }
  
  for (s in 1:dim)
  {# begin of for-loop
    
    k <- k_seq[ s,1 ]
    
    #####################################
    # Step 1: Obtain the estimate of the variance of OLS regression 
    #####################################
    
    Xlag1 <- as.matrix( Xlag )
    for (i in (k+1): n)
    {
      Xlag1[i, ] <- 0 
    }
    
    Xlag1_tilde <- cbind( ones, Xlag1  )
    for (i in (k+1): n)
    {
      Xlag1_tilde[i, ] <- 0 
    }
    
    Xlag2 <- as.matrix( Xlag )
    for (i in 1: k)
    {
      Xlag2[i, ] <- 0 
    }
    
    Xlag2_tilde <- cbind( ones, Xlag2  )
    for (i in 1: k)
    {
      Xlag2_tilde[i, ] <- 0 
    }
    
    regressors <- cbind( Xlag1_tilde, Xlag2_tilde )
    regressors <- as.matrix( regressors )
    
    model_OLS  <- lm( Yt ~ regressors - 1 )
    # Bols       <- coefficients( model_OLS )
    # Bols       <- as.matrix( as.vector( Bols  ) )
    
    epsilon_hat     <- matrix( residuals( model_OLS ) )
    cov_epsilon_hat <- as.numeric( crossprod( epsilon_hat ) / n )
    
    #####################################
    # Step 2: Obtain the estimates of the IVX estimators
    #####################################
    # Estimate the Z1 and Z2 matrices corresponding 
    # to before and after the structural break
    Z1 <- as.matrix( Z )
    for (i in (k+1): n)
    {
      Z1[i, ] <- 0 
    }
    
    Z1_tilde <- cbind( ones, Z1  )
    for (i in (k+1): n)
    {
      Z1_tilde[i, ] <- 0 
    }
    
    Z2 <- as.matrix( Z )
    for (i in 1: k)
    {
      Z2[i, ] <- 0 
    }
    
    Z2_tilde <- cbind( ones, Z2  )
    for (i in 1: k)
    {
      Z2_tilde[i, ] <- 0 
    }
    
    B1_hat_ivx    <- ( t( Yt ) %*% (Z1_tilde) ) 
    %*% pracma::pinv( t(Xlag1_tilde) %*% Z1_tilde )
   
    B2_hat_ivx    <- ( t( Yt ) %*% (Z2_tilde) ) 
    %*% pracma::pinv( t(Xlag2_tilde) %*% Z2_tilde )
    
    Bivx_distance <- as.matrix( B1_hat_ivx  - B2_hat_ivx )
    
    Q_matrix_part1 <- ( pracma::pinv( t(Z1_tilde)%*%Xlag1_tilde )
    %*%( t(Z1_tilde)%*%Z1_tilde ) %*%pracma::pinv( t(Z1_tilde)%*%Xlag1_tilde ) )
    
    Q_matrix_part2 <- ( pracma::pinv( t(Z2_tilde)%*%Xlag2_tilde )
    %*%( t(Z2_tilde)%*%Z2_tilde )%*%pracma::pinv( t(Z2_tilde)%*%Xlag2_tilde ) )
    
    Q_matrix           <- ( Q_matrix_part1 + Q_matrix_part2 )
    Wald_IVX_statistic <- as.numeric( ( 1 / cov_epsilon_hat ) * ( Bivx_distance )
    %*%(pracma::pinv( Q_matrix )) %*%t( Bivx_distance ) )
    
    Wald_IVX_vector[s,1] <- Wald_IVX_statistic 
    
  }# end of for-loop
  # Obtain the sup-Wald statistic
  sup_Wald_IVX <- max( Wald_IVX_vector  )
  
  return( sup_Wald_IVX )  
}# end of function
#####################################################################



\end{verbatim}

\end{small}

\end{wrap}

\newpage 
   
\bibliographystyle{apalike}
\bibliography{myreferences1}

\end{document}